\newcommand{\restate}[2]{\newtheorem*{restate-#1}{\autoref{#1}}\begin{restate-#1}#2\end{restate-#1}}
\newcommand{\dist}{\operatorname{dist}}
\def\weight{\mathbf{w}}
\title{Distributed Construction of Lightweight Spanners for Unit Ball Graphs}
\author{David Eppstein}
    {Department of Computer Science, University of California, Irvine}
    {eppstein@uci.edu}
    {}
    {}
\author{Hadi Khodabandeh}
    {Department of Computer Science, University of California, Irvine}
    {khodabah@uci.edu}
    {https://orcid.org/0000-0003-3850-6739}
    {}
\authorrunning{D. Eppstein and H. Khodabandeh}
\authorrunning{Eppstein and Khodabandeh}
\keywords{spanners, doubling metrics, distributed, topology control}
\begin{document}

\maketitle

\begin{abstract}
Resolving an open question from 2006 \cite{damian2006local}, we prove the existence of light-weight bounded-degree spanners for unit ball graphs in the metrics of bounded doubling dimension, and we design a simple $\mathcal{O}(\log^*n)$-round distributed algorithm in the LOCAL model of computation, that given a unit ball graph $G$ with $n$ vertices and a positive constant $\epsilon < 1$ finds a $(1+\epsilon)$-spanner with constant bounds on its maximum degree and its lightness using only 2-hop neighborhood information.
This immediately improves the best prior lightness bound, the algorithm of Damian, Pandit, and Pemmaraju \cite{damian2006distributed}, which runs in $\mathcal{O}(\log^*n)$ rounds in the LOCAL model, but has a $\mathcal{O}(\log \Delta)$ bound on its lightness, where $\Delta$ is the ratio of the length of the longest edge to the length of the shortest edge in the unit ball graph. Next, we adjust our algorithm to work in the CONGEST model, without changing its round complexity, hence proposing the first spanner construction for unit ball graphs in the CONGEST model of computation. We further study the problem in the two dimensional Euclidean plane and we provide a construction with similar properties that has a constant average number of edge intersections per node. Lastly, we provide experimental results that confirm our theoretical bounds, and show an efficient performance from our distributed algorithm compared to the best known centralized construction.
\end{abstract}



\maketitle


\section{Introduction}
Given a collection of points $V$ in a metric space with doubling dimension $d$, the weighted \emph{unit ball graph (UBG)} on $V$ is defined as a weighted graph $G(V, E)$ where two points $u, v\in V$ are connected if and only if their metric distance $\lVert uv\rVert\leq 1$. The weight of the edge $uv$ of the UBG is $\lVert uv\rVert$ if the edge exists. Unit ball graphs in the Euclidean plane are called unit disk graphs (UDGs) and are frequently used to model ad-hoc wireless communication networks, where every node in the network has an effective communication range $R$, and two nodes are able to communicate if they are within a distance $R$ of each other.

\emph{Spanners} are sub-graphs of the input graph whose pair-wise distances approximate distances in the input graphs, while having fewer edges than complete graphs. Given a weighted graph $G$, a $t$-spanner on $G$ can be defined as a graph $S$ that has $V(G)$ as its set of vertices, while $E(S)\subseteq E(G)$ and the following inequality is satisfied for every pair of vertices $u,v\in V(G)$:
$$\dist_S(u,v)\leq t\cdot \dist_G(u,v)$$
where $\dist_S(u,v)$ (or $\dist_G(u,v)$) is the length of the shortest path between $u$ and $v$ using the edges in $S$ (or $G$, respectively). We call this inequality the \emph{bounded stretch property}. Because of this inequality, $t$-spanners provide a $t$-approximation for the pairwise distances between the vertices in $G$. The parameter $t>1$ is called the \emph{stretch factor} or \emph{spanning ratio} of the spanner and determines how accurate the approximate distances are; spanners having smaller stretch factors are more accurate.

Spanners can be specifically defined on any graph coming from a metric space, where a heavy or undesirable network is given and finding a sparse and light-weight spanner and working with it instead of the actual network makes the computation easier and faster (Figure \ref{fig:greedy}). In particular, lightweight spanners have been gained extreme attention in the geometric setting and in the metrics with bounded doubling dimension \cite{gottlieb2015light,borradaile2019greedy,bhore2022online}, which is a generalization of the former. The problem of finding sparse light-weight spanners in these spaces has appeared in many areas of computer science, including communication network design and distributed computing. These subgraphs have few edges and are easy to construct,
leading them to appear in a wide range of applications since they were introduced \cite{chew1989planar,keil1988approximating,peleg1989graph}.  In wireless ad hoc networks $t$-spanners are used to design sparse networks with guaranteed connectivity and guaranteed bounds on routing length \cite{alzoubi2003geometric}. In distributed computing spanners provide communication-efficiency and time-efficiency through the sparsity and the bounded stretch property \cite{baswana2010additive,elkin20041e,awerbuch1998near,elkin2006efficient}. There has also been extensive use of geometric spanners in the analysis of road networks \cite{eppstein1999spanning,abam2009region,chechik2010fault}. In robotics, geometric spanners helped motion planners to design near-optimal plans on a sparse and light subgraph of the actual network \cite{dobson2014sparse,marble2013asymptotically,das1997visibility}. Spanners have many other applications including computing almost shortest paths \cite{elkin2005computing,cohen1998fast,roditty2004dynamic,feigenbaum2005graph}, and overlay networks \cite{braynard2002opus,wang2005network,jia2003local}.

\begin{figure}[ht]
     \centering
     \begin{subfigure}[b]{0.23\textwidth}
         \centering
         \includegraphics[width=\textwidth]{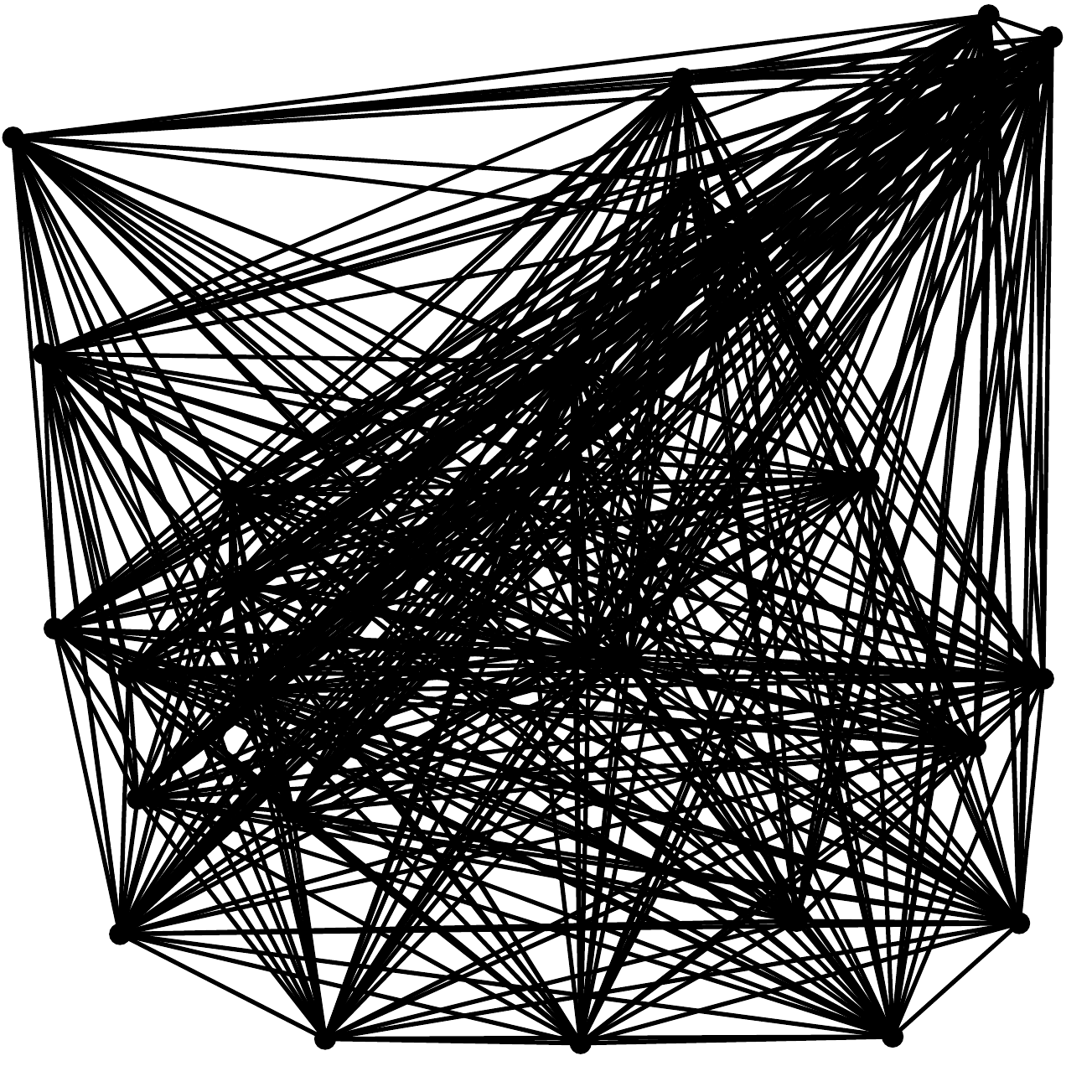}
         \caption{Complete graph}
     \end{subfigure}
     \hfill
     \begin{subfigure}[b]{0.23\textwidth}
         \centering
         \includegraphics[width=\textwidth]{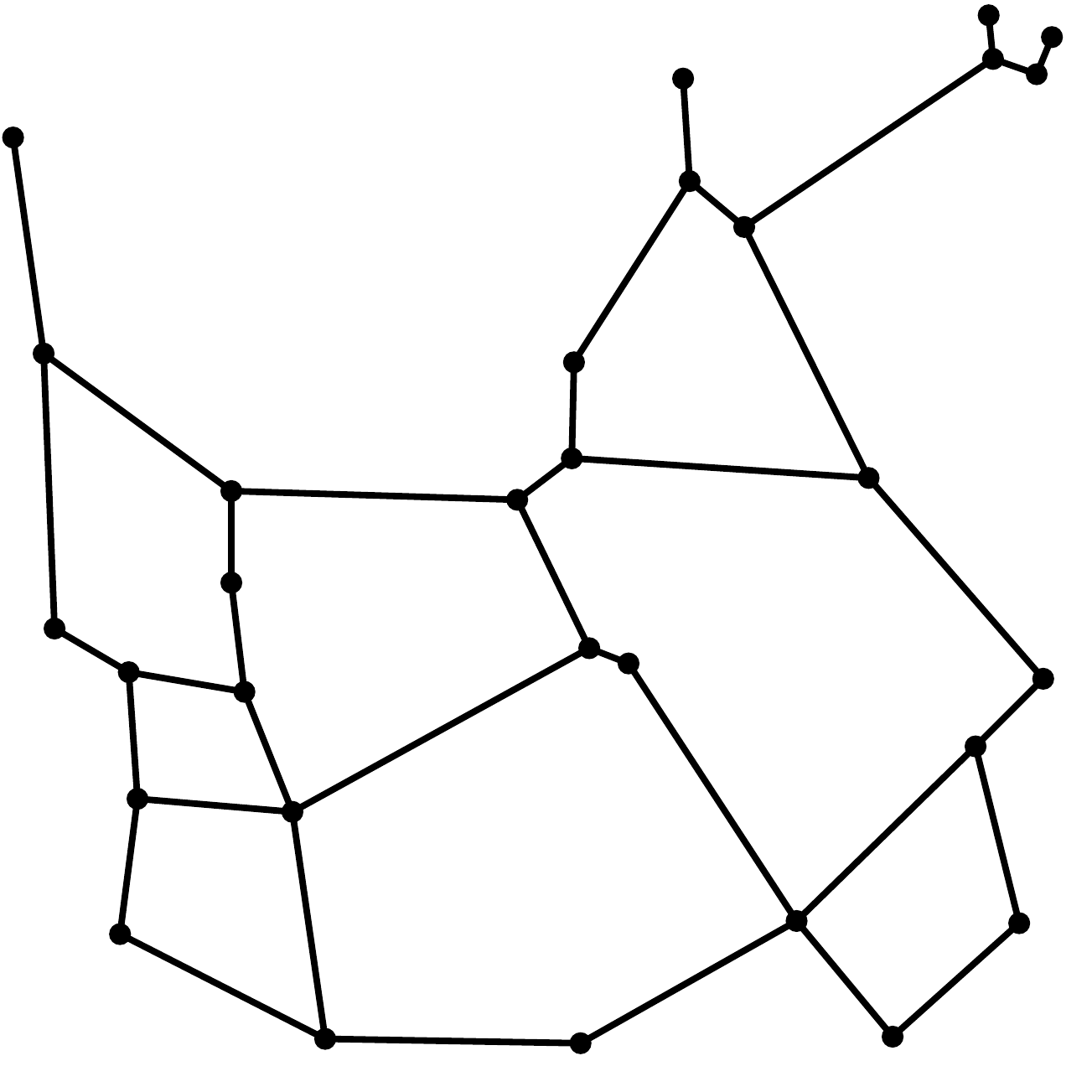}
         \caption{$2$-spanner}
     \end{subfigure}
     \hfill
     \begin{subfigure}[b]{0.23\textwidth}
         \centering
         \includegraphics[width=\textwidth]{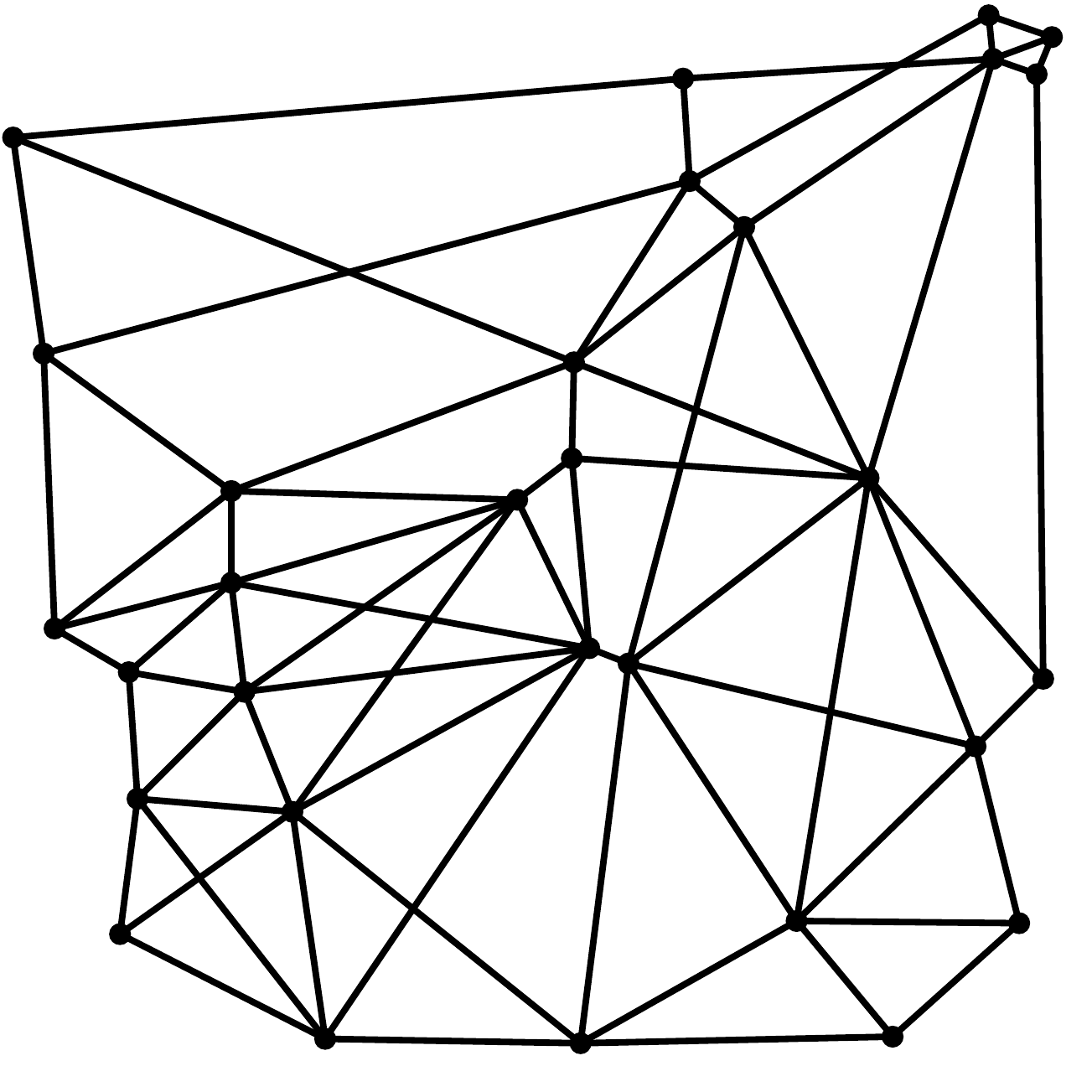}
         \caption{$1.2$-spanner}
     \end{subfigure}
     \begin{subfigure}[b]{0.23\textwidth}
         \centering
         \includegraphics[width=\textwidth]{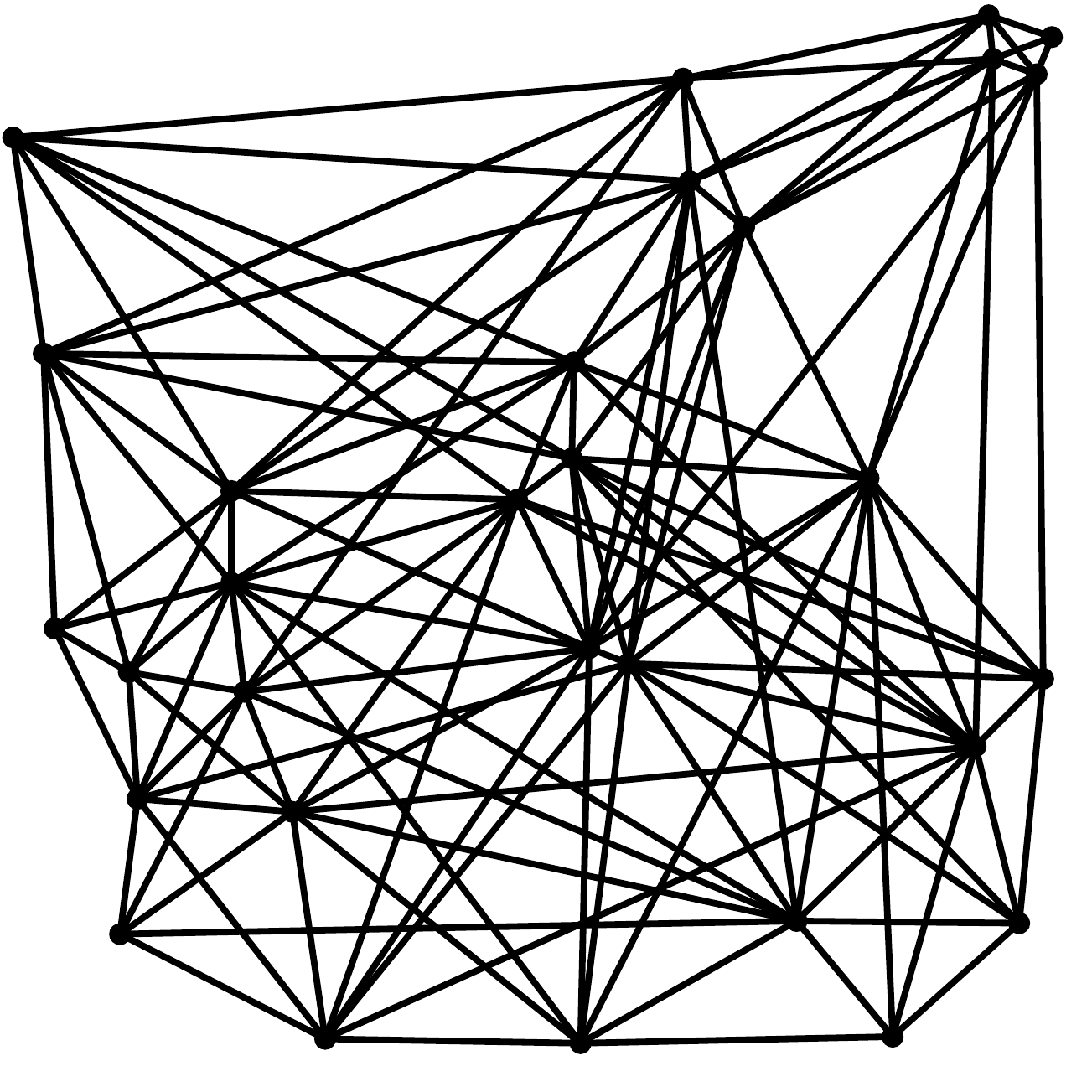}
         \caption{$1.05$-spanner}
     \end{subfigure}
    \caption{A comparison of the complete graph on $30$ random points on the plane with spanners of stretch $2$, $1.2$, and $1.05$ on the same point set.}
    \label{fig:greedy}
\end{figure}

The special case where the underlying graph is a unit ball graph is motivated by the application of unit ball graphs in modeling wireless and ad-hoc networks, where the communication of the nodes are limited by their physical distances. The problem of finding sparse lightweight spanners for unit ball graphs in this settings translates into efficient topology control algorithms. Thus the necessity of a connected and energy-efficient topology for high-level routing protocols led researchers to develop many spanning algorithms for ad-hoc networks and in particular, UDGs. But the decentralized nature of ad-hoc networks demands that these algorithms be local instead of centralized. In these applications, it is important that the resulting topology is connected, has a low weight, and has a bounded degree, implying also that the number of edges is linear in the number of vertices.

Several known \emph{proximity graphs} have been studied for this purpose, including the relative neighborhood graph (RNG), Gabriel graph (GG), Delaunay graph (DG), and Yao graph (YG). It is well-known that these proximity graphs are sparse and they can be calculated locally, using only the information from a node's neighborhood. But further analysis shows that they have poor bounds on at least one of the important criteria: maximum vertex degree, total weight, and stretch-factor \cite{li2001power}.

Researchers have modified these constructions to fulfill the requirements. Li, Wan, and Wang \cite{li2001power} introduced a modified version of the Yao graph to resolve the issue of unbounded in-degree while preserving a small stretch-factor, but they left as an open question whether there exists a construction whose total weight is also bounded by a constant factor of the weight of the minimum spanning tree. The localized Delaunay triangulation (LDT) \cite{li2003localized} and local minimum spanning tree (LMST) \cite{li2005design} were two other efforts in this way which failed to bound the total weight of the spanner. Hence bounding the weight became the main challenge in designing efficient spanners. The commonly used measure for the weight of the spanners is \emph{lightness}, which is defined as the weight of the spanner divided by the weight of the minimum spanning tree.

In the distributed setting in particular, Gao, Guibas, Hershberger, Zhang, and Zhu \cite{gao2005geometric} introduced restricted Delaunay graph (RDG), a planar distributed spanner construction for unit disk graphs in the two dimensional Euclidean plane that possessed a constant stretch-factor, leaving the weight of the spanner unstudied. Later Kanj, Perkovi{\'c}, and Xia  \cite{kanj2008computing} presented the first local spanner construction for unit disk graphs in the two dimensional Euclidean plane, which also was planar and had constant bounds on its stretch-factor, maximum degree, and lightness. Their construction was also based on the Delaunay triangulation of the point set and required information from $k$-th hop neighbors of every node, for some constant $k$ that depended on the input parameters.

In 2006, Damian, Pandit, and Pemmaraju \cite{damian2006local} designed a distributed construction for $(1+\epsilon)$-spanners of the UBGs lying in \emph{$d$-dimensional Euclidean space}. Their algorithm ran in $\mathcal{O}(\log^* n)$ rounds of communication and produced a $(1+\epsilon)$-spanner with constant bounds on its maximum degree and lightness. They used the so-called \emph{leapfrog property} to prove the constant bound on the lightness of the spanner, which does not hold for the spaces of bounded doubling dimension in general. Instead, they showed in another work \cite{damian2006distributed} that the weight of their spanner in the spaces of bounded doubling dimension is bounded by a factor $\mathcal{O}(\log \Delta)$ of the weight of the minimum spanning tree, where $\Delta$ is the ratio of the length of the longest edge in the unit ball graph divided by the length of its shortest edge. Besides these, their algorithm requires the knowledge of $\mathcal{O}(\frac{1}{\alpha-1})$-hop neighborhood of the nodes, which is costly in the CONGEST model of distributed computing, the more accepted and practical model than the LOCAL model of computation.

In the 3D Euclidean space, Jenkins, Kanj, Xia, and Zhang \cite{jenkins2012local} designed the first localized bounded-degree $(1+\epsilon)$-spanner for unit ball graphs. They also presented a lightweight construction which possessed constant bounds on its stretch-factor and maximum degree. These algorithms again required $k$-th hop neighborhood information for every node, for a constant $k$ that depended on the input parameters. Although these constructions were local, i.e. they ran in constant rounds of communication, they relied heavily on Euclidean transformations which made them inapplicable for other metric spaces.

Finally, Elkin, Filtser, and Neiman \cite{elkin2020distributed} studied the topic of lightweight spanners for general graphs and doubling graphs in the CONGEST model of distribution. For general graphs, they presented $(2k-1)\cdot(1+\epsilon)$-spanners with lightness $\mathcal{O}(k\cdot n^{1/k})$ in $\tilde{\mathcal{O}}(n^{0.5+1/(4k+2)} + D)$ rounds, where $n$ is the number of vertices and $D$ is the hop-diameter of the graph. For doubling graphs, they presented a $(1+\epsilon)$-spanner with lightness $\epsilon^{-\mathcal{O}(1)}\log n$ in $(\sqrt{n}+D)\cdot n^{o(1)}$ rounds of communication. Although these constructions are more general than the constructions of \cite{damian2006distributed} and they perform in a more restricted model (CONGEST), they do not imply a superior result in the specific case of unit ball graphs in doubling metrics.

Apart from being a generalization of the Euclidean space, the importance of the spaces of bounded doubling dimension comes from the fact that a small perturbation in the pairwise distances does not affect the doubling dimension of the point set by much, while it can change their Euclidean dimension significantly, or the resulting distances might not even be embeddable in Euclidean metrics at all \cite{chan2009small}. This makes these metrics of bounded doubling dimension to be more applicable in real-world scenarios. On the other hand, geometric arguments are considered as a strong tool for proofs of sparsity and lightness bounds in Euclidean spaces, but in doubling spaces the only available tool besides metric properties, is the packing argument which is directly followed from the definition of the doubling dimension. Therefore, the sparsity and lightness results are more difficult to achieve in the spaces of bounded doubling dimension.

Since the work of Damian, Pandit, and Pemmaraju \cite{damian2006distributed} in 2006, it has remained open whether UBGs in the spaces of bounded doubling dimension possess lightweight bounded-degree $(1+\epsilon)$-spanners and whether they can be found efficiently in a distributed model of computation. On the other hand, the construction of \cite{damian2006distributed} requires complete information about the nodes in $\mathcal{O}(\frac{1}{\alpha-1})$ hops away, for some constant $\alpha$. Acquiring this information is costly in the CONGEST model of computation, which is a more accepted model in distributed computing. Therefore, another open question arising from this line of work is to study the round complexity of the aforementioned problem in the CONGEST model. In this paper, we resolve both of these long-standing open questions by presenting centralized and distributed algorithms, both in the LOCAL, and the CONGEST model, for the purpose of finding such spanners.


\def\thmCentralized{Given a weighted unit ball graph $G$ in a metric of bounded doubling dimension and a constant $\epsilon > 0$, the spanner returned by \Call{Centralized-Spanner}{$G$,$\epsilon$} is a $(1+\epsilon)$-spanner of $G$ and has constant bounds on its lightness and maximum degree. These constant bounds only depend on $\epsilon$ and the doubling dimension.}

\def\thmDistributed{Given a weighted unit ball graph $G$ with $n$ vertices in a metric of bounded doubling dimension and a constant $\epsilon > 0$, the algorithm \Call{Distributed-Spanner}{$G$,$\epsilon$} runs in $\mathcal{O}(\log^* n)$ rounds of communication in the LOCAL model of computation, and returns a $(1+\epsilon)$-spanner of $G$ that has constant bounds on its lightness and maximum degree. These constant bounds only depend on $\epsilon$ and the doubling dimension.}

\def\thmCONGEST{Given a weighted unit ball graph $G$ with $n$ vertices in a metric of bounded doubling dimension and a constant $\epsilon > 0$, the algorithm \Call{CONGEST-Spanner}{$G$,$\epsilon$} runs in $\mathcal{O}(\log^* n)$ rounds of communication in the CONGEST model of computation, and returns a $(1+\epsilon)$-spanner of $G$ that has constant bounds on its lightness and maximum degree. These constant bounds only depend on $\epsilon$ and the doubling dimension.}

\def\thmEuclideanCentralized{Given a weighted unit disk graph $G$ in the two dimensional Euclidean plane and a constant $\epsilon > 0$, the spanner returned by \Call{Centralized-Euclidean-Spanner}{$G$,$\epsilon$} is a $(1+\epsilon)$-spanner of $G$ and has constant bounds on its lightness, maximum degree, and the average number of edge intersections per node. These constant bounds only depend on $\epsilon$ and the doubling dimension.}

\def\thmEuclideanDistributed{Given a weighted unit disk graph $G$ with $n$ vertices in the two dimensional Euclidean plane and a constant $\epsilon > 0$, the algorithm \Call{Distributed-Euclidean}{$G$,$\epsilon$} runs in $\mathcal{O}(\log^* n)$ rounds of communication and returns a bounded-degree $(1+\epsilon)$-spanner of $G$ that has constant bounds on its lightness, maximum degree, and the average number of edge intersections per node. These constant bounds only depend on $\epsilon$ and the doubling dimension.}


\subsection{Contributions}

We have two main contributions in this paper. First, we resolve the proposed open question that has remained open for more than a decade, and we prove the existence of light-weight bounded-degree $(1+\epsilon)$-spanners of unit ball graphs in the spaces of bounded doubling dimension. Our construction has constant bounds on its maximum degree and its lightness, and it can be built in $\mathcal{O}(\log^* n)$ rounds of communication in the LOCAL model of computation, where $n$ is the number of vertices.

Second, we propose the first lightweight spanner construction for unit ball graphs in the CONGEST model of computation. Even if we restrict our scope to the two dimensional Euclidean plane, where we see most of the applications of unit disk graphs, prior to this work there was no known CONGEST algorithm for finding light spanners of unit disk graphs. We achieve this construction by making adjustments on our construction for the LOCAL model to make it work in the CONGEST model in the same asymptotic number of rounds. The bounds on the lightness and maximum degree of our spanner remain the same in this model.

Besides these main results, we modify these constructions for the two dimensional Euclidean plane in order to have a linear number of edge intersections in total, implying a constant average number of edge intersections per node. This is motivated by the observation that a higher intersection per edge causes a higher chance of interference between the corresponding endpoints. To the best of our knowledge, this is the first distributed low-stretch low-intersection spanner construction for unit disk graphs.

A more detailed version of our results can be found in the following theorems. First, we introduce a centralized algorithm \Call{Centralized-Spanner}{} that,

\restate{thm:cen}{\thmCentralized} 

We use this centralized construction to propose the distributed construction \Call{Distributed-Spanner}{} in the LOCAL model of computation,

\restate{thm:dis}{\thmDistributed} 

Next, we study the problem in the CONGEST model of computation. Our distributed construction \Call{Distributed-Spanner}{} requires complete information about 2-hop neighborhood of a selected set of vertices, which is not easy to acquire in the CONGEST model. The same issues exists in the distributed algorithm of \cite{damian2006distributed}, where they aggregate information about the nodes that are $\mathcal{O}(\frac{1}{\alpha-1})$ hops away, for some constant $\alpha$. A simple approach for aggregating 2-hop neighborhoods would require $\mathcal{O}(d)$ rounds of communication in the CONGEST model, which can be as large as $\Omega(n)$ if the input graph is dense. In our next theorem, we break this barrier by making some adjustments for our algorithm to work in the CONGEST model of computation. Despite adding to the complexity of the algorithm itself, we prove that the round complexity of our new algorithm, \Call{CONGEST-Spanner}{}, would still be bounded by $\mathcal{O}(\log^*n)$.

\restate{thm:congest}{\thmCONGEST} 

Furthermore, we study the problem in the case of the two dimensional Euclidean plane, where the greedy spanner on a complete weighted graph is known to have constant upper bounds on its lightness \cite{filtser2016greedy}, maximum degree, and average number of edge intersections per node \cite{eppstein2020edge}. We observe that a simple change on the this algorithm can extend these results for unit disk graphs as well. We call this modified algorithm \Call{Centralized-Euclidean-Spanner}{} and we show that

\restate{thm:euc-cent}{\thmEuclideanCentralized} 

We use the aforementioned construction to propose \Call{Distributed-Euclidean-Spanner}{}, a specific distributed low-intersection construction for the case of the two dimensional Euclidean plane that preserves the previously mentioned properties and adds the low-intersection property.

\restate{thm:euc-dist}{\thmEuclideanDistributed} 

Besides these, we also prove that the last construction possesses sublinear separators and a separator hierarchy in the two dimensional Euclidean plane. We generalize this result to work for higher dimensions of Euclidean spaces. Finally, in section \ref{sec:exp}, we provide experimental results on random point sets in the two dimensional Euclidean plane that confirm the efficiency of our distributed construction.


\section{Preliminaries}

\subsection{Doubling metrics}

We start by recalling the definition of the doubling dimension of a metric space,

\begin{definition}[doubling dimension]
The doubling dimension of a metric space is the smallest $d$ such that for any $R>0$, any ball of radius $R$ can be covered by at most $2^d$ balls of radius $R/2$.
\end{definition}

We say a metric space has \emph{bounded doubling dimension} if its doubling dimension is upper bounded by a constant. Besides the triangle inequality, which is intrinsic to metric spaces, the \emph{packing lemma} is an essential tool for the metrics of bounded doubling dimension. This lemma states that it is impossible to pack more than a certain number of points in a ball of radius $R>0$ without making a pair of points' distance less than some $r>0$.

\begin{lemma}[Packing Property]
In a metric space of bounded doubling dimension $d$, let $X$ be a set of points with minimum distance $r$, contained in a ball of radius $R$. Then $|X|\leq \left(\frac{4R}{r}\right)^d$.

\begin{proof}
This is a well-known fact, see e.g. \cite{smid2009weak}.
\end{proof}
\end{lemma}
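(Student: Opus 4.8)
The plan is to prove this by iterating the doubling property a controlled number of times until the covering balls become too small to contain two distinct points of $X$. Starting from the ball $B$ of radius $R$ that contains $X$, I would apply the definition of doubling dimension once to cover $B$ by at most $2^d$ balls of radius $R/2$, then apply it to each of those in turn to get a cover by at most $2^{2d}$ balls of radius $R/4$, and so on. By induction, after $k$ rounds of halving we obtain a collection of at most $2^{kd}$ balls of radius $R/2^k$ whose union contains $B$, and hence contains all of $X$.

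The key observation is that a ball of radius strictly less than $r/2$ has diameter strictly less than $r$, so by the hypothesis that any two points of $X$ are at distance at least $r$, each such ball contains at most one point of $X$. I would therefore pick the smallest $k$ with $R/2^k < r/2$; an explicit choice is $k = \lfloor \log_2(2R/r)\rfloor + 1$, which satisfies $2^k > 2R/r$ (giving the required radius bound) and also $2^k \le 4R/r$. Since $X$ is covered by at most $2^{kd}$ balls, each contributing at most one point, we get $|X| \le 2^{kd} = (2^k)^d \le (4R/r)^d$, as claimed. Note that this argument does not presuppose $|X|$ is finite: the covering by finitely many single-point-capacity balls establishes finiteness and the bound simultaneously.

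There is essentially no real obstacle here beyond bookkeeping; the statement is cited as well known. The only subtlety worth flagging is the boundary case of two points of $X$ lying at distance exactly $r$: a ball of radius exactly $r/2$ could contain both, which is why the halving must be carried one step past the naive stopping condition $R/2^k \le r/2$ to the strict inequality $R/2^k < r/2$. (When $R/r$ is very small the statement is trivial, since then $X$ is a single point.) Since the excerpt merely references \cite{smid2009weak}, I would keep the write-up to precisely this short iterated-covering argument.
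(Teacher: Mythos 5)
Your iterated-covering argument is correct and is precisely the standard proof of the packing property that the paper omits (it simply cites \cite{smid2009weak}); the choice $k=\lfloor\log_2(2R/r)\rfloor+1$ gives $2R/r<2^k\leq 4R/r$, and the strict inequality $R/2^k<r/2$ correctly handles pairs at distance exactly $r$. No gap; nothing further is needed.
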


\subsection{Spanners for complete graphs}
For a weighted graph $G$ in a metric space, where every edge weight is equal to the metric distance of its endpoints, a $t$-spanner is defined in the following way,
\begin{definition}[$t$-spanner]
A $t$-spanner of a weighted graph $G$ is a subgraph $S$ of $G$ that for every pair of vertices $x,y$ in $G$,
$$\dist_S(x,y)\leq t\cdot \dist_G(x,y)$$
where $\dist_A(x,y)$ is the length of a shortest path between $x$ and $y$ in $A$. The lightness of $S$ is defined as $\weight(S)/\weight(MST)$ where $\weight$ is the weight function and $MST$ is the minimum spanning tree in $G$.
\end{definition}
In other words, a $t$-spanner approximates the pairwise distances within a factor of $t$. Spanners were studied for complete weighted graphs first, and several constructions were proposed to optimize them with respect to the number of edges and total weight. Among these constructions, \emph{greedy spanners} \cite{althofer1990generating} are known to out-perform the others.

A greedy spanner (\autoref{fig:greedy}) can be constructed by running the greedy spanner algorithm (\autoref{alg:greedy}) on a set of points $V$ in a metric space. This short procedure adds edges one at a time to the spanner it constructs, in ascending order by length. For each pair of vertices, in this order, it checks whether the pair already satisfies the distance inequality using the edges already added. If not, it adds a new edge connecting the pair. Therefore, by construction, each pair of vertices satisfies the inequality, either through previous edges or (if not) through the newly added edge. The resulting graph is therefore a $t$-spanner.

\begin{algorithm}[ht]
\caption{The naive greedy spanner algorithm.}\label{alg:greedy}
\begin{algorithmic}[1]
\Procedure{Naive-Greedy}{$V$}
\State Let $S$ be a graph with vertices $V$ and edges $E=\{\}$
\For {each pair $(P,Q)\in V^2$ in increasing order of $\lVert PQ\rVert$}
\If {$\dist_S(P,Q) > t\cdot \dist(P,Q)$}
\State Add edge $PQ$ to $E$%
\EndIf%
\EndFor%
\Return S%
\EndProcedure%
\end{algorithmic}
\end{algorithm}

Despite the simplicity of \autoref{alg:greedy}, Farshi and Gudmundsson~\cite{farshi2005experimental} observed that in practice, greedy spanners are surprisingly good in terms of the number of edges, weight, maximum vertex degree, and also the number of edge crossings in the two dimensional Euclidean plane. All of these properties have been proven rigorously so far. Filster and Solomon \cite{filtser2016greedy} proved that greedy spanners have size and lightness that is optimal to within a constant factor for worst-case instances. They also achieved a near-optimality result for greedy spanners in spaces of bounded doubling dimension. Borradaile, Le, and Wulff-Nilsen~\cite{borradaile2019greedy} recently proved optimality for doubling metrics, generalizing a result of Narasimhan and Smid~\cite{narasimhan2007geometric}, and resolving an open question posed by Gottlieb \cite{gottlieb2015light}, and Le and Solomon showed that no geometric $t$-spanner can do asymptotically better than the greedy spanner in terms of number of edges and lightness \cite{le2019truly}.

In a recent work, Eppstein and Khodabandeh \cite{eppstein2020edge} showed that the number of edge crossings of the greedy spanner in the two dimensional Euclidean plane is linear in the number of vertices. Moreover, they proved that the crossing graph of the greedy spanner has bounded degeneracy, implying the existence of sub-linear separators for these graphs \cite{eppstein2017crossing}. This, together with the well-known fact that greedy spanners have bounded-degree in the two dimensional Euclidean plane, makes greedy spanners more practical in this particular metric space.

Although the degree of the greedy spanner is bounded in the two dimensional Euclidean plane, it is known that there exist $n$-point metric spaces with doubling dimension 1 where the greedy spanner has maximum degree $n-1$ \cite{filtser2016greedy}. Gudmundsson, Levcopoulos, and Narasimhan \cite{gudmundsson2002fast} devised a faster algorithm that was later proven to have bounded degree as well as constant lightness and linear number of edges \cite{filtser2016greedy}. We call this algorithm \Call{Approximate-Greedy}{} in this paper, and we make use of it in our algorithms for the metrics of bounded doubling dimension, while we take advantage of the extra low-intersection property of \Call{Naive-Greedy}{} in the two dimensional Euclidean plane.

\subsection{Unit ball graphs}
We formally define a unit ball graph on a set of points $V$ in the following way,

\begin{definition}[unit ball graph]
Given a set of points $V$ in a metric space, the unit ball graph $G$ on $V$ contains $V$ as its vertex set and every two vertices $x,y\in V$ are connected if and only if $\lVert xy\rVert\leq 1$. The weight of an edge $(x,y)$ is equal to $\lVert xy\rVert$ if the edge exists.
\end{definition}

Unit ball graphs are an important subclass of the graphs called \emph{growth-bounded graphs}, which only limit the number of independent nodes in every neighborhood, a property that holds for UBGs due to the packing property.

\begin{figure}[ht]
    \centering
    \includegraphics[width=0.3\textwidth]{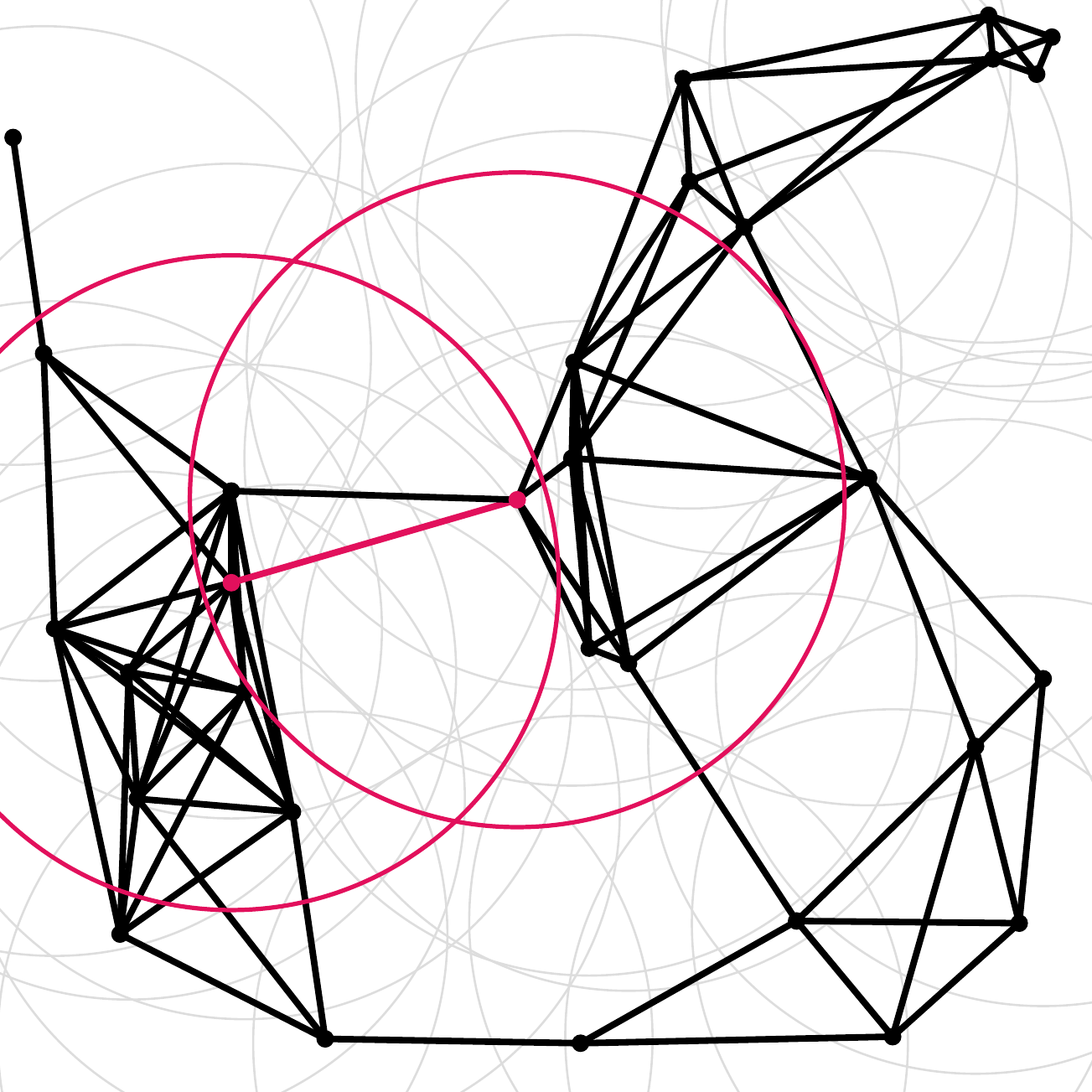}
    \caption{The unit disk graph on the same point set introduced earlier. The red disks intersect, therefore there is an edge between their centers.}
    \label{fig:ubg}
\end{figure}

Kuhn, Moscibroda, and Wattenhofer \cite{kuhn2005locality} presented a $\mathcal{O}(\log^* n)$-round distributed algorithm for finding a maximal independent set (MIS) of a unit ball graph graph in a space with bounded doubling dimension. This result was later generalized by Schneider and Wattenhofer \cite{schneider2008log} for growth-bounded graphs. Throughout the paper we refer to their algorithm by \Call{Maximal-Independent}{}. It turns out that \Call{Maximal-Independent}{} will be a key ingredient of our distributed algorithms, as well as their bottleneck in terms of the number of rounds. This means that if a maximal independent set is known beforehand, our algorithms can be executed fully locally, in constant number of rounds.

In \autoref{sec:central} we prove the existence of $(1+\epsilon)$-spanners with constant bounds on the maximum degree and the lightness by introducing an algorithm that finds such spanners in a centralized manner. In \autoref{sec:dist} we propose a distributed construction that delivers the same features through a $\mathcal{O}(\log^* n)$-round algorithm. Finally, in \autoref{sec:euc} we consider the special case of two dimensional Euclidean plane and we design centralized and distributed algorithms to construct a spanner that has the extra low-intersection property, making it more suitable for practical purposes.


\section{Centralized Construction}\label{sec:central}
In this section we propose a centralized construction for a light-weight bounded-degree $(1+\epsilon)$-spanner for unit ball graphs in a metric of bounded doubling dimension. Later in \autoref{sec:dist} we use this centralized construction to design a distributed algorithm that delivers the same features.

It is worth mentioning that the greedy spanner would be a $(1+\epsilon)$-spanner of the UBG if the algorithm stops after visiting the pairs of distance at most 1, and it even has a lightness bounded by a constant, but as we mentioned earlier, there are metrics with doubling dimension 1 in which its degree may be unbounded.

To construct a lightweight bounded-degree $(1+\epsilon)$-spanner of the unit ball graph, we start with the spanner of \cite{gudmundsson2002fast}, called \Call{Approximate-Greedy}{}, which is returns a spanner of the complete graph. It is proven in \cite{narasimhan2007geometric} that \Call{Approximate-Greedy}{} has the desired properties, i.e. bounded-degree and lightness, for complete weighted graphs in Euclidean metrics, but as stated in \cite{filtser2016greedy}, the proof only relies on the triangle inequality and packing argument which both work for doubling metrics as well. Therefore, we may safely assume that \Call{Approximate-Greedy}{} finds a light-weight bounded-degree $(1+\epsilon)$-spanner of the complete weighted graph defined on the point set. The main issue is that the edges of length more than 1 are not allowed in a spanner of the unit ball graph on the same point set. Therefore, a replacement procedure is needed to substitute these edge with edges of length at most 1. Peleg and Roditty \cite{peleg2010localized} introduced a refinement process which removes the edges of length larger than 1 from the spanner and replaces them with three smaller edges to make the output a subgraph of the UBG. The main issue with their approach is that it can lead to vertices having unbounded degrees in the spanner, therefore missing an important feature. Here, we introduce our own refinement process that not only replaces edges of larger than 1 with smaller edges and makes the spanner a subgraph of the unit ball graph, but also guarantees a constant bounded on the degrees of the resulting spanner.

\subsection{The algorithm}

In the first step of the algorithm (\autoref{alg:centralgreedy}) we choose $\epsilon'=\epsilon/36$, a smaller stretch parameter than $\epsilon$, to cover the errors that future steps might inflict to the spanner. Then we call the procedure \Call{Approximate-Greedy}{} on the set of vertices $V$ to calculate a light-weight bounded-degree $(1+\epsilon')$-spanner $S$ of the complete weighted graph on $V$. This spanner might contain edges of length larger than 1, which we will replace by some edges of length at most 1 in the future steps.

Since an edge of length larger than $1+\epsilon'$ in $S$ cannot participate in the shortest path between any two adjacent vertices in $G$, we simply remove and discard them from the spanner. Then for every remaining edge $e=(u,v)$ of length in the range $(1,1+\epsilon']$ we find an edge $(x,y)$ of the original graph $G$ so that $\lVert ux\rVert \leq \epsilon'$ and $\lVert vy\rVert \leq \epsilon'$. We then replace such an edge $e$ by the edge $(x,y)$. We call the pair $(x,y)$ the \emph{replacement edge} or the \emph{replacement pair} for the edge $e$. Since this procedure can end up assigning too many replacement edges to a single vertex ($x$ or $y$ in this case) and hence increasing its degree significantly, we perform a simple check before adding a replacement edge; we store the set $R$ of previously added replacement pairs in the memory and if a \emph{weak} replacement pair $(x',y')\in R$ exists, then we prefer it over a newly found replacement pair $(x,y)\notin R$. By weak replacement pair we mean a pair $(x',y')\in R$ that $\lVert ux'\rVert \leq 2\epsilon'$ and $\lVert vy'\rVert \leq 2\epsilon'$, which is weaker than the definition of the replacement pair. As we later see this weaker notion of replacement pair will help us to bound the degree of the vertices.

After removing the edges of length larger than 1 and replacing the ones in the range $(1,1+\epsilon']$, we return the spanner to the output.

\begin{algorithm}[ht]
\caption{A centralized spanner construction.}\label{alg:centralgreedy}
\textbf{Input.} A unit ball graph $G(V,E)$ in a metric with doubling dimension $d$.\\
\textbf{Output.} A light-weight bounded-degree $(1+\epsilon)$-spanner of $G$.%
\begin{algorithmic}[1]%
\Procedure{Centralized-Spanner}{$G$, $\epsilon$}%
\State $\epsilon'\gets \epsilon/36$
\State $S \gets$ \Call{Approximate-Greedy}{$V$, $\epsilon'$}%
\State R $\gets\varnothing$
\For {$e=(u,v)$ in $S$}%
\If {$|e| > 1$}%
\State Remove $e$ from $S$%
\EndIf%
\If {$|e|\in (1,1+\epsilon']$}%
\If {$\exists (x,y)\in E$ that $\lVert ux\rVert\leq \epsilon'$ and $\lVert vy\rVert\leq \epsilon'$}%
\If {$\nexists (x',y')\in R$ that $\lVert ux'\rVert\leq 2\epsilon'$ and $\lVert vy'\rVert\leq 2\epsilon'$}%
\State $S\gets S\cup \{(x,y)\}$%
\State $R\gets R\cup \{(x,y)\}$
\EndIf%
\EndIf%
\EndIf%
\EndFor%
\State \Return S%
\EndProcedure%
\end{algorithmic}
\end{algorithm}

\subsection{The analysis}

Now we prove that the output $S$ of the algorithm is a light-weight bounded-degree $(1+\epsilon)$-spanner of the unit ball graph $G$. Clearly, after the refinement is done the spanner $S$ is a subgraph of $G$, so we need to analyze the lightness, the stretch factor, and the maximum degree of the spanner.

First we prove that the stretch-factor of the spanner is indeed bounded by $1+\epsilon$.

\def\lemCenStr{The spanner returned by \Call{Centralized-Spanner}{} has a stretch factor of $1+\epsilon$.}%
\begin{lemma}%
\lemCenStr\label{lem:cen-str}
\end{lemma}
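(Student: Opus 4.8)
The plan is to bound the stretch of the output spanner by tracking how each edge of the intermediate spanner $S = \Call{Approximate-Greedy}{V,\epsilon'}$ is either kept, discarded, or replaced, and then showing that any path in $S$ can be simulated by a path in the final spanner at a controlled multiplicative cost. First I would fix a pair of adjacent vertices $x,y$ in $G$ (so $\lVert xy\rVert \le 1$) and consider a shortest path $\pi$ between them in $S$, which has length at most $(1+\epsilon')\lVert xy\rVert$ since $S$ is a $(1+\epsilon')$-spanner of the complete graph. Every edge of $\pi$ has length at most $(1+\epsilon')\lVert xy\rVert \le 1+\epsilon'$, so no edge of $\pi$ is longer than $1+\epsilon'$; in particular the ``discard edges longer than $1+\epsilon'$'' step never touches $\pi$. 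Hence it suffices to handle the edges of $\pi$ whose length lies in $(1,1+\epsilon']$: each such edge $e=(u,v)$ is replaced in the algorithm by some pair $(x',y')$ (either a freshly found replacement pair with $\lVert ux'\rVert,\lVert vy'\rVert \le \epsilon'$, or a weak replacement pair from $R$ with $\lVert ux'\rVert,\lVert vy'\rVert \le 2\epsilon'$), and crucially the replacement pair is a genuine edge of $G$ because $\lVert x'y'\rVert \le \lVert x'u\rVert + \lVert uv\rVert + \lVert vy'\rVert \le 2\epsilon' + (1+\epsilon') + 2\epsilon'$, which I must check is $\le 1$ — and here I would want $\epsilon$ (hence $\epsilon' = \epsilon/36$) small enough, but since $\epsilon<1$ this bound is roughly $1+5\epsilon/36$, which is NOT automatically $\le 1$, so I need to be more careful: the replacement step only runs on edges of length in $(1,1+\epsilon']$, and I should double-check the intended constants — likely the algorithm implicitly uses that the replacement pair found in $E$ is by definition an edge of $G$, so existence of $(x,y)\in E$ is assumed, and the weak-pair substitution preserves this because $R$ only ever contains edges of $G$.

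The second step is the local detour estimate. For a replaced edge $e=(u,v)$ with replacement pair $(x',y')\in S\cap G$, I would produce a $u$-to-$v$ walk in the final spanner by concatenating: a shortest path in $S$ (recursively, in the final spanner) from $u$ to $x'$, the edge $(x',y')$, and a path from $y'$ to $v$. Since $\lVert ux'\rVert \le 2\epsilon' \le 1$ and $\lVert vy'\rVert \le 2\epsilon' \le 1$, the pairs $(u,x')$ and $(y',v)$ are themselves edges of $G$, so they are handled by the $(1+\epsilon')$-spanner guarantee of the original $S$ restricted to short pairs — but these short connecting paths might themselves use edges in $(1,1+\epsilon']$... except they cannot, because a $(1+\epsilon')$-spanner path between points at distance $\le 2\epsilon'$ has total length $\le (1+\epsilon')2\epsilon' < 1$, so all its edges are shorter than $1$ and survive untouched. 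This is the key observation that prevents an unbounded recursion. Thus each replaced edge $(u,v)$ of length $\ell \in (1,1+\epsilon']$ is simulated by a walk of length at most $(1+\epsilon')\cdot 2\epsilon' + \lVert x'y'\rVert + (1+\epsilon')\cdot 2\epsilon'$, and $\lVert x'y'\rVert \le \lVert ux'\rVert + \ell + \lVert vy'\rVert \le 4\epsilon' + \ell$, giving a detour of length at most $\ell + O(\epsilon')$, i.e. a multiplicative blowup of $1 + O(\epsilon'/\ell) \le 1 + O(\epsilon')$ since $\ell > 1$.

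Finally I would assemble the global bound: the path $\pi$ in $S$ has length $\le (1+\epsilon')\lVert xy\rVert$, and replacing each of its edges by the simulating walk multiplies its length by at most $1 + c\epsilon'$ for an absolute constant $c$ depending only on the doubling dimension via the spanner guarantee (here $c$ can be taken to be a small explicit constant like $5$ or so from the arithmetic above). Hence $\dist_{S_{\mathrm{out}}}(x,y) \le (1+\epsilon')(1+c\epsilon')\lVert xy\rVert$, and with $\epsilon' = \epsilon/36$ this is at most $(1+\epsilon)\lVert xy\rVert$ for all $\epsilon<1$ (the factor $36$ is chosen with room to spare to absorb both this step and the later degree/lightness arguments). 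Extending from adjacent pairs to arbitrary pairs $x,y\in V$ is then immediate by concatenating the per-edge bounds along a shortest path in $G$. The main obstacle is the circularity worry in the second step — ensuring the ``connector'' subpaths never themselves invoke the replacement machinery — which is resolved by the length bound $(1+\epsilon')\cdot 2\epsilon' < 1$ forcing all connector edges to be short; the rest is bookkeeping with the triangle inequality and the constant $\epsilon'=\epsilon/36$.
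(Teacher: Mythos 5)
Your proposal is correct and follows essentially the same route as the paper's proof: discard edges longer than $1+\epsilon'$, detour each surviving long edge through its (possibly weak) replacement pair, and observe that the short connector paths are never touched by the refinement because a $(1+\epsilon')$-path between points at distance $\mathcal{O}(\epsilon')$ has total length below $1$ (a point you actually justify more explicitly than the paper does). The only step you assert without proof is that the algorithm's existence check for a replacement pair succeeds for every long edge $e=(u,v)$ lying on an approximating path; this follows in one line from your own setup, since the remainder of $\pi$ after removing $e$ has length at most $\epsilon'$, so $\lVert ux\rVert,\lVert vy\rVert\le\epsilon'$ and the UBG edge $(x,y)$ itself witnesses the check.
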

\begin{proof}
We recall that the output of \Call{Approximate-Greedy}{} is a light-weight bounded-degree $(1+\epsilon')$-spanner of the complete weighted graph on the point set. So an edge $e$ of length $|e|>1+\epsilon'$ cannot be used to approximate any edges in the UBG, i.e. if $(x,y)\in E$ then $e$ cannot belong to the shortest path between $x$ and $y$ in $S$; otherwise the length of the path would exceed $1+\epsilon'$ which cannot happen. So we may safely remove these edges in the first step of the refinement procedure without replacing them.

\begin{figure}[ht]
    \centering
    \includegraphics[width=0.4\textwidth]{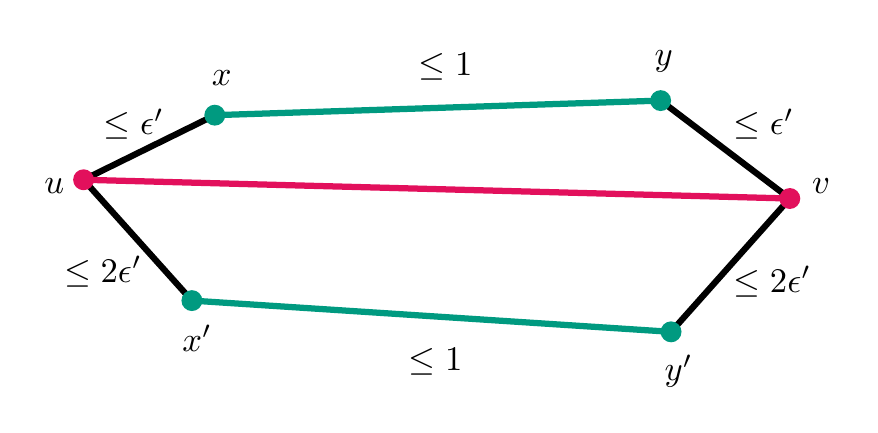}
    \caption{An edge $(x,y)$ of the UBG that uses a longer than unit length edge $(u,v)$ of the spanner on its shortest path, which is then replaced by $(x',y')$ during the replacement procedure.}
    \label{fig:rep-edge}
\end{figure}

Also, any edge of length in the range $(1,1+\epsilon']$ that is not used in a shortest path between any two endpoints of an edge of UBG can be removed as well, because removing them does not change the stretch-factor of the spanner. Now consider an edge $(x,y)\in E$ of the UBG that uses a spanner edge $e = (u,v)\in S$ that $|e|\in(1,1+\epsilon']$ on its shortest path. We want to prove that after the replacement of $e$, the shortest path between $x$ and $y$ remains within $1+\epsilon$ factor of their distance. Clearly, we have $\lVert ux\rVert\leq \epsilon'$ and $\lVert vy\rVert\leq \epsilon'$; otherwise the length of the path $xuvy$ would be more than $1+\epsilon'$, contradicting the fact that it is approximating an edge of length at most 1. This shows that $(x,y)$ would be a valid replacement edge for $e$. So we can safely assume that the algorithm finds a (possible weak) replacement edge $(x',y')\in E$ for $e$ (\autoref{fig:rep-edge}). This replacement edge might be a normal replacement edge or a weak replacement edge. Either way, we have $\lVert ux'\rVert \leq 2\epsilon'$ and $\lVert vy'\rVert\leq 2\epsilon'$. By the triangle inequality
\begin{equation*}
    \lVert x'x\rVert \leq \lVert x'u\rVert + \lVert ux\rVert \leq 2\epsilon' + \epsilon' = 3\epsilon'
\end{equation*}
Similarly, $\lVert yy'\rVert \leq 3\epsilon'$. Therefore
\begin{equation}\label{eq:cen-eq1}
    \lVert x'y'\rVert \leq \lVert x'x\rVert + \lVert xy\rVert + \lVert yy'\rVert
    \leq \lVert xy\rVert + 6\epsilon'
\end{equation}
Denote the shortest spanner path between $x$ and $x'$ by $P_{xx'}$ and similarly define $P_{yy'}$. Consider the spanner path $P=P_{xx'}x'y'P_{y'y}$ that connects $x$ and $y$. Using \autoref{eq:cen-eq1} the length of the path $P$ is
\begin{equation}\label{eq:cen-eq2}
    |P|=|P_{xx'}| + \lVert x'y'\rVert + |P_{y'y}| \leq \lVert xy\rVert + 6\epsilon' + |P_{xx'}| + |P_{y'y}|
\end{equation}
The changes that we make in the refinement process do not affect the length of short paths like $P_{xx'}$ and $P_{y'y}$. So we have
\begin{equation*}
    |P_{xx'}|\leq (1+\epsilon')\lVert xx'\rVert\leq 3\epsilon'(1+\epsilon')
\end{equation*}
Similarly, $|P_{yy'}|\leq 3\epsilon'(1+\epsilon')$. Putting these into \autoref{eq:cen-eq2} and using $\epsilon=36\epsilon'$,
\begin{equation}\label{eq:cen-eq3}
    |P|\leq \lVert xy\rVert + 6\epsilon' + 6(1+\epsilon')\epsilon' \leq \lVert xy\rVert + \frac{\epsilon}{1+\epsilon'}
\end{equation}
But since $e$ was previously approximating the edge $(x,y)$, we know that $(1+\epsilon')\lVert xy\rVert \geq |e| > 1$ or equivalently $\lVert xy\rVert > 1/(1+\epsilon')$. Substituting this into \autoref{eq:cen-eq3},
\begin{equation*}
    |P|\leq \lVert xy\rVert+\epsilon\lVert xy\rVert  = (1+\epsilon)\lVert xy\rVert
\end{equation*}
So $S$ is a $(1+\epsilon)$-spanner of $G$.
\end{proof}

Now we analyze the weight of the spanner, proving its constant lightness.

\def\lemCenLight{The spanner returned by \Call{Centralized-Spanner}{} has a weight of $\mathcal{O}(1)\weight(MST)$.}%
\begin{lemma}
\lemCenLight\label{lem:cen-light}
\end{lemma}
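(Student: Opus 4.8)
The plan is to compare the weight of $S$ against the weight of the output of \Call{Approximate-Greedy}{}, which we already know has lightness $\mathcal{O}(1)$ with respect to the minimum spanning tree. Let $S_0$ denote the spanner returned by \Call{Approximate-Greedy}{$V$,$\epsilon'$}, so that $\weight(S_0) = \mathcal{O}(1)\weight(MST)$. The spanner $S$ is obtained from $S_0$ by (i) deleting edges of length $> 1+\epsilon'$, (ii) deleting edges of length in $(1,1+\epsilon']$, and (iii) adding, for some of the deleted edges in category (ii), a replacement pair of length at most $1+2\epsilon'$ (since $\lVert xy\rVert \le \lVert xu\rVert + \lVert uv\rVert + \lVert vy\rVert \le \epsilon' + (1+\epsilon') + \epsilon'$, or a similar bound via the weak pair). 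Deletions only decrease weight, so it suffices to bound the total weight of the newly added replacement edges by $\mathcal{O}(1)\weight(MST)$, since then $\weight(S) \le \weight(S_0) + \mathcal{O}(1)\weight(MST) = \mathcal{O}(1)\weight(MST)$.

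To bound the added weight, first I would observe that each replacement edge has weight $\mathcal{O}(1)$ (at most $1+2\epsilon' < 2$), so it is enough to bound the \emph{number} of distinct replacement pairs in $R$ by $\mathcal{O}(1)\weight(MST)$. The key structural fact enforced by the algorithm is the weak-replacement check: whenever a pair $(x,y)$ is actually added to $R$, there was no existing $(x',y')\in R$ with $\lVert ux'\rVert \le 2\epsilon'$ and $\lVert vy'\rVert \le 2\epsilon'$, where $(u,v)$ is the spanner edge being replaced and $\lVert ux\rVert, \lVert vy\rVert \le \epsilon'$. I would use this to argue that the endpoints of the added replacement pairs are "spread out": the $u$-endpoints (and symmetrically the $v$-endpoints) associated to two distinct added pairs replacing edges incident to a common region cannot be too close together. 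More precisely, the plan is to charge each replacement pair to the source edge $(u,v)\in S_0$ it replaces, and to show via a packing argument that the number of $S_0$-edges that get a genuinely new replacement pair is within a constant factor of the number that do not — i.e. the weak-replacement rule collapses clusters of nearby long edges of $S_0$ down to a single representative, so the number of replacement edges is $\mathcal{O}(|E(S_0)|)$. Since \Call{Approximate-Greedy}{} has bounded degree, $|E(S_0)| = \mathcal{O}(|V|) = \mathcal{O}(\weight(MST))$ is too weak directly (the MST can have weight much less than $|V|$ when points are clustered), so instead I would charge more carefully: each long edge of $S_0$ has length $> 1$, hence weight at least $1$, and $\weight(S_0) = \mathcal{O}(1)\weight(MST)$ already accounts for all of them, so the number of long edges of $S_0$ is at most $\weight(S_0) = \mathcal{O}(1)\weight(MST)$, and each contributes at most one replacement edge of weight $< 2$. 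This gives $\weight(\text{replacements}) < 2\cdot(\text{number of long edges of }S_0) \le 2\weight(S_0) = \mathcal{O}(1)\weight(MST)$.

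Assembling the pieces: $\weight(S) \le \weight(S_0) + \weight(\text{added replacement edges}) \le \weight(S_0) + 2\weight(S_0) = 3\weight(S_0) = \mathcal{O}(1)\weight(MST)$, where the constant hidden in $\mathcal{O}(1)$ depends only on $\epsilon'$ (hence $\epsilon$) and the doubling dimension $d$ through the lightness bound of \Call{Approximate-Greedy}{}. The main obstacle I anticipate is making the charging of replacement edges to long edges of $S_0$ fully rigorous: one must verify that every replacement edge added in the loop is associated to a \emph{distinct} edge $e\in S_0$ of length in $(1,1+\epsilon']$ (which is immediate since the loop iterates over edges of $S_0$ and adds at most one replacement per edge), and that the weak-replacement bookkeeping does not cause any double-counting. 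The subtler point — and the one worth spelling out — is simply that each long edge of $S_0$ has length strictly greater than $1$, so that the count of such edges is controlled by $\weight(S_0)$ rather than by $|V|$; this is what lets the argument go through in doubling metrics where $\weight(MST)$ can be as small as roughly the diameter. The weak-pair mechanism is not even strictly needed for the lightness bound (it is needed for the degree bound proved later), so the weight argument reduces cleanly to: deletions help, and each of the at most $\weight(S_0)$ long edges is replaced by at most one edge of weight $< 2$.
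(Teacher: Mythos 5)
Your argument is correct, but it is more roundabout than the paper's, which rests on a single observation you passed over: every replacement pair $(x,y)$ is required by the algorithm to be an edge of $E$, i.e.\ an edge of the unit ball graph, so $\lVert xy\rVert\leq 1$, which is \emph{strictly less} than the length of the edge $e$ it replaces (since $|e|>1$). Hence the refinement step never increases the total weight at all, and $\weight(S)\leq\weight(S_0)=\mathcal{O}(1)\weight(MST)$ immediately. You instead bounded the replacement length by $1+2\epsilon'$ via the triangle inequality and then ran a counting argument: each long edge of $S_0$ has weight exceeding $1$, so the number of long edges is at most $\weight(S_0)$, and each contributes at most one replacement of weight less than $2$, giving $\weight(S)\leq 3\weight(S_0)$. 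That reasoning is sound --- and your remark that one must charge to $\weight(S_0)$ rather than to $|V|$, because $\weight(MST)$ can be far smaller than $n$ in a doubling metric, correctly identifies why the naive count would fail --- but it buys you nothing over the direct per-edge comparison and costs you a constant factor. You are also right that the weak-pair bookkeeping is irrelevant to lightness (it only serves the degree bound); the paper's proof likewise ignores it.
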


\begin{proof}
Again, we use the fact that the output of \Call{Approximate-Greedy}{} has weight $\mathcal{O}(1)\weight(MST(G))$. During the refinement process, every edge is replaced by an edge of smaller length, so the whole weight of the graph does not increase during the refinement process. Therefore in the end $\weight(S) = \mathcal{O}(1)\weight(MST(G))$.
\end{proof}

In the final step, we bound the maximum degree of the spanner.

\def\lemCenDeg{The spanner returned by \Call{Centralized-Spanner}{} has bounded degree.}%
\begin{lemma}
\lemCenDeg\label{lem:cen-deg}
\end{lemma}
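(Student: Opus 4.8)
We need to show the spanner $S$ returned by \Call{Centralized-Spanner}{} has bounded degree. The degree of a vertex $v$ in $S$ comes from three sources: (i) edges of the original \Call{Approximate-Greedy}{} spanner that survive the refinement (those of length $\le 1$), (ii) replacement edges $(x,y)$ for which $v \in \{x,y\}$, and (iii) we should also note that removing edges only decreases degree, so (i) is already bounded by the degree bound of \Call{Approximate-Greedy}{}. So the whole content is bounding (ii): how many replacement edges can be incident to a single vertex $x$?

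So first I would fix a vertex $x$ and consider all replacement pairs $(x, y_1), (x, y_2), \dots$ that were actually added to $R$ (and also those where $x$ is the second coordinate — by symmetry the same argument applies). Each such pair $(x, y_i) \in R$ was added because it served as the replacement for some spanner edge $e_i = (u_i, v_i)$ with $|e_i| \in (1, 1+\epsilon']$, and at the moment it was added there was no earlier weak replacement pair available. The key observation is: if two replacement pairs $(x, y_i)$ and $(x, y_j)$ both got added with $\lVert y_i y_j \rVert \le 2\epsilon' - \text{(something)}$, then the later one should have reused the earlier one as a weak replacement pair, a contradiction. More precisely, a new replacement pair $(x,y)$ for edge $(u,v)$ is only added when no $(x', y') \in R$ has $\lVert u x' \rVert \le 2\epsilon'$ and $\lVert v y' \rVert \le 2\epsilon'$; since the genuine replacement pair satisfies $\lVert u x\rVert \le \epsilon'$ and $\lVert v y\rVert \le \epsilon'$, any previously-added pair $(x', y')$ with $\lVert x x'\rVert \le \epsilon'$ and $\lVert y y'\rVert \le \epsilon'$ would have blocked the addition (by triangle inequality $\lVert u x'\rVert \le 2\epsilon'$, $\lVert v y'\rVert \le 2\epsilon'$). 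Hence among all replacement pairs in $R$ with first coordinate within distance $\epsilon'$ of a common point, the second coordinates are pairwise at distance $> \epsilon'$.

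To turn this into a bound I would group the replacement pairs incident to $x$ by the first coordinate. All the $u_i$'s here — wait, more carefully: the replacement pairs genuinely incident to $x$ as first coordinate are all added with $\lVert u_i x\rVert \le \epsilon'$, so all the relevant $u_i$ lie in a ball of radius $\epsilon'$ around $x$; in particular any two such pairs $(x, y_i), (x, y_j)$ have $\lVert u_i u_j\rVert \le 2\epsilon'$. Then the blocking rule forces $\lVert y_i y_j\rVert > 2\epsilon'$ (otherwise the later pair would be a weak replacement for the earlier one's edge). Now all the $y_i$ are within distance $\le 1 + \epsilon' + \epsilon' + \epsilon' \le 2$ of $x$ (since $\lVert x y_i\rVert \le \lVert x u_i\rVert + \lVert u_i v_i\rVert + \lVert v_i y_i\rVert \le \epsilon' + (1+\epsilon') + \epsilon'$), so they form a set of points in a ball of radius $2$ with pairwise distance $> 2\epsilon'$. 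By the Packing Property this set has size at most $(4 \cdot 2 / (2\epsilon'))^d = (4/\epsilon')^d$, a constant depending only on $\epsilon$ and $d$. The symmetric argument bounds the number of pairs in which $x$ is the second coordinate. Adding the (bounded) contribution from surviving \Call{Approximate-Greedy}{} edges gives a constant degree bound $\Delta_{\text{AG}} + 2(4/\epsilon')^d = \mathcal{O}(1)$.

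The main obstacle is getting the geometry of the blocking condition exactly right: one must be careful that the "weak replacement" threshold $2\epsilon'$ is chosen precisely so that a genuine replacement pair (slack $\epsilon'$) plus one more triangle-inequality step (another $\epsilon'$) still fits inside it, and conversely that this forces a clean packing separation of $> 2\epsilon'$ (or at least $> \epsilon'$, which still suffices) among the second coordinates. I would also double-check the edge case where $x$ appears as first coordinate in some pairs and second coordinate in others, and the case of replacement pairs whose edge $e$ was oriented the other way; handling orientation by treating $R$ as a set of ordered pairs and running the packing argument on each side separately is the cleanest route.
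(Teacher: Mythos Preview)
Your approach is essentially identical to the paper's: fix a vertex $x$, use the weak-replacement blocking rule to show that any two replacement edges $(x,y)$ and $(x,z)$ satisfy $\lVert yz\rVert>\epsilon'$, and then apply the packing lemma. The paper obtains the containing ball of radius~$1$ directly from $(x,y)\in E$ rather than via the detour through $u,v$, and the separation it derives is $>\epsilon'$ (your hedge is the correct one; $>2\epsilon'$ does not follow from the blocking condition).
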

\begin{proof}
It is clear from the algorithm that immediately after processing an edge $e=(u,v)$, the degree of $u$ and $v$ does not increase; it may decrease due to the removal of the edge which is fine. But if a replacement edge $(x,y)$ is added after the removal of $e$ then the degree of $x$ and $y$ is increased by at most 1. We need to make sure this increment is bounded for every vertex.

Let $x$ be an arbitrary vertex of $G$ and let $(x,y)$ and $(x,z)$ be two replacement edges that have been added to $x$ in this order as a result of the refinement process. We claim that $\lVert yz\rVert > \epsilon'$ holds. Assume, on the contrary, that $\lVert yz\rVert \leq \epsilon'$, and also assume that $(x,z)$ has been added in order to replace an edge $(u,v)$ of the spanner. Then by the triangle inequality
\begin{equation*}
    \lVert vy\rVert \leq \lVert vz\rVert + \lVert zy\rVert \leq 2\epsilon'
\end{equation*}
Also $\lVert ux\rVert \leq \epsilon' < 2\epsilon'$ because $(x,z)$ is added to replace $(u,v)$. But the last two inequalities contradict the fact that $(x,y)$ cannot be a weak replacement for $(u,v)$.

Now that we have proved $\lVert yz\rVert > \epsilon'$ we can use the packing property of the bounded doubling dimension to bound the number of such replacement edges around $x$. All the other endpoints of such replacement edges are included in ball of radius 1 around $x$, and the distance between every two such points is at least $\epsilon'$. Thus by the packing property there can be at most $(\frac{4}{\epsilon'})^d=\epsilon^{-\mathcal{O}(d)}$ many replacement edges incident to $x$.
\end{proof}

Putting these together, we can prove \autoref{thm:cen}.

\begin{theorem}[Centralized Spanner]
\thmCentralized\label{thm:cen}
\end{theorem}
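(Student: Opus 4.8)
The plan is to assemble \autoref{thm:cen} directly from the three lemmas just proved, treating the theorem as a bookkeeping exercise rather than a new argument. First I would observe that by construction the output $S$ of \Call{Centralized-Spanner}{} contains only edges of $G$: every edge of length greater than $1$ coming out of \Call{Approximate-Greedy}{} is either discarded (length $>1+\epsilon'$, or length in $(1,1+\epsilon']$ but not used on any shortest path between endpoints of a UBG edge) or replaced by a pair $(x,y)\in E$, and the edges of length at most $1$ were already in $E$. Hence $S$ is a subgraph of the unit ball graph $G$, which is the basic well-formedness condition we need before talking about stretch, weight, and degree.

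Next I would invoke the three lemmas in turn. \autoref{lem:cen-str} gives that $S$ is a $(1+\epsilon)$-spanner of $G$; \autoref{lem:cen-light} gives $\weight(S) = \mathcal{O}(1)\weight(MST)$, i.e.\ constant lightness; and \autoref{lem:cen-deg} gives that the maximum degree of $S$ is bounded — and, tracing through its proof, bounded by $(4/\epsilon')^d = \epsilon^{-\mathcal{O}(d)}$, so the bound depends only on $\epsilon$ and the doubling dimension $d$. I would similarly note that the lightness constant, inherited from \Call{Approximate-Greedy}{} (which, per the earlier discussion citing \cite{filtser2016greedy,narasimhan2007geometric}, has lightness depending only on the stretch parameter and $d$) together with $\epsilon' = \epsilon/36$, depends only on $\epsilon$ and $d$ as well. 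That establishes all three claimed properties with the claimed dependence of the constants.

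Since each of the three lemmas is already in hand, there is essentially no obstacle left; the only thing requiring a moment's care is verifying that the constants are genuinely functions of $\epsilon$ and $d$ alone — in particular that \Call{Approximate-Greedy}{}'s degree and lightness bounds, when run with parameter $\epsilon' = \epsilon/36$, do not hide a dependence on $n$ or on the aspect ratio $\Delta$, and that the packing-lemma bound in \autoref{lem:cen-deg} combines additively (not multiplicatively in a bad way) with the degree already present in the approximate-greedy spanner. I would spell out that the final degree bound is (degree of \Call{Approximate-Greedy}{} output) $+$ $(4/\epsilon')^d$, both terms depending only on $\epsilon$ and $d$, so their sum does too. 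With that remark the proof is a one-line assembly: \emph{By \autoref{lem:cen-str}, \autoref{lem:cen-light}, and \autoref{lem:cen-deg}, the graph $S$ returned by \Call{Centralized-Spanner}{}$(G,\epsilon)$ is a subgraph of $G$ that is a $(1+\epsilon)$-spanner with lightness and maximum degree bounded by constants depending only on $\epsilon$ and the doubling dimension $d$.}
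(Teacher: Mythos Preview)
Your proposal is correct and takes essentially the same approach as the paper, which simply cites Lemmas~\ref{lem:cen-str}, \ref{lem:cen-light}, and \ref{lem:cen-deg}. Your extra remarks (that $S$ is a subgraph of $G$, and that the constants depend only on $\epsilon$ and $d$) are harmless elaborations of what the paper leaves implicit.
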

\begin{proof}
Follows directly from Lemma \ref{lem:cen-str}, Lemma \ref{lem:cen-light}, and Lemma \ref{lem:cen-deg}.
\end{proof}


\section{Distributed Construction}\label{sec:dist}
In this section we propose our distributed construction for finding a $(1+\epsilon)$-spanner of a unit ball graph using only 2-hop neighborhood information. The spanner returned by our algorithm has constant bounds on its maximum degree and its lightness. This is the first light-weight distributed construction for unit ball graphs in doubling metrics, to the best of our knowledge.

In our distributed construction, we run our centralized algorithm on the 2-hop neighborhoods of a an independent set of the unit ball graph, and we prove that putting these local spanners together will achieve a spanner that possesses the desired properties.

\subsection{The algorithm}

For the distributed construction we propose Algorithm \ref{alg:localgreedy}. There is a preprocessing step of finding a maximal independent set $I$ of $G$, which can be done using the distributed algorithm of \cite{kuhn2005locality} in $\mathcal{O}(\log^*n)$ rounds. We refer to this algorithm by \Call{Maximal-Independent}{}. Then the \Call{Local-Greedy}{} subroutine is run on every vertex $w\in I$ to find a $(1+\epsilon)$-spanner $S_w$ of the 2-hop neighborhood of $w$, denoted by $\mathcal{N}^2(w)$. At the final step, every $w\in I$ sends its local spanner edges to the corresponding endpoints of every edge. Symmetrically, every vertex listens for the edges sent by the vertices in $I$ and once a message is received, it stores the edges in its local storage. In other words, the final spanner is the union of all these local spanners. We use the centralized algorithm of \autoref{sec:central} for every local neighborhood $\mathcal{N}^2(w)$ to guarantee the bounds that we need.

\begin{algorithm}[ht]
\caption{The localized greedy algorithm.}\label{alg:localgreedy}
\textbf{Input.} A unit ball graph $G(V, E)$ in a metric with doubling dimension $d$ and an $\epsilon>0$.\\
\textbf{Output.} A light-weight bounded-degree $(1+\epsilon)$-spanner of $G$.%
\begin{algorithmic}[1]%
\Procedure{Distributed-Spanner}{$G$, $\epsilon$}%
\State Find a maximal independent set $I$ of $G$ using \cite{kuhn2005locality}%
\State Run \Call{Local-Greedy}{} on the vertices of $G$%
\EndProcedure%
\Function{Local-Greedy}{vertex $w$}%
\State Retrieve $\mathcal{N}^2(w)$, the 2-hop neighborhood information of $w$%
\If {$w\in I$}%
\State $\mathcal{S}_w \gets $ \Call{Centralized-Spanner}{$\mathcal{N}^2(w)$, $\epsilon$}%
\For{$e=(u,v)$ in $\mathcal{S}_w$}%
\State Send $e$ to $u$ and $v$%
\EndFor%
\EndIf%
\State Listen to incoming edges and store them%
\EndFunction%
\end{algorithmic}
\end{algorithm}

Similar to the aforementioned greedy algorithm (\autoref{alg:greedy}), our algorithm seems very simple in the first sight. But as we see later in this section, proving its properties, particularly its lightness, is a non-trivial task.

\subsection{The analysis}
Now we show that the spanner introduced in Algorithm \ref{alg:localgreedy} possesses the desired properties. First, we show the round complexity of $\mathcal{O}(\log^* n)$.

\def\lemDisRnd{\Call{Distributed-Spanner}{} can be done in $\mathcal{O}(\log^*n)$ rounds of communication.}%
\begin{lemma}
\lemDisRnd\label{lem:dis-rnd}
\end{lemma}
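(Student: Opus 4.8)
The plan is to account for the rounds of communication in each of the three steps of \Call{Distributed-Spanner}{} separately and observe that only the first step is non-constant. First, finding a maximal independent set $I$ of the unit ball graph $G$ is done by \Call{Maximal-Independent}{}, the algorithm of Kuhn, Moscibroda, and Wattenhofer \cite{kuhn2005locality}, which runs in $\mathcal{O}(\log^* n)$ rounds in the LOCAL model on unit ball graphs in metrics of bounded doubling dimension. Since $G$ is such a graph, this step contributes $\mathcal{O}(\log^* n)$ rounds.

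Next I would argue that the remaining work — the execution of \Call{Local-Greedy}{} on every vertex — takes only a constant number of rounds. Retrieving $\mathcal{N}^2(w)$, the full 2-hop neighborhood information of $w$, requires exactly $2$ rounds of communication in the LOCAL model: in the first round every vertex sends its identifier and incident-edge information to its neighbors, and in the second round it forwards everything it has learned, so that after two rounds $w$ knows the entire subgraph induced on $\mathcal{N}^2(w)$. Once $w$ has this information, the call to \Call{Centralized-Spanner}{$\mathcal{N}^2(w)$, $\epsilon$} is a purely local computation involving no communication. Finally, sending each local spanner edge $e=(u,v)$ to its endpoints $u$ and $v$ takes one more round, since $u$ and $v$ lie in $\mathcal{N}^2(w)$ and are therefore within $2$ hops of $w$ — strictly speaking this is a constant number of rounds (at most $2$), and the listening step happens concurrently. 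Summing, \Call{Local-Greedy}{} adds only $\mathcal{O}(1)$ rounds.

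Adding the contributions of the three steps gives $\mathcal{O}(\log^* n) + \mathcal{O}(1) = \mathcal{O}(\log^* n)$ rounds overall, which is the claimed bound. I expect the only subtlety — and hence the main thing to be careful about — to be the justification that the MIS subroutine of \cite{kuhn2005locality} applies in this setting: it is designed for growth-bounded graphs, and one must invoke the packing property of metrics of bounded doubling dimension to see that a unit ball graph in such a metric is indeed growth-bounded, so that the $\mathcal{O}(\log^* n)$-round guarantee holds. Everything else is a routine constant-round argument about propagating 2-hop information in the LOCAL model.
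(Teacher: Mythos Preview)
Your proposal is correct and follows essentially the same approach as the paper's own proof: account for the $\mathcal{O}(\log^* n)$ rounds of the MIS subroutine of \cite{kuhn2005locality}, then observe that gathering the 2-hop neighborhood, running \Call{Centralized-Spanner}{} locally, and distributing the resulting edges each take $\mathcal{O}(1)$ rounds. Your additional remark about why unit ball graphs in doubling metrics are growth-bounded (so that \cite{kuhn2005locality} applies) is a welcome clarification that the paper leaves implicit.
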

\begin{proof}
The pre-processing step of finding the maximal independent set takes $\mathcal{O}(\log^*n)$ rounds of communication \cite{kuhn2005locality}. Retrieving the 2-hop neighborhood information can be done in $\mathcal{O}(1)$ rounds of communication. Computing the greedy spanner is done locally, and the edges are sent to their endpoints, which again can be done in $\mathcal{O}(1)$ rounds of communication. Overall, the algorithm requires $\mathcal{O}(\log^*n)$ rounds of communication.
\end{proof}

Next we bound the stretch-factor of the spanner.

\def\lemDisStr{The spanner returned by \Call{Distributed-Spanner}{} has a stretch factor of $1+\epsilon$.}%
\begin{lemma}
\lemDisStr\label{lem:dis-str}
\end{lemma}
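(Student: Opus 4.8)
The plan is to derive the $(1+\epsilon)$ stretch of the global spanner $S$ from the stretch guarantee already proved for the centralized construction (\autoref{lem:cen-str}), by first reducing to single edges of $G$ and then certifying each such edge inside one local neighborhood. For the reduction I would observe that it suffices to show $\dist_S(x,y)\le(1+\epsilon)\lVert xy\rVert$ for every edge $(x,y)\in E$: given an arbitrary pair $p,q\in V$, take a shortest path $p=v_0,v_1,\dots,v_k=q$ in $G$, apply the per-edge bound to each $(v_{i-1},v_i)$, and sum, using $\dist_G(p,q)=\sum_{i=1}^{k}\lVert v_{i-1}v_i\rVert$, to conclude $\dist_S(p,q)\le(1+\epsilon)\dist_G(p,q)$.

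For the per-edge bound, fix $(x,y)\in E$, so $\lVert xy\rVert\le1$. Since $I$ is a maximal independent set of $G$ it is a dominating set, so there is $w\in I$ with $w=x$ or $\lVert wx\rVert\le1$; in either case $x$ lies within one hop of $w$ in $G$, and since $\lVert xy\rVert\le1$ the vertex $y$ lies within one hop of $x$, hence within two hops of $w$. Thus $x,y\in\mathcal{N}^2(w)$, and because $\lVert xy\rVert\le1$ the pair $(x,y)$ is an edge of the unit ball graph $\mathcal{N}^2(w)$. The subroutine \Call{Local-Greedy}{} run at $w$ sets $S_w=\Call{Centralized-Spanner}{$\mathcal{N}^2(w)$, $\epsilon$}$, which by \autoref{lem:cen-str} is a $(1+\epsilon)$-spanner of $\mathcal{N}^2(w)$; since $(x,y)$ is an edge of $\mathcal{N}^2(w)$ and in a metric no path between two points is shorter than the direct edge, $\dist_{S_w}(x,y)\le(1+\epsilon)\dist_{\mathcal{N}^2(w)}(x,y)=(1+\epsilon)\lVert xy\rVert$. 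Finally, every edge of $S_w$ is sent to both its endpoints and stored there, so $S_w\subseteq S$, and therefore $\dist_S(x,y)\le\dist_{S_w}(x,y)\le(1+\epsilon)\lVert xy\rVert$, which completes the per-edge bound and hence the lemma.

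The only point I would treat with care is that \autoref{lem:cen-str} truly applies to the restricted input $\mathcal{N}^2(w)$ — specifically, that the replacement edges \Call{Centralized-Spanner}{} needs while processing $\mathcal{N}^2(w)$ are present inside $\mathcal{N}^2(w)$. This is a check rather than an obstacle: for any spanner edge of length in $(1,1+\epsilon']$ lying on a shortest path between the endpoints of an edge $(x,y)$ of $\mathcal{N}^2(w)$, the pair $(x,y)$ itself is a legal replacement and its endpoints belong to $\mathcal{N}^2(w)$, so the centralized stretch analysis goes through verbatim on the local graph. Beyond this, nothing is subtle: passing to the union of the local spanners costs nothing for stretch because $S_w\subseteq S$ and shortest-path distance is monotone under adding edges, and no degree or weight considerations enter at this stage.
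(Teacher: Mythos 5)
Your proof is correct and follows essentially the same route as the paper's: reduce to single edges of $G$ via concatenation along shortest paths, place each edge $(x,y)$ inside $\mathcal{N}^2(w)$ for some $w\in I$ using the domination property of the maximal independent set, invoke the centralized stretch guarantee on the local spanner $\mathcal{S}_w$, and use $\mathcal{S}_w\subseteq S$. Your extra check that the replacement edges needed by \Call{Centralized-Spanner}{} are available within $\mathcal{N}^2(w)$ is a sensible verification the paper leaves implicit, but it does not change the argument.
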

\begin{proof}
From \autoref{sec:central} we know that $\mathcal{S}_w$ is a light-weight bounded-degree $(1+\epsilon)$-spanner of $\mathcal{N}^2(w)$. Let $u,v\in V$ be chosen arbitrarily. We need to make sure there is a path of length at most $(1+\epsilon)d_G(u,v)$ between $u$ and $v$ in the output.

First we prove this for the case that $(u,v)\in E$. So let $e=(u,v)\in E$. Then $u$ is either in $I$ or has a neighbor in $I$, according to choice of $I$. In any case, the edge $e$ belongs to $\mathcal{N}^2(w)$ for some $w\in I$, which means that there is a path $P\subset \mathcal{S}_w$ of length at most $(1+\epsilon)|e|$ that connects $u$ and $v$. The edges of $P$ are all included in the final spanner according to the algorithm, so the output includes this path between $u$ and $v$, which has a length at most $(1+\epsilon)|e|$ and so the distance inequality is satisfied.

If $(u,v)\notin E$, we can take the shortest path $u=p_0,p_1,\cdots,p_k=v$ between them in $G$ and append the corresponding $(1+\epsilon)$-approximate paths $P_0,P_1,\cdots,P_{k-1}$ of the edges $p_0p_1, p_1p_2,\cdots, p_{k-1}p_k$, respectively, to get a $(1+\epsilon)$-approximate path for $p_0p_1\cdots p_k$. This implies that the stretch factor of the output is indeed $1+\epsilon$.
\end{proof}

Now we bound the maximum degree of the spanner.

\def\lemDisDeg{The spanner returned by \Call{Distributed-Spanner}{} has a bounded degree.}%
\begin{lemma}
\lemDisDeg\label{lem:dis-deg}
\end{lemma}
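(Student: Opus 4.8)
The plan is to bound the degree of each vertex $x$ in the final spanner by controlling two contributions separately: the edges that $x$ receives from a single local spanner $\mathcal{S}_w$, and the number of distinct maximal-independent-set vertices $w\in I$ that can contribute any edge incident to $x$. Since each $\mathcal{S}_w$ is the output of \Call{Centralized-Spanner}{} on $\mathcal{N}^2(w)$, \autoref{lem:cen-deg} already gives a uniform constant bound $c(\epsilon,d)=\epsilon^{-\mathcal{O}(d)}$ on the degree of $x$ within $\mathcal{S}_w$. So the total degree of $x$ in the union is at most $c(\epsilon,d)$ times the number of relevant $w$'s, and the whole argument reduces to a packing bound on that number.

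First I would observe that if $w\in I$ contributes an edge incident to $x$, then $x\in\mathcal{N}^2(w)$, hence $\lVert xw\rVert\le 2$. Thus every such $w$ lies in the ball of radius $2$ around $x$. Next, since $I$ is an independent set of the unit ball graph $G$, any two distinct $w,w'\in I$ satisfy $\lVert ww'\rVert>1$. Now apply the Packing Property (the packing lemma in the excerpt) with $R=2$ and $r=1$: the number of points of $I$ in the radius-$2$ ball around $x$ is at most $\left(\tfrac{4\cdot 2}{1}\right)^d=8^d$. Therefore at most $8^d$ local spanners can send edges to $x$, and the degree of $x$ in the output is at most $8^d\cdot c(\epsilon,d)=\epsilon^{-\mathcal{O}(d)}$, a constant depending only on $\epsilon$ and $d$.

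I expect the only subtlety — and the main thing to state carefully rather than a genuine obstacle — is the separation of concerns: one must note that an edge incident to $x$ in the final spanner is added \emph{as a whole} by some $w\in I$ (the algorithm has each $w$ run the centralized construction and broadcast its edges, so every surviving edge is an edge of some $\mathcal{S}_w$), so no edge incident to $x$ escapes being counted by the product bound above. A second point worth a sentence: the replacement step inside \Call{Centralized-Spanner}{} may add an edge $(x,y)$ where $y$ is at distance up to $1$ from an original endpoint, but $\mathcal{N}^2(w)$ already contains all of $w$'s $2$-hop neighborhood, and the replacement endpoints are within distance $\epsilon'\le 1$ of vertices of the local neighborhood, so they still lie in $\mathcal{N}^2(w)$ and the centralized degree bound of \autoref{lem:cen-deg} applies verbatim to $\mathcal{S}_w$. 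With these remarks in place the bound follows, and combining it with \autoref{lem:dis-rnd} and \autoref{lem:dis-str} proves \autoref{thm:dis}.
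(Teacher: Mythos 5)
Your proof is correct and follows essentially the same route as the paper: a packing argument showing that $x$ lies in $\mathcal{N}^2(w)$ for at most $8^d$ independent-set vertices $w$ (ball of radius $2$, pairwise separation $>1$), combined with the per-$\mathcal{S}_w$ degree bound from Lemma \ref{lem:cen-deg}. The extra remarks about replacement endpoints are fine but not needed, since \Call{Centralized-Spanner}{} is invoked on the vertex set $\mathcal{N}^2(w)$ and so all its edges, replacements included, already have both endpoints there.
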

\begin{proof}
First we use the packing lemma to prove that any vertex $v\in V$ appears at most a constant number of times in different neighborhoods, $\mathcal{N}^2(w)$ for $w\in I$. Because $v\in \mathcal{N}^2(w)$ implies that $\lVert vw\rVert \leq 2$, any vertex $w\in I$ such that $v\in \mathcal{N}^2(w)$ should be contained in the ball of radius 2 around $v$. But all such $w$s are chosen from $I$, which is an independent set of $G$, so the distance between every two such vertex is at least 1. By the packing property, the maximum number of such vertices would be $8^d = \mathcal{O}(1)$.

Now that every vertex appears in at most in $8^d$ different sets $\mathcal{N}^2(w)$, for $w\in I$, and from \autoref{sec:central} we already knew that every vertex has bounded degree in any of $\mathcal{S}_w$s, it immediately follows that every vertex has bounded degree in the final spanner.
\end{proof}

In order to bound the lightness of the output, we assume that $\epsilon\leq 1$ and we make a few comparisons. First, for any $w\in I$ we compare the weight of $\mathcal{S}_w$ to the weight of the minimum spanning tree on $\mathcal{N}^2(w)$. Then we compare the weight of the minimum spanning tree on $\mathcal{N}^2(w)$ to the weight of the minimum Steiner tree on $\mathcal{N}^3(w)$, where the required vertices are $\mathcal{N}^2(w)$ and 3-hop vertices are optional. Finally, we compare the weight of this minimum Steiner tree to the weight of the induced subgraph of \Call{Centralized-Spanner}{$G$, $\epsilon$} on the subset of vertices $\mathcal{N}^3(w)$, which later implies that the overall weight of $\mathcal{S}_w$s is bounded by a constant factor of the weight of the minimum spanning tree on $G$.

Our first claim is that the weight of $\mathcal{S}_w$ is bounded by a constant factor of the weight of the MST on $\mathcal{N}^2(w)$.

\begin{corollary}
$\weight(\mathcal{S}_w) = \mathcal{O}(1)\weight(MST(\mathcal{N}^2(w)))$\label{cor:dis-cmp1}
\end{corollary}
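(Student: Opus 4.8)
The plan is to obtain this as an immediate instantiation of \autoref{lem:cen-light}, since by construction $\mathcal{S}_w$ is exactly the output of \Call{Centralized-Spanner}{$\mathcal{N}^2(w)$,$\epsilon$}. The only thing that needs checking is that $\mathcal{N}^2(w)$, regarded as an input to \Call{Centralized-Spanner}{}, meets the hypotheses of that lemma: it should be a unit ball graph in a metric of bounded doubling dimension, and it should be connected so that $MST(\mathcal{N}^2(w))$ is well defined.

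For the first point I would note that the subgraph of $G$ induced on the vertex set $\{v : \dist_G(w,v)\le 2\}$ is precisely the unit ball graph on that point set equipped with the ambient metric restricted to those points, and that a subspace of a metric space of doubling dimension $d$ still has bounded doubling dimension (at most $2d$). For the second point, $\mathcal{N}^2(w)$ is connected: $w$ is adjacent to each of its $1$-hop neighbors and every $2$-hop vertex is adjacent to some $1$-hop vertex, and all of these edges lie in the induced subgraph, so a minimum spanning tree exists. With these two observations, \autoref{lem:cen-light} applied to the unit ball graph $\mathcal{N}^2(w)$ yields $\weight(\mathcal{S}_w)=\mathcal{O}(1)\weight(MST(\mathcal{N}^2(w)))$ directly. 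Tracing the proof of \autoref{lem:cen-light}, the hidden constant is the lightness constant of \Call{Approximate-Greedy}{} (the refinement step only deletes or shortens edges and hence never increases weight), which depends only on $\epsilon$ and the doubling dimension; in particular it is the same for every $w\in I$, a uniformity that will be used when these per-neighborhood bounds are summed later.

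There is no genuine obstacle here — the corollary is a straightforward specialization of an already-proved lemma. The only point warranting (routine) care is confirming that passing to the subset $\mathcal{N}^2(w)$ neither destroys connectivity nor inflates the doubling dimension beyond a constant; once these are granted the statement follows with no further computation.
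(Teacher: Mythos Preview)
Your proposal is correct and matches the paper's approach exactly: the paper's proof is a single line, ``Follows from the properties of the centralized algorithm in \autoref{sec:central},'' which is precisely the invocation of \autoref{lem:cen-light} you describe. Your added remarks on connectivity of $\mathcal{N}^2(w)$ and bounded doubling dimension of the restricted metric are sound and make explicit what the paper leaves implicit.
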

\begin{proof}
Follows from the properties of the centralized algorithm in \autoref{sec:central}.
\end{proof}

Next we compare $\weight(MST(\mathcal{N}^2(w)))$ to the weight of the minimum Steiner tree of $\mathcal{N}^3(w)$ on the required vertices $\mathcal{N}^2(w)$.

\def\lemDisCmpTwo{Define $\mathcal{T}$ to be the optimal Steiner tree on the set of vertices $\mathcal{N}^3(w)$, where only vertices in $\mathcal{N}^2(w)$ are required and the rest of them are optional. Then
$$\weight(MST(\mathcal{N}^2(w)))\leq 2\weight(\mathcal{T})$$}%
\begin{lemma}%
\label{lem:dis-cmp2}%
\lemDisCmpTwo
\end{lemma}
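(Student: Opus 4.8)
The plan is to prove the classical fact that the minimum spanning tree of a vertex set is within a factor of $2$ of any Steiner tree on those vertices (allowing extra Steiner points), applied here to the required set $\mathcal{N}^2(w)$ inside the larger point set $\mathcal{N}^3(w)$. Concretely, let $\mathcal{T}$ be the optimal such Steiner tree; it is a tree whose leaves are all in $\mathcal{N}^2(w)$ (a leaf that is an optional vertex could be pruned, only decreasing weight, so we may assume this), and whose internal nodes may include optional vertices of $\mathcal{N}^3(w)$.

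First I would perform a depth-first (Euler tour) traversal of $\mathcal{T}$, which traverses each edge of $\mathcal{T}$ exactly twice and hence has total length $2\weight(\mathcal{T})$. This tour visits all vertices of $\mathcal{T}$, in particular all of $\mathcal{N}^2(w)$. Next I would extract from the tour the cyclic order in which the required vertices $\mathcal{N}^2(w)$ are first encountered; shortcutting the tour to visit only these vertices in that order yields a closed walk on $\mathcal{N}^2(w)$ whose length, by the triangle inequality (valid in any metric space, in particular our doubling metric), is at most $2\weight(\mathcal{T})$. Deleting one edge of this cycle produces a spanning path — hence a spanning tree — on $\mathcal{N}^2(w)$ of weight at most $2\weight(\mathcal{T})$. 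Since $MST(\mathcal{N}^2(w))$ is the minimum-weight spanning tree on that vertex set, $\weight(MST(\mathcal{N}^2(w)))\leq 2\weight(\mathcal{T})$, which is exactly the claim.

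There is no serious obstacle here; the only point requiring a little care is the role of the Steiner (optional) points. The triangle-inequality shortcutting step is what allows us to bypass the optional internal nodes of $\mathcal{T}$ and land on a walk using only vertices we are allowed to use in an MST of $\mathcal{N}^2(w)$. One should also note that the edges of the shortcut walk are edges between points of $\mathcal{N}^2(w)$, all of which lie at pairwise distance at most $4$, so they are genuine edges of the metric we are working with; since $MST(\mathcal{N}^2(w))$ is defined on the complete graph on $\mathcal{N}^2(w)$ this causes no issue. I would therefore present this as a short, essentially self-contained argument, citing the standard "double the tree, shortcut via triangle inequality" technique (the same idea underlying the classical $2$-approximation for metric Steiner tree and TSP).
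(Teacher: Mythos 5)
Your proposal is correct and is essentially the same argument the paper gives: a depth-first (Euler) traversal of $\mathcal{T}$ of total length $2\weight(\mathcal{T})$, shortcut via the triangle inequality onto the required vertices $\mathcal{N}^2(w)$ to obtain a spanning walk, hence a spanning tree, of weight at most $2\weight(\mathcal{T})$. The paper states this more tersely as the well-known $2$-approximation for metric Steiner tree; your added care about pruning optional leaves and the validity of the shortcut edges is fine but not a departure.
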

\begin{proof}
This is a well-known fact that implies a 2-approximation for minimum Steiner tree problem. The idea is if we run a full DFS on the vertices of $\mathcal{T}$ and we write every vertex once we open and once we close it, then we get a cycle whose shortcut on optional edges will form a path on the required vertices. The weight of the cycle is at least $\weight(MST(\mathcal{N}^2(w)))$ and at most $2\weight(\mathcal{T})$, which proves the result.
\end{proof}

We then compare the weight of $\mathcal{T}$ to the weight of induced subgraph of \Call{Centralized-Spanner}{$G$, $\epsilon$} on the subset of vertices $\mathcal{N}^3(w)$. The main observation here is that when $\epsilon\leq 1$ the induced subgraph of the centralized spanner on $\mathcal{N}^3(w)$ would be a feasible solution to the minimum Steiner tree problem on $\mathcal{N}^3(w)$, with the required vertices being the vertices in $\mathcal{N}^2(w)$. This will imply that the weight of the induced subgraph is at least equal to the weight of the minimum Steiner tree.

\def\lemDisCmpThree{Let $\mathcal{S}^*$ be the output of \Call{Centralized-Spanner}{$G$, $\epsilon$} and let $\mathcal{S}^*_w$ be the induced subgraph of $\mathcal{S}^*$ on $\mathcal{N}^3(w)$. Then
$$\weight(\mathcal{T}) \leq \weight(\mathcal{S}^*_w)$$}%
\begin{lemma}%
\label{lem:dis-cmp3}%
\lemDisCmpThree
\end{lemma}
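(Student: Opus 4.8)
The plan is to realize $\mathcal{S}^*_w$ as a \emph{feasible} solution to the Steiner-tree instance that defines $\mathcal{T}$. Concretely, it suffices to exhibit a connected subgraph $H\subseteq\mathcal{S}^*_w$ that contains every required vertex, i.e.\ all of $\mathcal{N}^2(w)$: any spanning tree of such an $H$ is then a valid Steiner tree for the terminals $\mathcal{N}^2(w)$ using only vertices of $\mathcal{N}^3(w)$, so minimality of $\mathcal{T}$ gives $\weight(\mathcal{T})\le\weight(H)\le\weight(\mathcal{S}^*_w)$, which is the claim.

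To build $H$, fix a BFS tree $T$ of the subgraph of $G$ induced by $\mathcal{N}^2(w)$, rooted at $w$; this induced graph is connected, and every vertex of $\mathcal{N}^2(w)$ lies at depth $0$, $1$, or $2$ in $T$. For each edge $(a,b)$ of $T$ with $b$ the parent of $a$ we have $(a,b)\in E(G)$, so the $(1+\epsilon)$-spanner property yields a shortest $a$--$b$ path $Q_{ab}$ in $\mathcal{S}^*$ with $\weight(Q_{ab})\le(1+\epsilon)\lVert ab\rVert\le 1+\epsilon\le 2$, where we use the standing assumption $\epsilon\le 1$. Let $H$ be the union of these paths $Q_{ab}$ over all edges of $T$. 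Then $H$ is connected (it is $T$ with each edge replaced by a path between the same endpoints) and contains all of $\mathcal{N}^2(w)$, so the only remaining point is to verify that every vertex appearing on some $Q_{ab}$ lies in $\mathcal{N}^3(w)$, which is exactly what makes $H\subseteq\mathcal{S}^*_w$.

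This last containment is the technical heart of the lemma and the sole place $\epsilon\le 1$ enters. Let $z$ be a vertex on $Q_{ab}$. If $a$ has depth $1$ in $T$, then $b=w$ and $\lVert aw\rVert\le 1$, so $\lVert zw\rVert\le\weight(Q_{aw})\le(1+\epsilon)\lVert aw\rVert\le 2$, since a metric distance is at most the length of any connecting path. If $a$ has depth $2$, then $b$ is a neighbor of $w$, so $\lVert ab\rVert\le 1$ and $\lVert bw\rVert\le 1$; hence $\lVert zb\rVert\le\weight(Q_{ab})\le(1+\epsilon)\lVert ab\rVert\le 2$, and by the triangle inequality $\lVert zw\rVert\le\lVert zb\rVert+\lVert bw\rVert\le 3$. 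In either case $z\in\mathcal{N}^3(w)$, so $H\subseteq\mathcal{S}^*_w$ and the lemma follows. I expect this step --- controlling how far the spanner's replacement detour for a short edge of $G$ can wander from $w$ --- to be the main (and essentially the only) obstacle; everything else is routine. Together with Corollary~\ref{cor:dis-cmp1}, Lemma~\ref{lem:dis-cmp2}, and a packing bound on the number of $w\in I$ whose third neighborhood contains a fixed edge of $\mathcal{S}^*$, this lemma closes the lightness analysis of \Call{Distributed-Spanner}{}.
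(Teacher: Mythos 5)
Your overall strategy matches the paper's: exhibit a connected subgraph of $\mathcal{S}^*_w$ containing all of $\mathcal{N}^2(w)$ by concatenating the $(1+\epsilon)$-spanner paths for the $G$-edges of a shallow spanning tree of $\mathcal{N}^2(w)$, then invoke feasibility for the Steiner instance. However, the step you correctly identify as the technical heart is where your argument breaks. From $\lVert zw\rVert\le 3$ you conclude $z\in\mathcal{N}^3(w)$, but $\mathcal{N}^3(w)$ is the $3$-\emph{hop} neighborhood of $w$ in the unit ball graph, not the metric ball of radius $3$; membership in $\mathcal{N}^k(w)$ implies $\lVert zw\rVert\le k$, and the converse fails. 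A vertex $z$ on the path $Q_{ab}$ is reachable from $w$ in $G$ only via the (possibly many short) edges of that path, so bounding its metric distance to $w$ says nothing directly about its hop distance.

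The containment is nevertheless true, and the paper proves it by contradiction using the hop structure rather than the metric radius: if some $z$ on $Q_{ab}$ were outside $\mathcal{N}^3(w)$, then $z$ could not be a $G$-neighbor of $a$ (which lies in $\mathcal{N}^2(w)$) nor of $b$ (which lies in $\mathcal{N}^1(w)\cup\{w\}$), since either adjacency would place $z$ within $3$ hops of $w$. Hence $\lVert az\rVert>1$ and $\lVert zb\rVert>1$, so $\weight(Q_{ab})\ge\lVert az\rVert+\lVert zb\rVert>2\ge(1+\epsilon)\lVert ab\rVert$ when $\epsilon\le1$, contradicting the choice of $Q_{ab}$. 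Substituting this argument for your last step repairs the proof; everything else in your write-up is sound.
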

\begin{proof}
We prove that for $\epsilon\leq 1$, $\mathcal{S}^*_w$ forms a forest that connects all the vertices in $\mathcal{N}^2(w)$ in a single component. So $\mathcal{S}^*_w$ is a feasible solution to the minimum Steiner tree problem on the set of vertices $\mathcal{N}^3(w)$ with required vertices being $\mathcal{N}^2(w)$. Thus $\weight(\mathcal{T}) \leq \weight(\mathcal{S}^*_w)$.

Now we just need to prove that the vertices in $\mathcal{N}^2(w)$ are connected in $\mathcal{S}^*_w$. Let $u$ be an $i$-hop neighbor of $w$ and $v$ be an $i+1$-hop neighbor of $w$ for some $w\in I$ and $i=0,1$. Assume that $(u,v)\in E$. It is enough to prove that $u$ and $v$ are connected in $\mathcal{S}^*_w$. In order to do so, we observe that there is a path of length at most $(1+\epsilon)\lVert uv\rVert$ between $u$ and $v$ in $\mathcal{S}^*$. We show that this path is contained in $\mathcal{N}^3(w)$ and we complete the proof in this way, because $\weight(\mathcal{S}^*_w)$ is nothing but the induced subgraph of $\mathcal{S}^*$ on $\mathcal{N}^3(w)$.

Assume, on the contrary, that there is a vertex $x\notin \mathcal{N}^3(w)$ on the $(1+\epsilon)$-path between $u$ and $v$. This means that $x$ is not a 1-hop neighbor of any of $u$ and $v$, because otherwise $x$ would have been in $\mathcal{N}^3(w)$. So $\lVert ux\rVert>1$ and $\lVert vx\rVert>1$. Thus the length of the path would be at least $\lVert ux\rVert+\lVert xv\rVert>2\geq (1+\epsilon)\geq (1+\epsilon)\lVert uv\rVert$ which is a contradiction.
\end{proof}

This lemma concludes our sequence of comparisons. By putting together what we proved so far, we have
\begin{proposition}%
\label{prop:lightness}%
The spanner returned by \Call{Distributed-Spanner}{} has a weight of $\mathcal{O}(1)\weight(MST)$.\label{prp:dis-light}
\end{proposition}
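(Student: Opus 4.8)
The plan is to chain together the three comparisons already established, obtaining for every center $w\in I$ a bound of the form $\weight(\mathcal{S}_w)=\mathcal{O}(1)\,\weight(\mathcal{S}^*_w)$, where $\mathcal{S}^*$ is the output of \Call{Centralized-Spanner}{$G$, $\epsilon$} and $\mathcal{S}^*_w$ is its subgraph induced on $\mathcal{N}^3(w)$, and then to sum these bounds over $I$ while controlling how often a single edge of $\mathcal{S}^*$ is double-counted. Concretely, fixing $w\in I$ and concatenating Corollary~\ref{cor:dis-cmp1}, Lemma~\ref{lem:dis-cmp2}, and Lemma~\ref{lem:dis-cmp3} (the last using the standing assumption $\epsilon\leq 1$) gives
$$\weight(\mathcal{S}_w)=\mathcal{O}(1)\,\weight(MST(\mathcal{N}^2(w)))\leq \mathcal{O}(1)\,\weight(\mathcal{T})\leq \mathcal{O}(1)\,\weight(\mathcal{S}^*_w),$$
with every hidden constant depending only on $\epsilon$ and $d$, uniformly in $w$ and $n$.

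Since the spanner produced by \Call{Distributed-Spanner}{} is exactly $\bigcup_{w\in I}\mathcal{S}_w$, its weight is at most $\sum_{w\in I}\weight(\mathcal{S}_w)$, which by the previous inequality is $\mathcal{O}(1)\sum_{w\in I}\weight(\mathcal{S}^*_w)$. Next I would bound this sum by $\mathcal{O}(1)\,\weight(\mathcal{S}^*)$ by a packing argument on $I$: an edge $(a,b)$ of $\mathcal{S}^*$ appears in $\mathcal{S}^*_w$ only if $a\in\mathcal{N}^3(w)$, i.e.\ $\lVert aw\rVert\leq 3$, so the centers $w$ that count this edge all lie in a ball of radius $3$ around $a$; but they are drawn from the independent set $I$ and hence are pairwise at distance at least $1$, so by the packing property there are at most $(4\cdot 3/1)^d=12^d=\mathcal{O}(1)$ of them. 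Summing edge contributions yields $\sum_{w\in I}\weight(\mathcal{S}^*_w)\leq 12^d\,\weight(\mathcal{S}^*)$. Finally, Lemma~\ref{lem:cen-light} gives $\weight(\mathcal{S}^*)=\mathcal{O}(1)\,\weight(MST(G))$, and composing all the inequalities produces $\weight\!\left(\bigcup_{w\in I}\mathcal{S}_w\right)=\mathcal{O}(1)\,\weight(MST(G))$, as claimed.

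The step I expect to carry the real content is the overlap bound: this is the only place where the many local spanners $\mathcal{S}_w$ — which can share vertices, and therefore edges of the global $\mathcal{S}^*$, with many neighbors — are disentangled, and it is precisely here that the doubling dimension re-enters, through the packing bound on the independent set $I$. The remaining work is bookkeeping; the one thing to watch is that the constant in the per-center comparison is genuinely independent of $w$ and $n$ (which it is, since each of Corollary~\ref{cor:dis-cmp1}, Lemma~\ref{lem:dis-cmp2}, and Lemma~\ref{lem:dis-cmp3} has a constant of that kind), and that passing to the union can only decrease weight relative to the sum of weights.
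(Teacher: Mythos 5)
Your proof is correct and follows essentially the same route as the paper's: chain Corollary~\ref{cor:dis-cmp1}, Lemma~\ref{lem:dis-cmp2}, and Lemma~\ref{lem:dis-cmp3} to get $\weight(\mathcal{S}_w)=\mathcal{O}(1)\weight(\mathcal{S}^*_w)$, sum over $I$, and use a packing argument to bound the multiplicity of each edge of $\mathcal{S}^*$ in the sum. Your version is in fact slightly more explicit than the paper's, which only asserts the $\mathcal{O}(1)$ repetition bound without spelling out the $12^d$ packing computation.
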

\begin{proof}
By Corollary \ref{cor:dis-cmp1}, Lemma \ref{lem:dis-cmp2}, and Lemma \ref{lem:dis-cmp3},
$$\weight(\mathcal{S}_w) = \mathcal{O}(1)\weight(\mathcal{S}^*_w)$$
Summing up together these inequalities for $w\in I$,
$$\weight(\text{output}) = \mathcal{O}(1)\sum_{w\in I}\weight(\mathcal{S}^*_w)$$
But we recall that every vertex, and hence every edge of $\mathcal{S}^*$, is repeated $\mathcal{O}(1)$ times in the summation above, so
$$\weight(\text{output}) = \mathcal{O}(1)\weight(\mathcal{S}^*) = \mathcal{O}(1)\weight(MST(G))$$
\end{proof}

Therefore we have all the ingredients to prove \autoref{thm:dis}.
\begin{theorem}[Distributed Spanner]
\thmDistributed\label{thm:dis}
\end{theorem}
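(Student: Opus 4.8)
The plan is to obtain \autoref{thm:dis} by combining the four facts that the preceding lemmas isolate, so at this point the proof is just a matter of checking that they jointly cover the round bound and the three required guarantees. First I would take the running time from \autoref{lem:dis-rnd}: the only super-constant cost is the preprocessing call to \Call{Maximal-Independent}{}, which is $\mathcal{O}(\log^* n)$ by \cite{kuhn2005locality}, while collecting each $\mathcal{N}^2(w)$, running \Call{Centralized-Spanner}{} locally, and sending every output edge to its two endpoints each take $\mathcal{O}(1)$ rounds. Next I would invoke \autoref{lem:dis-str} for the stretch: every $\mathcal{S}_w$ is a $(1+\epsilon)$-spanner of $\mathcal{N}^2(w)$, every edge of $G$ lies in some $\mathcal{N}^2(w)$ with $w\in I$ (since each endpoint is in $I$ or adjacent to $I$), so each edge of $G$ keeps a $(1+\epsilon)$-approximating path in the union, and concatenating such paths along a $G$-shortest path handles arbitrary vertex pairs.

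For the degree bound I would cite \autoref{lem:dis-deg}: the packing property limits to $8^d$ the independent-set vertices $w$ with a given $v\in\mathcal{N}^2(w)$, and \autoref{lem:cen-deg} already caps $v$'s degree inside each $\mathcal{S}_w$ by $\epsilon^{-\mathcal{O}(d)}$, so $v$'s degree in the output is $\epsilon^{-\mathcal{O}(d)}$. The lightness is precisely Proposition~\ref{prop:lightness}, which I would use as a black box; recall it chains Corollary~\ref{cor:dis-cmp1}, \autoref{lem:dis-cmp2}, and \autoref{lem:dis-cmp3} into $\weight(\mathcal{S}_w)=\mathcal{O}(1)\weight(\mathcal{S}^*_w)$, sums over $w\in I$, and then uses the $\mathcal{O}(1)$-fold overlap of the induced subgraphs $\mathcal{S}^*_w$ of the centralized spanner to get $\weight(\text{output})=\mathcal{O}(1)\weight(\mathcal{S}^*)=\mathcal{O}(1)\weight(MST(G))$. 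Since every constant appearing above depends only on $\epsilon$ and $d$, so does the combined constant, which finishes the theorem.

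The genuinely hard part is not in this assembly but in Proposition~\ref{prop:lightness}, and it is worth flagging why. Because the local spanners $\mathcal{S}_w$ are computed independently, there is no global greedy order to lean on, so the weight paid in each neighborhood has to be charged, region by region, against a single global object ($\mathcal{S}^*$) without double counting; the detour through a minimum Steiner tree on $\mathcal{N}^3(w)$ is what absorbs the boundary loss from cutting $\mathcal{S}^*$ down to $\mathcal{N}^3(w)$, and the hypothesis $\epsilon\le 1$ is exactly what keeps the $(1+\epsilon)$-approximating path of an edge between $\mathcal{N}^2(w)$ and its frontier from leaving $\mathcal{N}^3(w)$. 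With that machinery already in place, the proof of \autoref{thm:dis} itself is a one-line corollary of \autoref{lem:dis-rnd}, \autoref{lem:dis-str}, \autoref{lem:dis-deg}, and Proposition~\ref{prop:lightness}.
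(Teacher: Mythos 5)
Your proof is correct and is exactly the paper's proof: the theorem is obtained by combining Lemma \ref{lem:dis-rnd} (round complexity), Lemma \ref{lem:dis-str} (stretch), Lemma \ref{lem:dis-deg} (degree), and Proposition \ref{prp:dis-light} (lightness), with all constants depending only on $\epsilon$ and the doubling dimension. Your added commentary on how each ingredient works, including the Steiner-tree charging argument behind the lightness bound, accurately reflects the supporting proofs in the paper.
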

\begin{proof}
It directly follows from Lemma \ref{lem:dis-rnd}, Lemma \ref{lem:dis-str}, Lemma \ref{lem:dis-deg}, and Proposition \ref{prp:dis-light}.
\end{proof}


\section{Adjustments for the CONGEST Model} \label{sec:congest}
In this section we study the problem of finding a bounded-degree $(1+\epsilon)$-spanner in the CONGEST model of computation, for a point set that is located in a doubling metric space. In the CONGEST model, every node can send a message of bounded size to every other node in a single round of communication. This makes it hard to gather any global information about the graph.

The maximal independent set algorithm of \cite{kuhn2005locality} still works in $\mathcal{O}(\log^*n)$ rounds of communication in the CONGEST model. But our proposed distributed algorithm (Algorithm \ref{alg:localgreedy}) needs to gather 2-hop neighborhood information of every center in the MIS, which requires $\mathcal{O}(D)$ rounds in the CONGEST model, where $D$ is the maximum degree of a vertex in the unit ball graph. The rest of the algorithm is performed locally and the number edges sent to every neighbor in the end is bounded by a constant, so the remaining of the algorithm only requires a constant number of rounds.

It is natural to ask whether our algorithm can be adapted to the CONGEST model, and if it requires more communication rounds compared to the LOCAL model. In this section we show how to modify our algorithm to work in the CONGEST model, and surprisingly, have no asymptotic change on its number of communication rounds.

As we mentioned earlier, the only step of our algorithm that requires more than constant rounds of communication is the aggregation of the 2-hop neighborhood information for every center in the MIS. We passed the 2-hop neighborhoods to our centralized algorithm to find an asymptotically optimal spanner on them, which was later distributed among the vertices in the neighborhood to form the final spanner. Here, for our construction in the CONGEST model, we directly address the problem of finding an asymptotically optimal spanner for the 2-hop neighborhoods, without the need to access all of the points in those neighborhoods.

Let $w\in I$ be a center in the maximal independent set. We partition the edges of the UBG on $\mathcal{N}^2(w)$ into two sets, depending on whether their length is larger than $1/2$ or not. We aim to find asymptotically optimal spanners for each partition separately. We use the notation $G_{\leq\alpha}$ to refer to the subgraph of the unit ball graph that consists of edges of length at most $\alpha$. We similarly define $G_{>\alpha}$. Therefore, we can refer to the subgraphs induced by the two partitions by $G_{\leq1/2}$ and $G_{>1/2}$.

First, we show that in constant rounds of communication, we can find a covering of the points in $\mathcal{N}^2(w)$ with at most a constant number of balls of radius $1/2$. The existence of such covering trivially follows from the definition of a doubling metric space, but finding such covering in the distributed setting is not trivial. Therefore, we introduce the following procedure: Every center $v\in\mathcal{N}^1(w)$ (including $w$ itself) finds a maximal independent set $I_{1/4}(v)$ of the vertices $\mathcal{N}^1(v)$ in $G_{\leq1/4}$, and sends it to $w$, all centers at the same time. Recall that $\mathcal{N}^1(v)$ is the set of neighbors of $v$ in the UBG, and a maximal independent set in $G_{\leq1/4}$ is simply a maximal set of vertices where the pair-wise distance of each two vertex is at least $1/4$. Finding this maximal independent set for each $v$ can be easily done using a (centralized) greedy algorithm, and the size of such maximal independent set would be bounded by a constant according to the packing lemma. Therefore, this step can be done in constant number of rounds. Afterwards, $w$ calculates a maximal independent set $\mathcal{I}(w)$ of the vertices $\cup_{v\in\mathcal{N}^1(w)}I_{1/4}(v)$ in $G_{\leq1/4}$. We show that the centers in $\mathcal{I}$ satisfy our desired properties.

\def\lemCovering{The union of the balls of radius $1/2$ around the centers in $\mathcal{I}(w)$ cover $\mathcal{N}^2(w)$. Furthermore, the size of $\mathcal{I}(w)$ is bounded by a constant.}%
\begin{lemma}%
\label{lem:covering}%
\lemCovering
\end{lemma}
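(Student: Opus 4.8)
The plan is to prove the two assertions separately, each as a short consequence of the packing lemma together with the maximality of the independent sets involved.

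For the size bound, I would first note that every point of $\mathcal{I}(w)$ lies in $\bigcup_{v\in\mathcal{N}^1(w)}I_{1/4}(v)$, hence belongs to $\mathcal{N}^1(v)$ for some $v\in\mathcal{N}^1(w)$; such a point is then at distance at most $1$ from $v$ and therefore at most $2$ from $w$, so $\mathcal{I}(w)$ is contained in the ball of radius $2$ centered at $w$. Since $\mathcal{I}(w)$ is an independent set of $G_{\le1/4}$, its points are pairwise at distance at least $1/4$, so the packing lemma gives $|\mathcal{I}(w)|\le(4\cdot2/(1/4))^d=32^d=\mathcal{O}(1)$.

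For the covering claim, I would fix an arbitrary $p\in\mathcal{N}^2(w)$ and locate a nearby center of $\mathcal{I}(w)$ in two steps. Because $p$ is within two hops of $w$ there is a vertex $v\in\mathcal{N}^1(w)$ with $p\in\mathcal{N}^1(v)$ --- the degenerate cases $p=w$ and $p\in\mathcal{N}^1(w)$ are covered by taking $v=w$. Maximality of $I_{1/4}(v)$ as an independent set of $\mathcal{N}^1(v)$ in $G_{\le1/4}$ then yields a $q\in I_{1/4}(v)$ with $\lVert pq\rVert\le1/4$ (with $q=p$ if $p$ already belongs to $I_{1/4}(v)$). Since $q\in\bigcup_{v'\in\mathcal{N}^1(w)}I_{1/4}(v')$ and $\mathcal{I}(w)$ is a maximal independent set of this union in $G_{\le1/4}$, the same reasoning produces an $r\in\mathcal{I}(w)$ with $\lVert qr\rVert\le1/4$; the triangle inequality $\lVert pr\rVert\le\lVert pq\rVert+\lVert qr\rVert\le1/2$ then places $p$ in the radius-$1/2$ ball around $r$, as needed.

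I expect the only real obstacle to be notational bookkeeping: pinning down the conventions for $\mathcal{N}^1$ and $\mathcal{N}^2$ (whether a neighborhood contains its own center, and precisely what ``two hops'' means in the degenerate cases) and making sure the inequality goes the right way when converting ``maximal $G_{\le1/4}$-independent set'' into ``every point of the underlying set lies within $1/4$ of the set''. With those conventions fixed, the covering part is just two applications of maximality followed by one triangle inequality, and the size part is a single packing estimate.
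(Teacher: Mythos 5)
Your proposal is correct and follows essentially the same argument as the paper: two applications of maximality of the $G_{\le 1/4}$-independent sets plus the triangle inequality for the covering claim, and a single packing estimate in the radius-$2$ ball for the size bound. The only difference is that you make the constant $32^d$ explicit, which the paper omits.
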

\begin{proof}
Let $v\in\mathcal{N}^2(w)$ be an arbitrary point. Thus there exists $u\in\mathcal{N}^1(w)$ that $v\in\mathcal{N}^1(u)$. Let $I_{1/4}(u)$ be the maximal independent set of the vertices $\mathcal{N}^1(u)$ in $G_{\leq1/4}$, that $u$ calculates and sends to $w$ in the first step. There exists $v'\in I_{1/4}(u)$ that $\lVert vv'\rVert\leq 1/4$. Similarly, there exists $v''\in\mathcal{I}(w)$ that $\lVert v'v''\rVert\leq 1/4$. By the triangle inequality, $\lVert vv''\rVert\leq 1/2$, i.e. $v$ is covered by a ball of radius 1/2 around $v''$.

On the other hand, $\mathcal{I}(w)$ is contained in a ball of radius 2 and every pair of points in $\mathcal{I}(w)$ have a distance of at least $1/4$. Thus, by the packing lemma, he size of $\mathcal{I}(w)$ is bounded by a constant.
\end{proof}

Next, every center $v\in \mathcal{I}(w)$ calculates a $(1+\epsilon)$-spanner $\mathcal{S}_{\leq1/2}(v)$ of the point set $\mathcal{N}^1(v)$ using the centralized algorithm, and notifies its neighbors about their connections. We prove that the union of these spanners, would be a spanner for one of the partitions, i.e. the edges of length at most $1/2$ in $\mathcal{N}^2(w)$. The pseudo-code of this procedure is available in \autoref{alg:shortedges}

\def\lemCongestSpannerOne{The union of the spanners $\mathcal{S}_{\leq1/2}(v)$ for $v\in\mathcal{I}(w)$ is a $(1+\epsilon)$-spanner of $\mathcal{N}^2(w)$ in $G_{\leq1/2}$. The maximum degree and the lightness of this spanner are both bounded by constants.}%
\begin{lemma}%
\label{lem:congest-spanner1}%
\lemCongestSpannerOne
\end{lemma}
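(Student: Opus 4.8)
The plan is to verify the three claimed properties of the union $U := \bigcup_{v\in\mathcal{I}(w)}\mathcal{S}_{\le1/2}(v)$ one at a time, using only \autoref{lem:covering} together with the properties of \Call{Centralized-Spanner}{} established in \autoref{sec:central}. For the stretch factor I would first take an edge $(x,y)$ of the unit ball graph with $\lVert xy\rVert\le 1/2$ and both endpoints in $\mathcal{N}^2(w)$. By \autoref{lem:covering} some $v\in\mathcal{I}(w)$ has $\lVert xv\rVert\le 1/2$, and then $\lVert yv\rVert\le\lVert yx\rVert+\lVert xv\rVert\le 1$, so $x,y\in\mathcal{N}^1(v)$ and $(x,y)$ is an edge of the unit ball graph on $\mathcal{N}^1(v)$. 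Since $\mathcal{S}_{\le1/2}(v)$ is a $(1+\epsilon)$-spanner of that graph, it contains an $(x,y)$-path of length at most $(1+\epsilon)\lVert xy\rVert$, lying inside $\mathcal{N}^1(v)\subseteq\mathcal{N}^3(w)$, and this path is present in $U$. For two arbitrary vertices of $\mathcal{N}^2(w)$ I would concatenate these per-edge approximations along a shortest path in $G_{\le1/2}$, exactly as in the proof of \autoref{lem:dis-str}, to conclude $\dist_U(\cdot,\cdot)\le(1+\epsilon)\dist_{G_{\le1/2}}(\cdot,\cdot)$.

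The degree bound is immediate: \autoref{lem:covering} makes $|\mathcal{I}(w)|$ a constant, each $\mathcal{S}_{\le1/2}(v)$ has constant maximum degree by \autoref{lem:cen-deg}, and the degree of any vertex in $U$ is at most the sum of its degrees over these $\mathcal{O}(1)$ spanners. For the lightness I would start from $\weight(U)\le\sum_{v\in\mathcal{I}(w)}\weight(\mathcal{S}_{\le1/2}(v))$ and apply \autoref{lem:cen-light} to each term, obtaining $\weight(\mathcal{S}_{\le1/2}(v))=\mathcal{O}(1)\weight(MST(\mathcal{N}^1(v)))$. Each $\mathcal{N}^1(v)$ with $v\in\mathcal{I}(w)\subseteq\mathcal{N}^2(w)$ is contained in $\mathcal{N}^3(w)$, so the DFS-tour shortcutting argument of \autoref{lem:dis-cmp2} shows $\weight(MST(\mathcal{N}^1(v)))\le 2\weight(MST(\mathcal{N}^3(w)))$; summing over the $\mathcal{O}(1)$ centers gives $\weight(U)=\mathcal{O}(1)\weight(MST(\mathcal{N}^3(w)))$, which is the local lightness estimate, and it chains to $\mathcal{O}(1)\weight(MST(G))$ in the final theorem just as the comparison chain in \autoref{sec:dist} does.

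The only step that is not pure bookkeeping is this last comparison. Since a subset of a point set may have a strictly heavier minimum spanning tree than the whole set, one cannot write $\weight(MST(\mathcal{N}^1(v)))\le\weight(MST(\mathcal{N}^3(w)))$ directly; the factor-$2$ loss from the tour-shortcutting argument is exactly what makes the bound go through, and one must also be careful to use $\mathcal{N}^3(w)$ rather than $\mathcal{N}^2(w)$ as the reference neighborhood, because a center of $\mathcal{I}(w)$ can sit two hops from $w$ and then its $1$-hop neighborhood reaches three hops out. Everything else — the stretch and the degree — follows routinely from \autoref{lem:covering} and the centralized construction.
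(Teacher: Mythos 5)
Your treatment of the stretch factor and the degree bound is correct and matches the paper's argument: cover a short edge by a radius-$1/2$ ball around some $v\in\mathcal{I}(w)$ via \autoref{lem:covering}, so both endpoints land in $\mathcal{N}^1(v)$ and inherit a $(1+\epsilon)$-path from $\mathcal{S}_{\le1/2}(v)$; and a vertex's degree in the union is controlled because it participates in only $\mathcal{O}(1)$ of the local spanners, each of bounded degree.

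The lightness argument is where you diverge, and where your write-up is not yet complete. You aggregate each $\weight(MST(\mathcal{N}^1(v)))$ up to $2\,\weight(MST(\mathcal{N}^3(w)))$ and then assert that the resulting bound $\mathcal{O}(1)\weight(MST(\mathcal{N}^3(w)))$ ``chains to $\mathcal{O}(1)\weight(MST(G))$ just as the comparison chain in \autoref{sec:dist} does.'' But the pitfall you correctly flag one level down --- a subset's MST can be heavier than the whole set's --- recurs at this outer level: $\sum_{w\in I}\weight(MST(\mathcal{N}^3(w)))$ is a sum of weights of distinct local trees, and no tour-shortcutting argument alone bounds it by $\mathcal{O}(1)\weight(MST(G))$. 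The chain of \autoref{sec:dist} (Lemmas \ref{lem:dis-cmp2} and \ref{lem:dis-cmp3}) is stated for $MST(\mathcal{N}^2(w))$ versus the induced subgraph of a near-optimal global spanner $\mathcal{S}^*$ on $\mathcal{N}^3(w)$; to apply it to $MST(\mathcal{N}^3(w))$ you would need both lemmas re-derived one hop further out (with $\mathcal{N}^4(w)$ as the Steiner vertex pool), which does go through but is not available verbatim. The paper avoids the detour: it applies the Steiner-tree sandwich per center $v\in\mathcal{I}(w)$, bounding $\weight(MST(\mathcal{N}^1(v)))$ by twice the optimal Steiner tree on $\mathcal{N}^2(v)$ with required set $\mathcal{N}^1(v)$, which in turn is at most the weight of $\mathcal{S}^*$ restricted to $\mathcal{N}^2(v)$; summing over all $v$ and all $w$ then recounts each edge of the single global object $\mathcal{S}^*$ only $\mathcal{O}(1)$ times by packing. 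Either route closes the argument, but yours still needs that final globalization step written out; as it stands you have proved the weight bound only relative to $MST(\mathcal{N}^3(w))$, not relative to $MST(G)$.
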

\begin{proof}
First, we prove the $1+\epsilon$ stretch-factor. Let $(u,v)$ be a pair in $\mathcal{N}^2(w)$ such that $\lVert uv\rVert\leq 1/2$. By Lemma \ref{lem:covering} we know there exists $u'\in\mathcal{I}(w)$ that $\lVert uu'\rVert\leq1/2$. Thus $\lVert vu'\rVert\leq 1$ which means that $u,v\in\mathcal{N}^1(u')$ and there would be a $(1+\epsilon)$-path for this pair in $\mathcal{S}_{\leq1/2}(u')$, which would be present in the union of the spanners.

The degree bound follows from the fact that, by the packing lemma, every point in $\mathcal{N}^2(w)$ is appeared in at most a constant number of one-hop neighborhoods and therefore in at most a constant number of spanners constructed the elements in $\mathcal{I}(w)$. Since in every spanner it has a bounded degree, in the union it will have a bounded degree as well.

To prove the lightness bound we follow a similar approach to the proof of Proposition \ref{prop:lightness}. The weight of each spanner $\mathcal{S}_{\leq1/2}(v)$ is $\mathcal{O}(1)\weight(MST(\mathcal{N}^1(v)))$. The weight of the MST is at most twice the weight of the optimal Steiner tree on $\mathcal{N}^2(v)$ with the required vertices being $\mathcal{N}^1(v)$. And the weight of this optimal Steiner tree is at most equal to the weight of the induced sub-graph of an (asymptotically) optimal $(1+\epsilon)$-spanner of $G$ on the subset of vertices $\mathcal{N}^2(v)$. Summing up these subgraphs for different $v$s and different $w$s would end up with adding at most a constant factor to the weight of the optimal spanner, which proves that the lightness would be bounded by a constant.
\end{proof}

\begin{algorithm}[ht]
\caption{The CONGEST spanner algorithm.}\label{alg:congest}
\textbf{Input.} A unit ball graph $G(V, E)$ in a metric with doubling dimension $d$ and an $\epsilon>0$.\\
\textbf{Output.} A light-weight bounded-degree $(1+\epsilon)$-spanner of $G$.%
\begin{algorithmic}[1]%
\Procedure{CONGEST-Spanner}{$G$, $\epsilon$}%
\State Find a maximal independent set $I$ of $G$ using \cite{kuhn2005locality}%
\State Run \Call{Span-Short-Edges}{} on the vertices of $G$%
\State Run \Call{Span-Long-Edges}{} on the vertices of $G$%
\EndProcedure%
\end{algorithmic}
\end{algorithm}

\begin{algorithm}[ht]
\caption{Finding a spanner of the edges of length smaller than $1/2$ in $\mathcal{N}^2(w)$.}\label{alg:shortedges}
\begin{algorithmic}[1]%
\Function{Span-Short-Edges}{vertex $u$}%
\If {$u\in I$}%
\State Send a signal of type 1 to every $v\in\mathcal{N}^1(u)$.%
\State Wait for their maximal independent sets, $I_{1/4}(v)$s.%
\State Calculate a maximal independent set of $\cup_{v\in\mathcal{N}^1(u)}I_{1/4}(v)$ in $G_{<1/4}$ greedily.%
\State Store this maximal independent set in $\mathcal{I}(u)$.%
\State Send a signal of type 2 to every $v\in\mathcal{I}(u)$.%
\EndIf%
\If {received type 1 signal from some $w$}%
\State Calculate a maximal independent set of $\mathcal{N}^1(w)$ in $G_{1/4}$ greedily.%
\State Send this maximal independent set to $w$.%
\EndIf%
\If {received type 2 signal from some $w$}%
\State Calculate $\mathcal{S}_{\leq1/2}(u)\gets$ \Call{Centralized-Spanner}{$\mathcal{N}^1(u)$, $\epsilon$}%
\For{$e=(a,b)$ in $\mathcal{S}_{\leq1/2}(u)$}%
\State Send $e$ to $a$ and $b$%
\EndFor%
\EndIf%
\State Receive and store the edges sent by other centers%
\EndFunction%
\end{algorithmic}
\end{algorithm}

Now we find a spanner for the other partition, the edges of length larger than $1/2$ in $\mathcal{N}^2(w)$. The procedure is as follows: First, every center $v\in\mathcal{N}^1(w)$ calculates a maximal independent set $I_{\epsilon/40}(v)$ of $\mathcal{N}^1(v)$ in $G_{\leq\epsilon/40}$ and sends it to $w$. Again, the size of each maximal independent set is $\mathcal{O}(\epsilon^{-d})$ by the packing lemma, which is constant. Therefore, this step takes only constant number of rounds. Afterwards, $w$ finds a maximal independent set $\mathcal{I}'(w)$ of $\cup_{v\in\mathcal{N}^1(w)}I_{\epsilon/40}(v)$ in $G_{\leq\epsilon/40}$. Then $w$ constructs a $(1+\epsilon/5)$-spanner $\mathcal{S}'(w)$ of $\mathcal{I}'(w)$ in $G$ using the centralized algorithm, and announces the edges of the spanner to their corresponding endpoints. Finally, every center $v\in\mathcal{I}'(w)$ calculates a $(1+\epsilon)$-spanner $\mathcal{S}''(v)$ of its $\epsilon/20$ neighborhood and announces its edges to their endpoints. We show that the union of $S'(w)$ and $S''(v)$s for $v\in\mathcal{I}'(w)$ would form a $(1+\epsilon)$-spanner of the second partition, i.e. the edges of larger than $1/2$. The pseudo-code of this procedure is available in \autoref{alg:longedges}

\begin{lemma}%
\label{lem:coveringsmall}%
The union of the balls of radius $\epsilon/20$ around the centers in $\mathcal{I}'(w)$ cover $\mathcal{N}^2(w)$. Furthermore, the size of $\mathcal{I}'(w)$ is bounded by a constant.
\end{lemma}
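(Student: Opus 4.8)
The plan is to mirror the proof of Lemma~\ref{lem:covering} almost verbatim, with the radius $1/4$ replaced by $\epsilon/40$ and the final covering radius $1/2$ replaced by $\epsilon/20$. The structure of the construction is identical: each center $v\in\mathcal{N}^1(w)$ computes a maximal independent set $I_{\epsilon/40}(v)$ of $\mathcal{N}^1(v)$ in $G_{\leq\epsilon/40}$, and $w$ then computes a maximal independent set $\mathcal{I}'(w)$ of the union of these sets in $G_{\leq\epsilon/40}$.

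First I would prove the covering property. Let $v\in\mathcal{N}^2(w)$ be arbitrary. As in Lemma~\ref{lem:covering}, there is some $u\in\mathcal{N}^1(w)$ with $v\in\mathcal{N}^1(u)$. By maximality of $I_{\epsilon/40}(u)$ in $G_{\leq\epsilon/40}$, there is a point $v'\in I_{\epsilon/40}(u)$ with $\lVert vv'\rVert\leq\epsilon/40$ (if $v$ itself is independent from all of $I_{\epsilon/40}(u)$ it would have been added, so it must be within $\epsilon/40$ of some member). Since $I_{\epsilon/40}(u)$ is one of the sets whose union $w$ takes a maximal independent set of, and $\mathcal{I}'(w)$ is maximal in $G_{\leq\epsilon/40}$, there is $v''\in\mathcal{I}'(w)$ with $\lVert v'v''\rVert\leq\epsilon/40$. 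By the triangle inequality $\lVert vv''\rVert\leq\epsilon/40+\epsilon/40=\epsilon/20$, so $v$ lies in the ball of radius $\epsilon/20$ around $v''\in\mathcal{I}'(w)$, as desired.

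Next I would bound the size of $\mathcal{I}'(w)$. Every point of $\mathcal{I}'(w)$ comes from some $I_{\epsilon/40}(v)$ with $v\in\mathcal{N}^1(w)$, hence lies within distance $1$ of some neighbor of $w$, and therefore within distance $2$ of $w$; so $\mathcal{I}'(w)$ is contained in a ball of radius $2$. By construction $\mathcal{I}'(w)$ is independent in $G_{\leq\epsilon/40}$, meaning every two of its points are at distance more than $\epsilon/40$. The packing lemma then gives $|\mathcal{I}'(w)|\leq(4\cdot 2/(\epsilon/40))^d = (320/\epsilon)^d = \epsilon^{-\mathcal{O}(d)}$, which is a constant since $\epsilon$ and $d$ are fixed.

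There is really no hard part here: this lemma is a direct re-parametrization of Lemma~\ref{lem:covering}, and the only care needed is in tracking constants so that the final covering radius $\epsilon/40+\epsilon/40=\epsilon/20$ matches the radius claimed in the statement and used later when $\mathcal{S}''(v)$ is taken to be a spanner of the $\epsilon/20$-neighborhood of $v$. If anything deserves a sentence of justification it is the first step — that maximality of the independent set forces every vertex of $\mathcal{N}^1(u)$ to be within $\epsilon/40$ of some member of $I_{\epsilon/40}(u)$ — but that is immediate from the definition of a maximal independent set in $G_{\leq\epsilon/40}$.
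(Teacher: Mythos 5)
Your proof is correct and is exactly the argument the paper intends: the paper's own proof of this lemma is just the one-line remark ``Similar to the proof of Lemma~\ref{lem:covering},'' and your write-up is that proof with $1/4$ replaced by $\epsilon/40$ and $1/2$ by $\epsilon/20$, including the same two-step maximality argument and the same packing bound for $|\mathcal{I}'(w)|$.
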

\begin{proof}
Similar to the proof of Lemma \ref{lem:covering}.
\end{proof}

\def\lemCongestSpannerTwo{The union of the spanners $\mathcal{S}'(w)$ and $\mathcal{S}''(v)$ for $v\in\mathcal{I}'(w)$ forms a $(1+\epsilon)$-spanner of $\mathcal{N}^2(w)$ in $G_{>1/2}$. The maximum degree and the lightness of this spanner are btoh bounded by constants.}%
\begin{lemma}%
\label{lem:congest-spanner2}%
\lemCongestSpannerTwo
\end{lemma}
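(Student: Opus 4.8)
The plan is to verify the three claims---stretch $1+\epsilon$, bounded degree, bounded lightness---one at a time, in each case adapting the argument already used for the LOCAL construction and for the short edges (Lemma~\ref{lem:dis-str}, Lemma~\ref{lem:dis-deg}, Proposition~\ref{prop:lightness}, Lemma~\ref{lem:congest-spanner1}). The new feature is that the long-edge spanner is built in two nested layers, the net spanner $\mathcal{S}'(w)$ on $\mathcal{I}'(w)$ and the local spanners $\mathcal{S}''(v)$ on the radius-$(\epsilon/20)$ neighborhoods of the points $v\in\mathcal{I}'(w)$, so every estimate has to be routed through both. Throughout I use Lemma~\ref{lem:coveringsmall}: the radius-$(\epsilon/20)$ balls around the $\mathcal{O}(1)$ points of $\mathcal{I}'(w)$ cover $\mathcal{N}^2(w)$.

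For the stretch, fix an edge $(a,b)$ of $G_{>1/2}$ with $a,b\in\mathcal{N}^2(w)$, so $\lVert ab\rVert\in(1/2,1]$; by the edge-by-edge reduction of Lemma~\ref{lem:dis-str} it suffices to $(1+\epsilon)$-approximate such edges. Pick $a',b'\in\mathcal{I}'(w)$ with $\lVert aa'\rVert,\lVert bb'\rVert\le\epsilon/20$ (Lemma~\ref{lem:coveringsmall}), so $\lVert a'b'\rVert\le\lVert ab\rVert+\epsilon/10$, and concatenate a $(1+\epsilon)$-path from $a$ to $a'$ in $\mathcal{S}''(a')$, a $(1+\epsilon/5)$-path from $a'$ to $b'$ in $\mathcal{S}'(w)$, and a $(1+\epsilon)$-path from $b'$ to $b$ in $\mathcal{S}''(b')$. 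For $\epsilon\le 1$ the two outer legs total at most $2(1+\epsilon)(\epsilon/20)\le\epsilon/5$ and the middle leg is at most $(1+\epsilon/5)(\lVert ab\rVert+\epsilon/10)$, giving a path of length at most $(1+\epsilon/5)\lVert ab\rVert+8\epsilon/25$; since $\lVert ab\rVert>1/2>2/5$, the surplus $8\epsilon/25$ is dominated by the leftover $(4\epsilon/5)\lVert ab\rVert$, so the path is within $(1+\epsilon)\lVert ab\rVert$. This is exactly why the net must sit at the fine scale $\epsilon/40$ (so lifting $a,b$ into $\mathcal{I}'(w)$ costs only an additive $\epsilon/10$, a small \emph{relative} error because the edge is long) and why $\mathcal{S}'(w)$ is built with the reduced parameter $\epsilon/5$ (to leave room for the two detours). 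One point needing care is that $\mathcal{S}'(w)$ actually carries a short path between $a'$ and $b'$ even though $\lVert a'b'\rVert$ may slightly exceed $1$; this has to be extracted from the fact that $\mathcal{I}'(w)$ is an $(\epsilon/40)$-net of the $G$-connected set $\mathcal{N}^2(w)$, with the constants tuned so that this margin stays inside the $\epsilon$ budget.

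For the degree bound, the packing lemma gives that any point of $\mathcal{N}^2(w)$ lies in only $\mathcal{O}(1)$ of the radius-$(\epsilon/20)$ neighborhoods, hence in $\mathcal{O}(1)$ of the $\mathcal{S}''(v)$, in each of which it has bounded degree; it additionally lies in at most one extra spanner, namely $\mathcal{S}'(w)$, whose vertex set $\mathcal{I}'(w)$ is of constant size and in which it again has bounded degree. Summing these $\mathcal{O}(1)$ bounded contributions yields constant degree in the union, exactly as in Lemma~\ref{lem:dis-deg} and the degree half of Lemma~\ref{lem:congest-spanner1}.

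The main obstacle is the lightness, where I would replay the comparison chain behind Proposition~\ref{prop:lightness}, applied to both layers. For each $v\in\mathcal{I}'(w)$, Corollary~\ref{cor:dis-cmp1} gives that $\weight(\mathcal{S}''(v))$ is $\mathcal{O}(1)$ times the weight of an MST of the $(\epsilon/20)$-neighborhood of $v$, that MST is within a factor $2$ of the optimal Steiner tree required to connect the $(\epsilon/20)$-neighborhood while allowed to use a slightly larger neighborhood (Lemma~\ref{lem:dis-cmp2}), and for $\epsilon\le 1$ the subgraph of the global spanner $\mathcal{S}^*$ (the output of \Call{Centralized-Spanner}{$G$,$\epsilon$}) induced on that larger neighborhood is a feasible such Steiner tree (Lemma~\ref{lem:dis-cmp3}); since these host neighborhoods overlap only $\mathcal{O}(1)$ times, $\sum_{v\in\mathcal{I}'(w)}\weight(\mathcal{S}''(v))=\mathcal{O}(1)\weight(\mathcal{S}^*)$. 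For the net spanner, $\weight(\mathcal{S}'(w))=\mathcal{O}(1)\weight(MST(\mathcal{I}'(w)))$, and since $\mathcal{I}'(w)\subseteq\mathcal{N}^2(w)$ the same Steiner-tree detour bounds $\weight(MST(\mathcal{I}'(w)))$ by the weight of the induced subgraph of $\mathcal{S}^*$ on $\mathcal{N}^3(w)$, which is charged only $\mathcal{O}(1)$ times over $w\in I$. Adding the two contributions and summing over all $w\in I$ gives total long-edge weight $\mathcal{O}(1)\weight(\mathcal{S}^*)=\mathcal{O}(1)\weight(MST(G))$. The hard part, as for the LOCAL algorithm and the short edges, is precisely this accounting: checking the ``induced subgraph is a feasible Steiner tree'' step for the new neighborhoods, and keeping the packing estimates tight enough that every vertex and every MST edge is charged only a constant number of times across the two layers, over all net points, and over all MIS centers. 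Given the earlier lemmas the stretch and degree parts are routine; the lightness bookkeeping is where the real effort lies.
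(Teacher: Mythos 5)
Your proposal matches the paper's proof in all essentials: the same three-leg path decomposition through $\mathcal{S}''(a')$, $\mathcal{S}'(w)$, and $\mathcal{S}''(b')$ with the same arithmetic exploiting $\lVert ab\rVert>1/2$, the same packing argument for the degree bound, and the same MST--Steiner-tree--induced-subgraph comparison chain for the lightness of both layers. The subtlety you flag about $\lVert a'b'\rVert$ possibly exceeding $1$ is not resolved in the paper's proof either, so you are not missing anything the authors supply.
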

\begin{proof}
Again, we first prove the $1+\epsilon$ stretch-factor of the spanner. Let $(u,v)$ be a pair in $\mathcal{N}^2(w)$ that $\lVert uv\rVert> 1/2$. Let $u'$ and $v'$ be centers in $\mathcal{I}'(w)$ that are at distance of at most $\epsilon/20$ from $u$ and $v$, respectively. Such centers exist according to Lemma \ref{lem:coveringsmall}. Consider the $(1+\epsilon)$-path connecting $u$ to $u'$ in $\mathcal{S}''(u')$ and the $(1+\epsilon)$-path connecting $v$ to $v'$ in $\mathcal{S}''(v')$. We can attach these paths together with the $(1+\epsilon/5)$-path between $u'$ and $v'$ in $\mathcal{S}'(w)$ to get a path between $u$ and $v$. The stretch of this path would be at most
$$\frac{(1+\epsilon)(\lVert uu'\rVert + \lVert vv'\rVert) + (1+\epsilon/5)\lVert u'v'\rVert}{\lVert uv\rVert} = \frac{(1+\epsilon)(\lVert uu'\rVert + \lVert vv'\rVert)}{\lVert uv\rVert} + \frac{(1+\epsilon/5)\lVert u'v'\rVert}{\lVert uv\rVert}$$
But,
$$\frac{(1+\epsilon)(\lVert uu'\rVert + \lVert vv'\rVert)}{\lVert uv\rVert}\leq \frac{(1+\epsilon)\epsilon/10}{1/2}\leq \frac{2\epsilon}{5}$$
Also,
$$\frac{(1+\epsilon/5)\lVert u'v'\rVert}{\lVert uv\rVert} \leq \frac{(1+\epsilon/5)(\lVert uv\rVert + \lVert uu'\rVert + \lVert vv'\rVert)}{\lVert uv\rVert}\leq 1+\epsilon/5+\frac{(1+\epsilon/5)\epsilon/10}{\lVert uv\rVert}$$
Bounding the last term,
$$\frac{(1+\epsilon/5)\epsilon/10}{\lVert uv\rVert}\leq \frac{(6/5)\epsilon/10}{1/2}=\frac{6\epsilon}{25}$$
Therefore, the stretch of the $uv$-path would be upper bounded by,
$$\frac{2\epsilon}{5} + 1+\epsilon/5 + \frac{6\epsilon}{25} < 1+\epsilon$$

An approach similar to the proof of Lemma \ref{lem:congest-spanner1} shows that the degree of every vertex in the union of $\mathcal{S}''(v)$s would be bounded by a constant. We do not repeat the details of the proof here. From the properties of our centralized construction, the degree of every vertex would be bounded in $\mathcal{S}'(w)$ as well. Thus, the degree of every vertex in the union of these spanners would be bounded by a constant.

To prove the lightness bound, we bound the weight of each spanner separately. First, we bound the total weight of $\mathcal{S}'(w)$. We know from the properties of our centralized construction, that $\weight(\mathcal{S}'(w))=\mathcal{O}(1)\weight(MST(\mathcal{I}'(w)))$. But $\weight(MST(\mathcal{I}'(w)))\leq 2MST(\mathcal{N}^2(w))$, so $\weight(\mathcal{S}'(w))=\mathcal{O}(1)\weight(MST(\mathcal{N}^2(w)))$. Therefore, by Lemma \ref{lem:dis-cmp2} and Lemma \ref{lem:dis-cmp3} the total weight of $\mathcal{S}'(w)$ spanners for different centers $w$ would sum up to at most a constant factor of the weight of the optimal spanner.

Next, we bound the total weight of $\mathcal{S}''(v)$ spanners. Again, we know from the properties of our centralized construction that $\weight(\mathcal{S}''(v))=\mathcal{O}(1)\weight(MST(\mathcal{N}^{\epsilon/20}(v)))$. Assuming that $\mathcal{S}^*$ is an optimal spanner on the point set, we can observe that any $(1+\epsilon)$-path (in $\mathcal{S}^*$) between any pair of vertices in $\mathcal{N}^{\epsilon/20}(v)$ must be completely contained in a ball of radius $3\epsilon/20$, otherwise the length of the path would be more than $(1+\epsilon)\epsilon/10$, the maximum allowed length for any $(1+\epsilon)$-path of any pair in the $\mathcal{N}^{\epsilon/20}(v)$ neighborhood. Therefore, the induced sub-graph of $\mathcal{S}^*$ on $\mathcal{N}^{3\epsilon/20}(v)$ has a connected component connecting the vertices of $\mathcal{N}^{\epsilon/20}(v)$. Thus, its weight is at least equal to the weight of a minimum Steiner tree on $\mathcal{N}^{3\epsilon/20}(v)$, with the required vertices being $\mathcal{N}^{\epsilon/20}(v)$. This is at least equal to $\weight(MST(\mathcal{N}^{\epsilon/20}(v)))/2$. Therefore, the weight of $\mathcal{S}''(v)$ is bounded above by a constant factor of the weight of the induced sub-graph of $\mathcal{S}^*$ on $\mathcal{N}^{3\epsilon/20}(v)$. Summing up these bounds for every $v$ in every $w$ would lead to at most a constant repetitions of every vertex and every edge (similar to Proposition \ref{prop:lightness}) in $\mathcal{S}^*$, which shows that the total weight of $\mathcal{S}''(v)$ for different vertices of $v$ would be bounded by a constant factor of the weight of the optimal spanner.
\end{proof}

\begin{algorithm}[ht]
\caption{Finding a spanner of the edges of length larger than $1/2$ in $\mathcal{N}^2(w)$.}\label{alg:longedges}
\begin{algorithmic}[1]%
\Function{Span-Long-Edges}{vertex $u$}%
\If {$u\in I$}%
\State Send a signal of type 3 to every $v\in\mathcal{N}^1(u)$.%
\State Wait for their maximal independent sets, $I_{\epsilon/40}(v)$s.%
\State Calculate a maximal independent set of $\cup_{v\in\mathcal{N}^1(u)}I_{\epsilon/40}(v)$ in $G_{<\epsilon/40}$ greedily.%
\State Store this maximal independent set in $\mathcal{I}'(u)$.%
\State Send a signal of type 4 to every $v\in\mathcal{I}'(u)$.%
\State Calculate $\mathcal{S}'(u)\gets$ \Call{Centralized-Spanner}{$\mathcal{I}'(u)$, $\epsilon/5$}%
\For{$e=(a,b)$ in $\mathcal{S}'(u)$}%
\State Send $e$ to $a$ and $b$%
\EndFor%
\EndIf%
\If {received type 3 signal from some $w$}%
\State Calculate a maximal independent set of $\mathcal{N}^1(w)$ in $G_{\epsilon/40}$ greedily.%
\State Send this maximal independent set to $w$.%
\EndIf%
\If {received type 4 signal from some $w$}%
\State Let $\mathcal{N}^{\epsilon/20}(u)$ be the $\epsilon/20$ neighborhood of $u$, i.e. the set of vertices that are at distance $\epsilon/20$ or less from $u$.%
\State Calculate $\mathcal{S}''(u)\gets$ \Call{Centralized-Spanner}{$\mathcal{N}^{\epsilon/20}(u)$, $\epsilon$}%
\For{$e=(a,b)$ in $\mathcal{S}''(u)$}%
\State Send $e$ to $a$ and $b$%
\EndFor%
\EndIf%
\State Receive and store the edges sent by other centers%
\EndFunction%
\end{algorithmic}
\end{algorithm}

The union of the two spanners for the two partitions form a spanner for the 2-hop neighborhood of $w$, the goal we wanted to achieve in the CONGEST model. This completes our adjustments in this model.

\begin{theorem}[CONGEST Spanner]
\thmCONGEST\label{thm:congest}
\end{theorem}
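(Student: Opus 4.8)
The plan is to assemble the theorem from the four lemmas of this section together with the analysis template already used for \autoref{thm:dis}. Four things must be checked for the output of \Call{CONGEST-Spanner}{}: the round complexity, the stretch factor, the maximum degree, and the lightness.

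For the round complexity I would argue that every step other than the initial maximal independent set computation runs in $\mathcal{O}(1)$ rounds of CONGEST communication. The call to \Call{Maximal-Independent}{} costs $\mathcal{O}(\log^* n)$ rounds and still works in CONGEST. In \Call{Span-Short-Edges}{} and \Call{Span-Long-Edges}{}, each vertex $v$ computes a maximal independent set of $\mathcal{N}^1(v)$ in $G_{\leq 1/4}$ (resp. $G_{\leq\epsilon/40}$) purely locally, and by the packing lemma this set has size $\mathcal{O}(1)$ (resp. $\mathcal{O}(\epsilon^{-d})$), so sending it to $w$ is a single bounded-size message; the same packing bound applies to $\mathcal{I}(w)$, $\mathcal{I}'(w)$ and the edge sets of $\mathcal{S}_{\leq 1/2}(v)$, $\mathcal{S}'(w)$, $\mathcal{S}''(v)$, so the subsequent type-signals and the broadcast of the computed local spanner edges to their endpoints are again bounded-size messages exchanged over a constant number of rounds. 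Hence the whole algorithm takes $\mathcal{O}(\log^* n)$ rounds.

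For correctness, let $(u,v)\in E$ be any edge of $G$. Since $I$ is a maximal independent set, $u\in I$ or $u$ has a neighbor in $I$; in either case there is $w\in I$ with $u,v\in\mathcal{N}^2(w)$. Because $\lVert uv\rVert\leq 1$, either $\lVert uv\rVert\leq 1/2$, in which case \autoref{lem:congest-spanner1} gives a path of length at most $(1+\epsilon)\lVert uv\rVert$ between $u$ and $v$ inside the union of the $\mathcal{S}_{\leq 1/2}(v)$, or $\lVert uv\rVert>1/2$, in which case \autoref{lem:congest-spanner2} gives such a path inside the union of $\mathcal{S}'(w)$ and the $\mathcal{S}''(v)$; all edges of either path are stored by the algorithm. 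Thus the final spanner contains a $(1+\epsilon)$-approximate path for every UBG edge, and concatenating these along a shortest path between an arbitrary pair $u,v$ — exactly as in the proof of \autoref{lem:dis-str} — shows the output is a $(1+\epsilon)$-spanner of $G$. Since every stored edge is returned by a \Call{Centralized-Spanner}{} call on a sub-point-set, it has length at most $1$, so the output is a subgraph of $G$. For the maximum degree, a fixed vertex $x$ lies in $\mathcal{N}^2(w)$ for only $\mathcal{O}(1)$ centers $w\in I$ (they lie in the radius-$2$ ball around $x$ and are pairwise at distance $\geq 1$, so the packing lemma bounds their number), and for each such $w$ the degree of $x$ in the union of the per-$w$ spanners is a constant by \autoref{lem:congest-spanner1} and \autoref{lem:congest-spanner2}; summing over the $\mathcal{O}(1)$ centers keeps it constant. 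For the lightness, \autoref{lem:congest-spanner1} and \autoref{lem:congest-spanner2} already bound the total weight of the short-edge and of the long-edge spanners, summed over all $w\in I$, by $\mathcal{O}(1)\weight(MST(G))$, via the same Steiner-tree comparison chain as in \autoref{prop:lightness} (using that each vertex, and hence each edge, of an optimal spanner $\mathcal{S}^*$ is charged only $\mathcal{O}(1)$ times); adding the two bounds gives constant lightness.

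The step I expect to be the main obstacle is the round-complexity verification under the CONGEST message-size restriction: one has to make sure that none of the exchanged objects — the various independent sets $I_{1/4}(v)$, $I_{\epsilon/40}(v)$, $\mathcal{I}(w)$, $\mathcal{I}'(w)$ and the edge sets of $\mathcal{S}_{\leq 1/2}(v)$, $\mathcal{S}'(w)$, $\mathcal{S}''(v)$ — ever grows super-constant, which is exactly where the packing lemma does the work, and that the centralized subroutine is only ever fed point sets that the relevant center can learn within $\mathcal{O}(1)$ rounds (radius-$1$ neighborhoods, or $\mathcal{I}(w)/\mathcal{I}'(w)$, or the $\epsilon/20$-neighborhoods), never the full $\mathcal{N}^2(w)$; once this is established the three remaining properties follow by the bookkeeping above.
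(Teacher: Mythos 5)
Your proposal is correct and follows the paper's approach: the paper's own proof of \autoref{thm:congest} simply cites \autoref{lem:congest-spanner1} and \autoref{lem:congest-spanner2}, relying on the section's preceding discussion for the round-complexity accounting and on the template of \autoref{thm:dis} for assembling stretch, degree, and lightness from the per-neighborhood spanners. Your write-up just makes explicit the bookkeeping (constant-size messages via the packing lemma, concatenation along shortest paths, the $\mathcal{O}(1)$ overlap of the $\mathcal{N}^2(w)$ neighborhoods) that the paper leaves implicit.
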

\begin{proof}
The proof follows from Lemma \ref{lem:congest-spanner1} and Lemma \ref{lem:congest-spanner2}.
\end{proof}


\section{Low-Intersection Construction} \label{sec:euc}
The two dimensional Euclidean case of the unit ball graphs, also known as unit disk graphs, had gained a significant attention once it was introduced, due to its direct application in wireless ad-hoc networks. A huge amount of effort is still being made to improve the existing spanners in some aspects, or to introduce new constructions that possess good qualities, e.g. being fault tolerant, or having a low interference \cite{mulzer2020compact,wu2018topology,chan2019approximate,yu2017distributed}.

In this section, we take the edge intersection as a simple representation of physical link-to-link interference, and we provide a distributed construction for a light-weight low-intersection bounded-degree spanner for unit disk graphs in the two dimensional Euclidean plane. To the best of our knowledge, this is the first distributed low-intersection construction with constant bounds on the degree and lightness.

We need to emphasize that after removing the edges of longer than 1 from the output of \Call{Naive-Greedy}{} on a unit disk graph $G$, the remaining graph would be a $(1+\epsilon)$-spanner of $G$ and it also has constant bounds on its lightness, maximum degree, and average number of edge intersections per node. Equivalently, we can stop the algorithm before reaching the pairs with distance greater than 1 and the resulting spanner would be the same. We present this centralized algorithm in the following form and we use it as a part of our distributed algorithm.

\begin{algorithm}[ht]
\caption{The centralized construction for a low-intersection spanner of UDG.}\label{alg:euc-central}
\textbf{Input.} A unit disk graph $G(V,E)$ in the two dimensional Euclidean plane.\\
\textbf{Output.} A light-weight low-intersection bounded-degree $(1+\epsilon)$-spanner of $G$.%
\begin{algorithmic}[1]%
\Procedure{Centralized-Euclidean-Spanner}{$G$, $\epsilon$}%
\State Let $S$ be a graph with vertices $V$ and edges $E=\{\}$%
\For {each $(P,Q)\in V^2$ in increasing order of $\dist(P,Q)$}%
\If {$\dist(P,Q) > 1$}%
\State \textbf{break}%
\EndIf%
\If {$\dist_S(P,Q) > t\cdot \dist(P,Q)$}
\State Add edge $PQ$ to $E$%
\EndIf%
\EndFor%
\Return S%
\EndProcedure%
\end{algorithmic}
\end{algorithm}

\begin{theorem}[Centralized Euclidean Spanner]
\thmEuclideanCentralized\label{thm:euc-cent}
\end{theorem}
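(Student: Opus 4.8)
The plan is to show that the graph $S$ returned by \Call{Centralized-Euclidean-Spanner}{$G$,$\epsilon$} is precisely the subgraph formed by the edges of length at most $1$ in the complete-graph greedy spanner that \Call{Naive-Greedy}{} (\autoref{alg:greedy}) builds on $V$ with stretch parameter $t = 1+\epsilon$, and then to deduce the four claimed properties: the stretch by a short direct argument, and the lightness, the bounded degree, and the crossing bound by monotonicity from known facts about the greedy spanner.

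First I would establish the structural identity. Running \autoref{alg:greedy} and \autoref{alg:euc-central} side by side on $V$ with a fixed tie-breaking rule within the length order, an easy induction shows that the partial graph $S$ seen when a pair $(P,Q)$ with $\dist(P,Q)\le 1$ is about to be processed is the same in both executions: every edge inserted before this point has length at most $\dist(P,Q)\le 1$, hence was inserted in both executions, so the test $\dist_S(P,Q)>t\cdot\dist(P,Q)$ has the same outcome and the same action is taken. Since \autoref{alg:euc-central} does nothing beyond halting as soon as it reaches a pair of length exceeding $1$, its output equals the set of edges of length at most $1$ of the complete-graph greedy spanner, and in particular $S$ is a subgraph of that spanner.

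Next comes the stretch bound, the one property that does \emph{not} follow merely from being a subgraph of a $(1+\epsilon)$-spanner and therefore must be argued from the algorithm directly. For every edge $(u,v)\in E(G)$ we have $\dist(u,v)\le 1$, so the pair $(u,v)$ is processed before the loop breaks; at that moment either $\dist_S(u,v)\le t\cdot\dist(u,v)$ already holds, or the edge $(u,v)$ is inserted, and subsequent insertions only decrease $\dist_S$, so $\dist_S(u,v)\le(1+\epsilon)\dist(u,v)$ holds at termination. For a non-adjacent pair $u,v$, concatenating the $(1+\epsilon)$-approximating paths along the edges of a shortest path from $u$ to $v$ in $G$ (all of whose edges have length at most $1$) yields $\dist_S(u,v)\le(1+\epsilon)\dist_G(u,v)$, so $S$ is a $(1+\epsilon)$-spanner of $G$. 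The remaining three properties follow by monotonicity: lightness, maximum degree, and the total number of edge crossings can only decrease when edges are deleted, and $S$ is a subgraph of the complete-graph greedy spanner, which in the plane has weight $\mathcal{O}(1)\,\weight(MST(V))$ \cite{filtser2016greedy} (and $\weight(MST(V))\le\weight(MST(G))$ since $MST(G)$ is a spanning tree of $V$ when $G$ is connected, with the standard reduction to components otherwise), maximum degree $\mathcal{O}(1)$ depending only on $\epsilon$ \cite{filtser2016greedy,narasimhan2007geometric}, and $\mathcal{O}(n)$ edge crossings in total \cite{eppstein2020edge}; hence $S$ inherits all three, the last giving a constant average number of edge intersections per node. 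Since the plane has absolute-constant doubling dimension, all constants depend only on $\epsilon$.

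The main obstacle I expect is making the two non-routine points airtight: first, noticing that a subgraph of a spanner need not be a spanner, so the stretch must be extracted from the fact that every pair at distance at most $1$ is visited before the truncation takes effect; and second, correctly relating the minimum spanning tree in the complete-graph lightness bound to $MST(G)$ --- via the observation that a Euclidean minimum spanning tree of a connected UDG uses only edges of length at most $1$ and hence in fact coincides with $MST(G)$ --- together with the routine reduction to connected components. Once the structural identity is in place, the rest is a direct transfer of the results of \cite{filtser2016greedy}, \cite{narasimhan2007geometric}, and \cite{eppstein2020edge}.
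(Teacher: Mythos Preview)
Your proposal is correct and follows essentially the same approach as the paper: derive the stretch factor directly from the fact that every UDG edge is processed before the \textbf{break}, and derive lightness, maximum degree, and the crossing bound by observing that $S$ is a subgraph of the full greedy spanner and invoking \cite{filtser2016greedy} and \cite{eppstein2020edge}. You are in fact more careful than the paper on two points it glosses over: the explicit inductive identification of $S$ with the length-$\le 1$ edges of the full greedy spanner, and the comparison $\weight(MST(V))\le\weight(MST(G))$ needed to translate the complete-graph lightness bound into a lightness bound relative to $G$.
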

\begin{proof}
The stretch-factor follows directly from the fact that every edge of the UDG can be approximated by a spanner path within a factor of $t$ (otherwise the edge would have been added to the spanner). And since the spanner is a sub-graph of the greedy spanner on the complete weighted graph on $V$, we can deduce that its lightness, its maximum degree, and its average number of edge intersections per node are bounded by the lightness, maximum degree, and the average number of edge intersections per node of the greedy spanner, respectively. And these are all bounded by constants according to \cite{filtser2016greedy} and \cite{eppstein2020edge}.
\end{proof}

\subsection{The algorithm}

The main result of this section is the distributed construction. Our distributed algorithm (\autoref{alg:euc-dist}) is based on the algorithm we presented in \autoref{sec:dist}, with a small change that we use the \Call{Centralized-Euclidean-Spanner}{} procedure as a sub-routine instead of \Call{Centralized-Spanner}{}. The pseudo-code is provided in \autoref{alg:euc-dist}.

\begin{algorithm}[ht]
\caption{The distributed construction for a low-intersection spanner of UDG.}\label{alg:euc-dist}
\textbf{Input.} A unit disk graph $G(V,E)$ in the two dimensional Euclidean plane.\\
\textbf{Output.} A light-weight bounded-degree $(1+\epsilon)$-spanner of $G$.%
\begin{algorithmic}[1]%
\Procedure{Distributed-Euclidean-Spanner}{$G$, $\epsilon$}%
\State Find a maximal independent set $I$ of $G$ using \cite{kuhn2005locality}%
\State Run \Call{Local-Greedy}{} on the vertices of $G$%
\EndProcedure%
\Function{Local-Greedy}{vertex $w$}%
\State Retrieve $\mathcal{N}^2(w)$, the 2-hop neighborhood information of $w$%
\If {$w$ is in $I$}%
\State $\mathcal{S}_w \gets $ \Call{Centralized-Euclidean-Spanner}{$\mathcal{N}^2(w)$, $\epsilon$}%
\For{$e=(u,v)$ in $\mathcal{S}_w$}%
\State Send $e$ to $u$ and $v$%
\EndFor%
\EndIf%
\State Listen to incoming edges and store them%
\EndFunction%
\end{algorithmic}
\end{algorithm}

Similar to \autoref{alg:localgreedy} this algorithm runs in $\mathcal{O}(\log^* n)$ rounds of communications during which it builds a $(1+\epsilon)$-spanner of the unit disk graph $G$.
\begin{lemma}
The spanner returned by \Call{Distributed-Euclidean-Spanner}{} has a stretch-factor of $1+\epsilon$, a weight of $\mathcal{O}(1)\weight(MST)$, and a maximum degree of $\mathcal{O}(1)$.\label{lem:euc-dist-prop}
\end{lemma}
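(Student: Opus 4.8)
The plan is to reduce \autoref{lem:euc-dist-prop} entirely to the corresponding facts about \Call{Distributed-Spanner}{}, since \Call{Distributed-Euclidean-Spanner}{} has exactly the same skeleton — find a maximal independent set $I$, run a centralized subroutine on each $\mathcal{N}^2(w)$ for $w\in I$, and take the union — the only difference being that the subroutine is \Call{Centralized-Euclidean-Spanner}{} rather than \Call{Centralized-Spanner}{}. Every argument in \autoref{sec:dist} that proves the stretch, degree, and lightness bounds for \Call{Distributed-Spanner}{} uses only two properties of the centralized subroutine run on $\mathcal{N}^2(w)$: that its output $\mathcal{S}_w$ is a $(1+\epsilon)$-spanner of the unit ball graph on $\mathcal{N}^2(w)$, and that it has weight $\mathcal{O}(1)\weight(MST(\mathcal{N}^2(w)))$ and bounded degree. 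By \autoref{thm:euc-cent}, \Call{Centralized-Euclidean-Spanner}{} satisfies all of these, so the three conclusions carry over verbatim.

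Concretely, I would proceed as follows. First, for the stretch bound, I repeat the proof of \autoref{lem:dis-str}: for any edge $(u,v)\in E$, the vertex $u$ is either in $I$ or adjacent to some $w\in I$, so $(u,v)\in\mathcal{N}^2(w)$, and hence $\mathcal{S}_w$ — a $(1+\epsilon)$-spanner of the UDG on $\mathcal{N}^2(w)$ by \autoref{thm:euc-cent} — contains a path of length at most $(1+\epsilon)\lVert uv\rVert$ between $u$ and $v$; all its edges lie in the final output, and concatenating such paths along a shortest $G$-path handles arbitrary pairs. Second, for the degree bound, I invoke the same packing argument as in \autoref{lem:dis-deg}: each vertex $v$ lies in $\mathcal{N}^2(w)$ for at most $8^d=\mathcal{O}(1)$ centers $w\in I$ (all such $w$ lie in a ball of radius $2$ around $v$ and are pairwise at distance at least $1$), and by \autoref{thm:euc-cent} $v$ has bounded degree in each $\mathcal{S}_w$, so it has bounded degree in the union. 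Third, for the lightness bound, I reuse the chain of comparisons \autoref{cor:dis-cmp1}, \autoref{lem:dis-cmp2}, \autoref{lem:dis-cmp3}: the two lemmas \autoref{lem:dis-cmp2} and \autoref{lem:dis-cmp3} are entirely about MSTs, Steiner trees, and an optimal $(1+\epsilon)$-spanner of $G$, and do not depend on which centralized subroutine is used; and \autoref{cor:dis-cmp1} holds because \autoref{thm:euc-cent} gives $\weight(\mathcal{S}_w)=\mathcal{O}(1)\weight(MST(\mathcal{N}^2(w)))$. Summing $\weight(\mathcal{S}_w)=\mathcal{O}(1)\weight(\mathcal{S}^*_w)$ over $w\in I$, where $\mathcal{S}^*$ is an asymptotically optimal $(1+\epsilon)$-spanner of $G$ and each of its edges is counted $\mathcal{O}(1)$ times, yields $\weight(\text{output})=\mathcal{O}(1)\weight(\mathcal{S}^*)=\mathcal{O}(1)\weight(MST(G))$.

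I do not expect a genuine obstacle here; the content of the lemma is that the modular structure of the \autoref{sec:dist} analysis is robust to swapping the centralized black box, and \autoref{thm:euc-cent} certifies that the new black box meets the same interface. The one point requiring a word of care is that \autoref{lem:dis-cmp3} was stated for $\epsilon\le 1$ (it needs $2\ge 1+\epsilon$ so that a $(1+\epsilon)$-path between endpoints of a UBG edge stays within $\mathcal{N}^3(w)$), so I would note that we again assume $\epsilon\le1$; this is harmless since $\epsilon$ is a constant and the stretch is monotone. Everything else is a direct transcription of the proofs already given for \Call{Distributed-Spanner}{}.
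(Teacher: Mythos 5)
Your proposal is correct and matches the paper's own argument, which simply observes that Lemma \ref{lem:dis-str}, Lemma \ref{lem:dis-deg}, and Proposition \ref{prp:dis-light} all carry over verbatim once \Call{Centralized-Euclidean-Spanner}{} is certified (via \autoref{thm:euc-cent}) to meet the same interface as \Call{Centralized-Spanner}{}. Your added remark about the $\epsilon\le 1$ assumption in \autoref{lem:dis-cmp3} is a reasonable extra precaution but does not change the argument.
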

\begin{proof}
The proof follows from Lemma \ref{lem:dis-str}, Proposition \ref{prp:dis-light}, and Lemma \ref{lem:dis-deg} which all hold for this construction as well.
\end{proof}

Now we prove the low-intersection property of our distributed construction.
\begin{proposition}
The spanner returned by \Call{Distributed-Euclidean-Spanner}{} has at most a linear number of edge intersections.\label{prp:euc-dist-int}
\end{proposition}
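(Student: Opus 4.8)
The plan is to show that the global spanner returned by \Call{Distributed-Euclidean-Spanner}{} is, up to constant-factor blow-ups, a union of a constant number of ``near-copies'' of an edge set each of which already has a linear number of crossings, and that this implies a linear bound on the total number of crossings. Recall from \autoref{lem:dis-deg} that every vertex $v\in V$ belongs to at most $8^d=\mathcal{O}(1)$ of the neighborhoods $\mathcal{N}^2(w)$ for $w\in I$; in the two-dimensional Euclidean plane $d=2$, so this is at most a fixed constant $c$. Consequently every edge of the final spanner (whose endpoints lie in a common $\mathcal{N}^2(w)$) belongs to at most $c$ of the local spanners $\mathcal{S}_w$, so it suffices to bound $\sum_{w\in I}(\text{number of crossing pairs within }\mathcal{S}_w)$ plus the number of crossings between $\mathcal{S}_w$ and $\mathcal{S}_{w'}$ for $w\neq w'$, and argue the whole sum is $\mathcal{O}(n)$.

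First I would handle the intra-copy crossings. By \autoref{thm:euc-cent}, each $\mathcal{S}_w=$ \Call{Centralized-Euclidean-Spanner}{$\mathcal{N}^2(w),\epsilon$} has a number of edge intersections that is linear in $|\mathcal{N}^2(w)|$, say at most $\kappa\cdot|\mathcal{N}^2(w)|$ for a constant $\kappa$ depending only on $\epsilon$. Summing over $w\in I$ and using $\sum_{w\in I}|\mathcal{N}^2(w)| = \sum_{v\in V} |\{w\in I : v\in\mathcal{N}^2(w)\}| \le c\,n$, the total number of intra-copy crossings is at most $\kappa c\,n = \mathcal{O}(n)$.

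The main obstacle is the inter-copy crossings: an edge of $\mathcal{S}_w$ may cross an edge of $\mathcal{S}_{w'}$ for $w\ne w'$, and there is no a priori reason this is linear, since the local constructions are run independently. The plan to control this is a locality/packing argument. Each edge has Euclidean length at most $1$, and an edge of $\mathcal{S}_w$ lies entirely within the ball of radius $2$ about $w$ (in fact within $\mathcal{N}^2(w)$, so its endpoints are within distance $2$ of $w$ and hence the whole segment lies within radius $2.5$ of $w$). Two edges $e\in\mathcal{S}_w$, $e'\in\mathcal{S}_{w'}$ can cross only if the segments meet, which forces $\|ww'\|\le 5$. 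Since $I$ is an independent set of $G$, the centers within distance $5$ of a fixed center $w$ have pairwise distance at least $1$ and lie in a ball of radius $5$, so by the packing lemma there are at most $20^d=\mathcal{O}(1)$ of them; call this bound $c'$. Thus each $\mathcal{S}_w$ can only interact with $O(1)$ other local spanners. It then suffices to bound, for each such pair $(w,w')$, the number of crossings between $\mathcal{S}_w$ and $\mathcal{S}_{w'}$ by $O(|\mathcal{N}^2(w)|+|\mathcal{N}^2(w')|)$; summing over the $O(n)$ relevant pairs and again using $\sum_w |\mathcal{N}^2(w)| = O(n)$ then finishes the proof.

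To bound crossings between a single pair $\mathcal{S}_w,\mathcal{S}_{w'}$, I would use the bounded-degree property together with a charging argument of the type used in \cite{eppstein2020edge}: both $\mathcal{S}_w$ and $\mathcal{S}_{w'}$ have maximum degree $O(1)$ (\autoref{lem:euc-dist-prop} / \autoref{thm:euc-cent}), so $|\mathcal{S}_w|=O(|\mathcal{N}^2(w)|)$ and similarly for $w'$. One can then invoke the fact that a pair consisting of two bounded-degree Euclidean $(1+\epsilon)$-spanners (equivalently, subgraphs of greedy spanners) on a common region of bounded diameter, with edge lengths in a bounded range, has only linearly many mutual crossings — this is exactly the kind of statement established for the crossing graph of the greedy spanner in \cite{eppstein2020edge}, applied to the union $\mathcal{S}_w\cup\mathcal{S}_{w'}$ restricted to $\mathcal{N}^2(w)\cup\mathcal{N}^2(w')$, whose crossing number is $O(|\mathcal{N}^2(w)\cup\mathcal{N}^2(w')|)$ and which dominates the cross-term. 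Combining the intra-copy bound, the packing bound on interacting pairs, and this per-pair linear bound yields a total of $O(n)$ crossings, and dividing by $n$ gives the stated constant average number of edge intersections per node.
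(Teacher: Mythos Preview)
Your overall decomposition into intra-copy and inter-copy crossings, together with the packing argument that each $\mathcal{S}_w$ can interact with only $O(1)$ other local spanners, is sound. The gap is in the per-pair bound for the inter-copy term. You assert that the union $\mathcal{S}_w\cup\mathcal{S}_{w'}$ has $O(|\mathcal{N}^2(w)\cup\mathcal{N}^2(w')|)$ crossings because \cite{eppstein2020edge} proves this ``kind of statement'' for greedy spanners. But $\mathcal{S}_w\cup\mathcal{S}_{w'}$ is \emph{not} a greedy spanner, nor a subgraph of one: the two local spanners are built independently on different (overlapping) point sets, so the key structural property driving the crossing bound in \cite{eppstein2020edge}---that no spanner edge can be shortcut within a factor $t$ by any other spanner edge---fails across the two copies. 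An edge $MN\in\mathcal{S}_w$ may very well be shortcut by a path through $\mathcal{S}_{w'}$, yet still be present, because $\mathcal{S}_w$ never saw those points. So the cited result does not apply to the union, and the inter-copy step is unjustified as written.

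The paper closes exactly this gap, and its fix is worth internalizing. Rather than bounding crossings copy-against-copy, it charges every crossing to its \emph{shorter} edge and bounds, for each edge $e$ of the output, how many longer edges can cross it. The point is that the longer edges crossing $e$ that come from a single $\mathcal{S}_w$ \emph{do} satisfy the greedy-spanner structural properties among themselves. The paper therefore strengthens the crossing lemma of \cite{eppstein2020edge} to allow the probe segment $AB$ to be an \emph{arbitrary} segment of length $\le 1$ rather than a spanner edge (this is Lemma~\ref{lem:int-ES}, proved in Appendix~\ref{sec:geometric}); once that is available, any fixed edge $e$ is crossed by only $O(1)$ longer edges of each $\mathcal{S}_w$, and by packing only $O(1)$ choices of $w$ are relevant, giving $O(1)$ longer-edge crossings per edge and hence $O(n)$ crossings total. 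Your argument can be repaired by invoking this strengthened lemma in place of the unsupported claim about the union; once you do, the intra/inter split becomes unnecessary.
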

\begin{proof}
We generalize the result of Eppstein and Khodabandeh \cite{eppstein2020edge} on the edge crossings of the greedy spanner, which states that an arbitrary edge $AB$ of the greedy spanner intersects with at most a constant number of longer edges. The main observation is that the same statement is true when $AB$ is an arbitrary segment on the plane, and the assumption of $AB$ being an edge of the spanner could be eliminated from the proof of the theorem. The modified statement of the theorem for our case would be as follows,

\begin{lemma}
Given an arbitrary segment $\lVert AB\rVert \leq 1$ in the plane, the number of edges $e\in S$ that intersect $AB$ and $|e|\geq \lVert AB\rVert$ is bounded by a constant.\label{lem:int-ES}
\end{lemma}

We defer the geometric proof of this lemma to a separate section in Appendix \ref{sec:geometric}. Here we use this lemma to bound the number of intersections of each edge of the spanner with the longer edges, which in turn proves the linear bound on the number of edge intersections in total.

Let $e=(u,v)$ be an edge of the final spanner and let $f=(u',v')\in S_w$ for some $w$ be another edge that intersects $e$. By the triangle inequality, at least one of $u'$ and $v'$ needs to be adjacent to $u$ in $G$. Thus $u\in\mathcal{N}^3(w)$, which means that $\lVert uw\rVert \leq 3$. So by the packing property there are at most $12^2=144$ different choices for $w$ that $u\in\mathcal{N}^3(w)$. But for every such $w$, according to Lemma \ref{lem:int-ES} for the segment $AB=e$, there are at most a constant number of intersections between $e$ and the edges of length larger than $|e|$ in $S_w$. Since there are at most a constant number of choices for $w$ and for each choice there are at most a constant number of intersections between $e$ and the longer edges in $S_w$, we conclude that the number of intersections between $e$ and longer edges in the final spanner would be bounded by a constant, which indeed proves the proposition.
\end{proof}

Thus the following theorem holds for our distributed spanner construction for the two dimensional Euclidean plane.
\begin{theorem}[Distributed Euclidean Spanner]
\thmEuclideanDistributed\label{thm:euc-dist}
\end{theorem}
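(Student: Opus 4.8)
The plan is to assemble the theorem directly from the pieces already in place, since \Call{Distributed-Euclidean-Spanner}{} is structurally identical to \Call{Distributed-Spanner}{} except that each center runs \Call{Centralized-Euclidean-Spanner}{} instead of \Call{Centralized-Spanner}{}. First I would invoke \autoref{lem:euc-dist-prop} to get the $(1+\epsilon)$ stretch factor, the $\mathcal{O}(1)\weight(MST)$ weight bound, and the $\mathcal{O}(1)$ maximum degree; these hold because the proofs of \autoref{lem:dis-str}, \autoref{prop:lightness}, and \autoref{lem:dis-deg} only use that the per-center subroutine returns a bounded-degree, constant-lightness $(1+\epsilon)$-spanner of the local neighborhood, which \autoref{thm:euc-cent} guarantees for \Call{Centralized-Euclidean-Spanner}{} as well. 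Second, the $\mathcal{O}(\log^* n)$ round bound is argued exactly as in \autoref{lem:dis-rnd}: the maximal independent set computation of \cite{kuhn2005locality} costs $\mathcal{O}(\log^* n)$ rounds and dominates, while retrieving the 2-hop neighborhoods, running the centralized construction locally, and disseminating the $\mathcal{O}(1)$ edges to each endpoint all take $\mathcal{O}(1)$ rounds in the LOCAL model.

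For the low-intersection guarantee I would combine \autoref{prp:euc-dist-int}, which bounds the total number of edge intersections of the output by $\mathcal{O}(n)$, with the bounded-degree property already established: a graph on $n$ vertices with constant maximum degree has $\mathcal{O}(n)$ edges, so the average number of edge intersections per node — the total divided by $n$ — is $\mathcal{O}(1)$. I would also remark explicitly that each of these ingredients carries a constant depending only on $\epsilon$ and the doubling dimension $d$ (here $d=2$), so the same is true of all constants in the final statement; nothing in the composition introduces a dependence on $n$.

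I expect the genuinely substantive obstacle to lie not in this theorem but in \autoref{prp:euc-dist-int}, and ultimately in \autoref{lem:int-ES} — the assertion that an \emph{arbitrary} segment of length at most $1$ in the plane is crossed by only a constant number of not-shorter edges of the spanner. Proving this requires re-deriving the crossing-number estimate of Eppstein and Khodabandeh \cite{eppstein2020edge} while dropping the hypothesis that $AB$ is itself an edge of the greedy construction, so the geometric charging argument must be re-examined to confirm it never appeals to that hypothesis; this is deferred to Appendix \ref{sec:geometric}. Granting that lemma, \autoref{thm:euc-dist} is essentially a corollary, and the proof reduces to citing \autoref{lem:euc-dist-prop}, the round-complexity argument of \autoref{lem:dis-rnd}, and \autoref{prp:euc-dist-int} together with the bounded-degree bound.
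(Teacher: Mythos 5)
Your proposal is correct and follows essentially the same route as the paper, which proves \autoref{thm:euc-dist} by directly combining Lemma \ref{lem:euc-dist-prop} (stretch, lightness, degree, inherited from the analysis of \Call{Distributed-Spanner}{}) with Proposition \ref{prp:euc-dist-int} (linear total intersections, hence constant average per node), with the round complexity argued as in Lemma \ref{lem:dis-rnd}. Your observation that the real work is deferred to Lemma \ref{lem:int-ES} in the appendix matches the paper's structure exactly.
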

\begin{proof}
Follows directly from Lemma \ref{lem:euc-dist-prop} and Proposition \ref{prp:euc-dist-int}.
\end{proof}

This low-intersection property also implies the existence of small separators for our spanner, which is stated in the following corollary.
\begin{corollary}
The spanner returned by \Call{Distributed-Euclidean-Spanner}{} has a separator of size $\mathcal{O}(\sqrt{n})$, where $n$ is the number of vertices. Also, a separator hierarchy can be constructed from its planarization in linear time.
\end{corollary}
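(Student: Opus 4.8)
The plan is to obtain the separator from the linear bound on the number of edge intersections proved in Proposition~\ref{prp:euc-dist-int}, by passing to a planarization and invoking the planar separator theorem. Nothing in this step is specific to our spanner beyond its having $\mathcal{O}(n)$ crossings (and, if one wants the stronger structural conclusion, that each edge crosses only $\mathcal{O}(1)$ longer edges, as in Lemma~\ref{lem:int-ES}).

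First I would fix the straight-line drawing in which each vertex lies at its point and each edge is the segment between its endpoints, and form the planarization $\tilde S$ of the spanner $S$: insert a degree-$4$ dummy vertex at every crossing point and split the two crossing edges there. If three or more edges happen to meet at a common point, perturb the point set first; this leaves the combinatorial spanner unchanged. By Proposition~\ref{prp:euc-dist-int} there are $\mathcal{O}(n)$ crossings, so $\tilde S$ is planar with $|V(\tilde S)| = n + \mathcal{O}(n) = \mathcal{O}(n)$ vertices. Now apply the weighted Lipton--Tarjan planar separator theorem to $\tilde S$, giving weight $1$ to the vertices of $S$ and weight $0$ to the dummy vertices: this produces a set $\tilde C$ of $\mathcal{O}(\sqrt{|V(\tilde S)|}) = \mathcal{O}(\sqrt n)$ vertices whose deletion splits $\tilde S$ into pieces each containing at most $2n/3$ vertices of $S$. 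Replacing every dummy vertex of $\tilde C$ by the at most four endpoints of the two edges crossing there yields a set $C \subseteq V(S)$ of size $\mathcal{O}(\sqrt n)$.

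It remains to check that $C$ separates $S$ in a balanced way. If $uv \in E(S)$ with $u,v \notin C$, then in $\tilde S$ the edge $uv$ is represented by a path through the dummy vertices sitting at its crossings; none of $u$, $v$, or those dummy vertices lies in $\tilde C$ (a dummy vertex among them would have put $u$ and $v$ into $C$), so $u$ and $v$ fall in the same piece and no edge of $S$ crosses between pieces. Since each piece holds at most $2n/3$ vertices of $S$, $C$ is a balanced separator of size $\mathcal{O}(\sqrt n)$. For the separator hierarchy, I would simply observe that $\tilde S$ is a planar graph of linear size and invoke the known linear-time algorithms that recursively apply the planar separator theorem to build a separator hierarchy (equivalently, a recursive-decomposition tree) for a planar graph in time linear in its number of vertices; applied to $\tilde S$ this runs in $\mathcal{O}(n)$ time, which is precisely the second claim.

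I do not expect a serious obstacle here; the two points that need care are handling degenerate drawings (a one-time perturbation) and using the \emph{weighted} separator theorem so that balance is measured in vertices of $S$ rather than of $\tilde S$ --- without the weighting a piece could a priori contain up to $2(n+\mathcal{O}(n))/3$ vertices, which need not be $\le 2n/3$ of the original ones. As an alternative to counting total crossings, Lemma~\ref{lem:int-ES} shows that ordering the edges of $S$ by length makes each edge cross only $\mathcal{O}(1)$ earlier (longer) edges, so the crossing graph of $S$ is $\mathcal{O}(1)$-degenerate; this gives a second route to sublinear separators through the degeneracy-based separator bound of~\cite{eppstein2017crossing}.
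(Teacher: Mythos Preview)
Your primary argument is correct, but it takes a different route from the paper. The paper's proof is a two-line citation: Lemma~\ref{lem:int-ES} shows that every edge of the spanner crosses only $\mathcal{O}(1)$ longer edges, hence the crossing graph of the spanner has bounded degeneracy, and then the separator bound and the linear-time hierarchy are read off directly from the black-box result of Eppstein and Gupta~\cite{eppstein2017crossing}. You instead unfold an explicit argument: planarize (linear size by Proposition~\ref{prp:euc-dist-int}), apply the weighted Lipton--Tarjan theorem, and push dummy separator vertices back to the four incident endpoints. Your check that no edge of $S$ straddles two pieces is correct: if a dummy on the $\tilde S$-path representing $uv$ were in $\tilde C$, both $u$ and $v$ would land in $C$. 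Amusingly, the alternative you mention in your last paragraph \emph{is} exactly the paper's proof.

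What each approach buys: the paper's proof is shorter and immediately yields the hierarchy claim, since \cite{eppstein2017crossing} already packages the planarization, the recursive separator construction, and the linear-time bound for graphs with degenerate crossing graphs. Your direct argument is more self-contained and only needs the total crossing count, not the per-edge bound of Lemma~\ref{lem:int-ES}; but for the hierarchy you are implicitly redoing part of what \cite{eppstein2017crossing} proves, and you should say a word about why the dummy-to-endpoint replacement at every level of the recursion preserves balance and keeps separator sizes summing correctly (it does, at a constant-factor cost, but it is worth stating). Either way the result stands.
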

\begin{proof}
Lemma \ref{lem:int-ES} implies that the crossing graph of our spanner has a bounded degeneracy. Therefore, this corollary follows from the result of Eppstein and Gupta \cite{eppstein2017crossing}.
\end{proof}

\subsection{Higher dimensions}

In higher dimensions of Euclidean spaces, it does not particularly make sense to talk about the edge intersections of the spanner, as the edges would not intersect for points in general locations. But we can generalize our separator result to higher dimensions of Euclidean spaces.

Recently, Li and Than \cite{le2022greedy} proved that any geometric graph with a property that they called $\tau$-lanky has separators of size $\mathcal{O}(\tau n^{1-1/d})$, where $d$ is the dimension of the space. They also proved that greedy spanners are $\mathcal{O}(1)$-lanky and therefore their $k$-vertex subgraphs have separators of size $\mathcal{O}(k^{1-1/d})$. We can take advantage of this result for higher dimensions and prove the existence of small separators for our construction in higher dimensions of Euclidean spaces.

\begin{lemma}%
\label{lem:lanky}%
Let $S$ be the output of $\Call{Distributed-Euclidean-Spanner}{}$ on a set of points in the $d$-dimensional Euclidean space. Given an arbitrary ball of radius $R\leq 1$ in this space, the number of edges $e\in S$ that cut this ball would be bounded by a constant.
\end{lemma}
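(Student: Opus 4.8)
The plan is to reduce the statement to the lankiness of greedy spanners proven by Le and Than \cite{le2022greedy}, so that it becomes the $d$-dimensional analogue of \autoref{lem:int-ES}. First I would observe that $\mathcal{S}_w = \Call{Centralized-Euclidean-Spanner}{\mathcal{N}^2(w),\epsilon}$ is exactly the set of edges of length at most $1$ in the naive greedy $(1+\epsilon)$-spanner on the point set $\mathcal{N}^2(w)$: \autoref{alg:euc-central} processes pairs in increasing order of distance and only stops early, hence it makes the same decisions as \Call{Naive-Greedy}{} on every (shorter) pair it actually examines. Therefore $\mathcal{S}_w$ is a subgraph of a greedy spanner, and by \cite{le2022greedy} the greedy spanner on $\mathcal{N}^2(w)$ is $\tau$-lanky for a constant $\tau = \tau(\epsilon,d)$; since deleting edges can only decrease the number of long edges meeting a fixed ball, $\mathcal{S}_w$ itself is $\tau$-lanky, i.e. for every ball $B$ of radius $r$ at most $\tau$ edges of $\mathcal{S}_w$ that meet $B$ have length at least $r$.

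Next I would fix an arbitrary ball $B = B(c,R)$ with $R \le 1$ and argue that only $\mathcal{O}(1)$ of the local spanners $\mathcal{S}_w$ can contribute an edge that cuts $B$, where ``$e$ cuts $B$'' is meant in the sense relevant to lankiness, namely $e \cap B \neq \varnothing$ and $|e| \ge R$. Since $S = \bigcup_{w\in I}\mathcal{S}_w$, any such $e = (u,v)$ lies in some $\mathcal{S}_w$ with $u,v \in \mathcal{N}^2(w)$, hence $\dist(w,u) \le 2$. Because $e$ meets $B$ and $|e| \le 1$, we have $\dist(u,c) \le |e| + R \le 2$, so $\dist(w,c) \le \dist(w,u) + \dist(u,c) \le 4$. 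Thus every center $w\in I$ capable of contributing an edge that cuts $B$ lies in $B(c,4)$. As the members of $I$ form an independent set of the unit disk graph, they are pairwise at distance more than $1$, and the packing lemma bounds the number of such $w$ by $(4\cdot 4/1)^d = 16^d = \mathcal{O}(1)$.

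Combining the two observations finishes the argument: for each of the at most $16^d$ relevant centers $w$, applying the $\tau$-lankiness of $\mathcal{S}_w$ to the ball $B(c,R)$ shows that at most $\tau$ edges of $\mathcal{S}_w$ cut $B$, so the number of edges of $S$ that cut $B$ is at most $16^d\,\tau$, a constant depending only on $\epsilon$ and $d$. (For $R > 1$ there are no edges of length at least $R$ at all, so the statement is vacuous there; this is why restricting to $R \le 1$ loses nothing.)

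I expect the only delicate part to be the bookkeeping that makes lankiness survive the construction, since it is a property of a single greedy spanner and must be transported through (i) the pruning and truncation that turn the greedy spanner on $\mathcal{N}^2(w)$ into $\mathcal{S}_w$, which is harmless by subgraph monotonicity, and (ii) the union over $w \in I$, which is tamed by the packing estimate above. Beyond this there is essentially no new geometric content to prove: the substance is entirely inherited from \cite{le2022greedy}, exactly as the two-dimensional edge-crossing bound in \autoref{lem:int-ES} is inherited from \cite{eppstein2020edge}.
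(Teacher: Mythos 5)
Your proposal is correct and follows the same route the paper sketches: inherit lankiness of each local spanner $\mathcal{S}_w$ from the centralized result of \cite{le2022greedy} (via the observation that $\mathcal{S}_w$ is a length-truncated, hence subgraph of, the greedy spanner on $\mathcal{N}^2(w)$), and then handle the union over $w\in I$ by a packing argument on the centers near the ball, exactly as the paper does for the two-dimensional crossing bound in Proposition \ref{prp:euc-dist-int}. The paper's own proof is only a one-line deferral to \cite{le2022greedy} and to ``considerations similar to the proof of Lemma \ref{lem:int-ES}''; your write-up supplies precisely those omitted details, including the correct reading of ``cuts'' as requiring $|e|\geq R$.
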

\begin{proof}
A similar proof to the proof of lemma \ref{lem:int-ES} yields this result in higher dimensions as well. This result is also proven for the centralized greedy algorithm in \cite{le2022greedy}, but the extension to the distributed setting needs some considerations, which are similar to the proof of Lemma \ref{lem:int-ES}, and not included to avoid repetition.
\end{proof}

According to lemma \ref{lem:lanky} our construction in $d$-dimensional Euclidean space is $\mathcal{O}(1)$-lanky and therefore, it possesses separators of small size.
\begin{corollary}
The spanner returned by $\Call{Distributed-Euclidean-Spanner}{}$ in the $d$-dimensional Euclidean space is $\mathcal{O}(1)$-lanky. Therefore, any $k$-vertex subgraph of this construction possesses separators of size $\mathcal{O}(k^{1-1/d})$. Also, a separator hierarchy can be built for this spanner in expected linear time.
\end{corollary}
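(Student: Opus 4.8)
The plan is to obtain this corollary almost directly from \autoref{lem:lanky} together with the separator machinery of Le and Than~\cite{le2022greedy}. Recall that the $\tau$-lanky property of~\cite{le2022greedy} bounds by $\tau$ the number of edges of length at least $r$ that cross any ball of radius $r$. \autoref{lem:lanky} already certifies exactly this bound, with a constant depending only on $\epsilon$ and $d$, for every ball of radius $R\le 1$. For a ball of radius $R>1$ there is nothing to check: since the output $S$ is a subgraph of the unit disk graph $G$, every edge of $S$ has Euclidean length at most $1$, so no edge of length at least $R$ exists and the bound holds vacuously. Combining the two ranges shows that $S$ is $\mathcal{O}(1)$-lanky in the $d$-dimensional Euclidean space.

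Next I would observe that lankiness is inherited by subgraphs. If $S'$ is a $k$-vertex induced subgraph of $S$, then $S'$ sits in the same Euclidean space, every edge of $S'$ is also an edge of $S$, and deleting vertices and edges can only decrease the number of long edges crossing any fixed ball. Hence $S'$ is $\mathcal{O}(1)$-lanky with the same constant, and the separator theorem of~\cite{le2022greedy} yields a balanced separator of $S'$ of size $\mathcal{O}(k^{1-1/d})$, which is the second assertion of the corollary.

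For the separator hierarchy I would apply the separator theorem recursively: split $S$ with a balanced separator, recurse on each side, and record the recursion tree. Every subproblem encountered is again an induced subgraph of $S$, hence $\mathcal{O}(1)$-lanky, so each recursive call produces a separator of size $\mathcal{O}(m^{1-1/d})$ on its $m$ vertices. Plugging in a linear-time randomized separator routine and combining it with the standard two-phase recursive hierarchy construction — the higher-dimensional analogue of what Eppstein and Gupta~\cite{eppstein2017crossing} use in the planar case — the total work over the whole recursion telescopes to expected $\mathcal{O}(n)$.

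The only genuinely delicate point, and the part I expect to require the most care, is the bookkeeping for the hierarchy: verifying that lankiness truly passes to every subproblem of the recursion (so the per-level separator sizes stay $\mathcal{O}(m^{1-1/d})$ rather than degrading), and that the linear-time separator routine can be orchestrated so the recursion sums to \emph{expected linear} total time rather than $\mathcal{O}(n\log n)$. No new geometric estimate is needed beyond \autoref{lem:lanky}; it is purely a matter of invoking the hierarchy results of~\cite{le2022greedy,eppstein2017crossing} correctly.
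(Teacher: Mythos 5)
Your proposal is correct and follows essentially the same route as the paper: both derive the $\mathcal{O}(1)$-lanky property from Lemma~\ref{lem:lanky} and then invoke the separator and hierarchy results of Le and Than~\cite{le2022greedy}. The extra details you supply (the vacuous case $R>1$ since all edges have length at most $1$, and the inheritance of lankiness by subgraphs) are correct and only make explicit what the paper leaves implicit.
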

\begin{proof}
The $\mathcal{O}(1)$-lanky property follows from lemma \ref{lem:lanky}, and the existence of separators of size $\mathcal{O}(k^{1-1/d})$ and the expected time on finding a separator hierarchy follows from the result of \cite{le2022greedy}.
\end{proof}

\section{Experimental Results}
\label{sec:exp}
In this section we provide empirical evidence for the efficiency of our distributed construction. While we have proven rigorous bounds on the lightness and sparsity of our spanner, it might be unclear how it performs in practice compared to an efficient centralized construction. We design an experiment in the two dimensional Euclidean plane that answers this question.

We run our distributed spanner algorithm (Algorithm \ref{alg:euc-dist}) on a point set consisting of 100 points uniformly chosen at random from a 5 by 5 square in the two dimensional Euclidean plane. This point set, together with its unit disk graph, is drawn in Figure \ref{fig:exp-1}. In the first part of the experiment, we run our distributed algorithm for different values of $t$, the stretch-factor, and we compare the result of our algorithm with the output of the centralized greedy spanner on the same point set, with the same parameter $t$.

\begin{figure}[ht]
    \centering
    \includegraphics[trim={1.7cm 1.7cm 1.5cm 1.7cm},clip,width=0.3\textwidth]{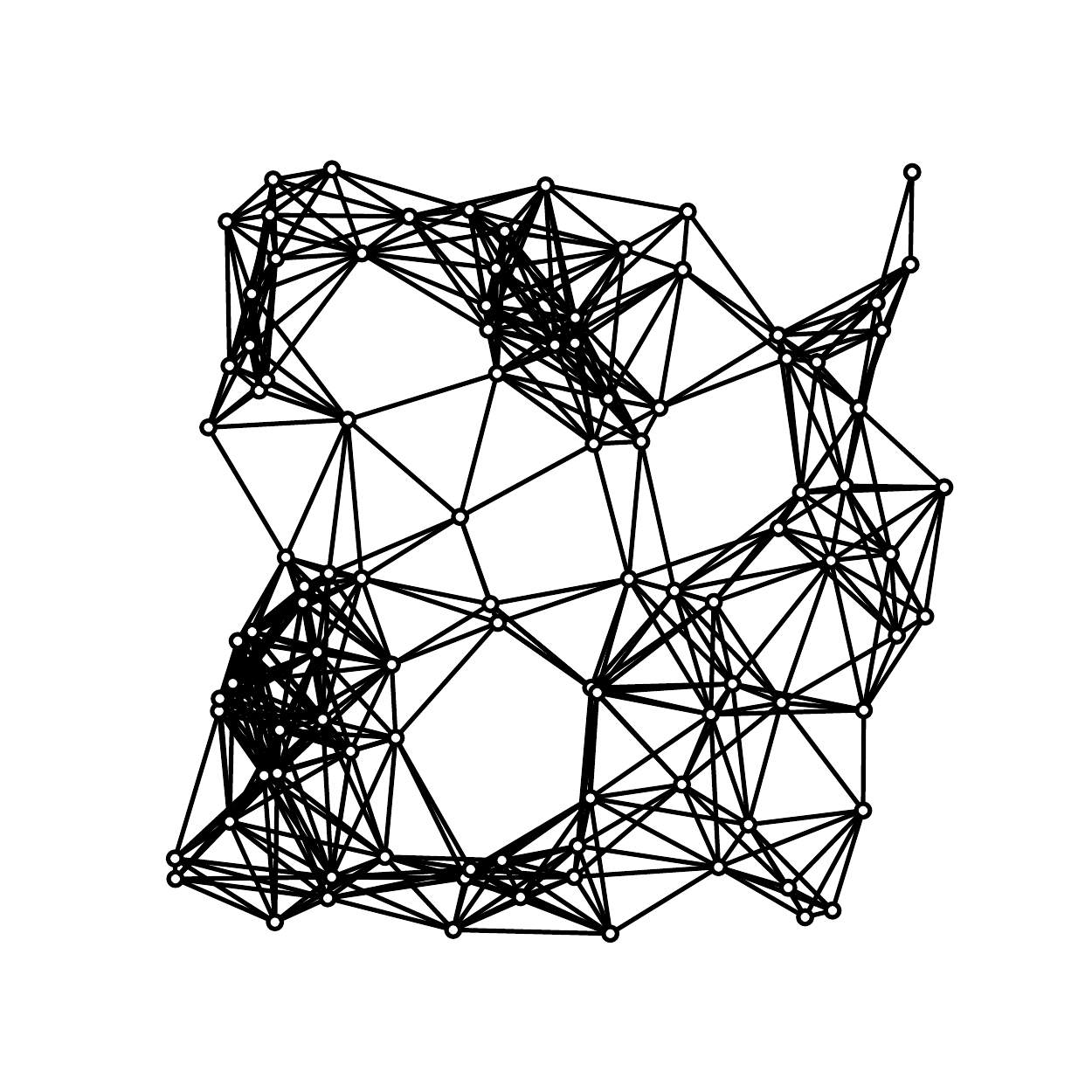}
    \caption{The random point set and its unit disk graph, from the first part of our experiment.}
    \label{fig:exp-1}
\end{figure}

The results of this (Figure \ref{fig:exp-2}) shows a near-optimal performance from our distributed algorithm. As we mentioned earlier, the centralized greedy spanner is known for its near-optimality, and our construction is comparable with this near-optimal solution.

\begin{figure}[ht]
    \centering
    \begin{subfigure}[b]{0.4\textwidth}
        \centering
        \includegraphics[trim={0.2cm 0cm 1cm 0.7cm},clip,width=\textwidth]{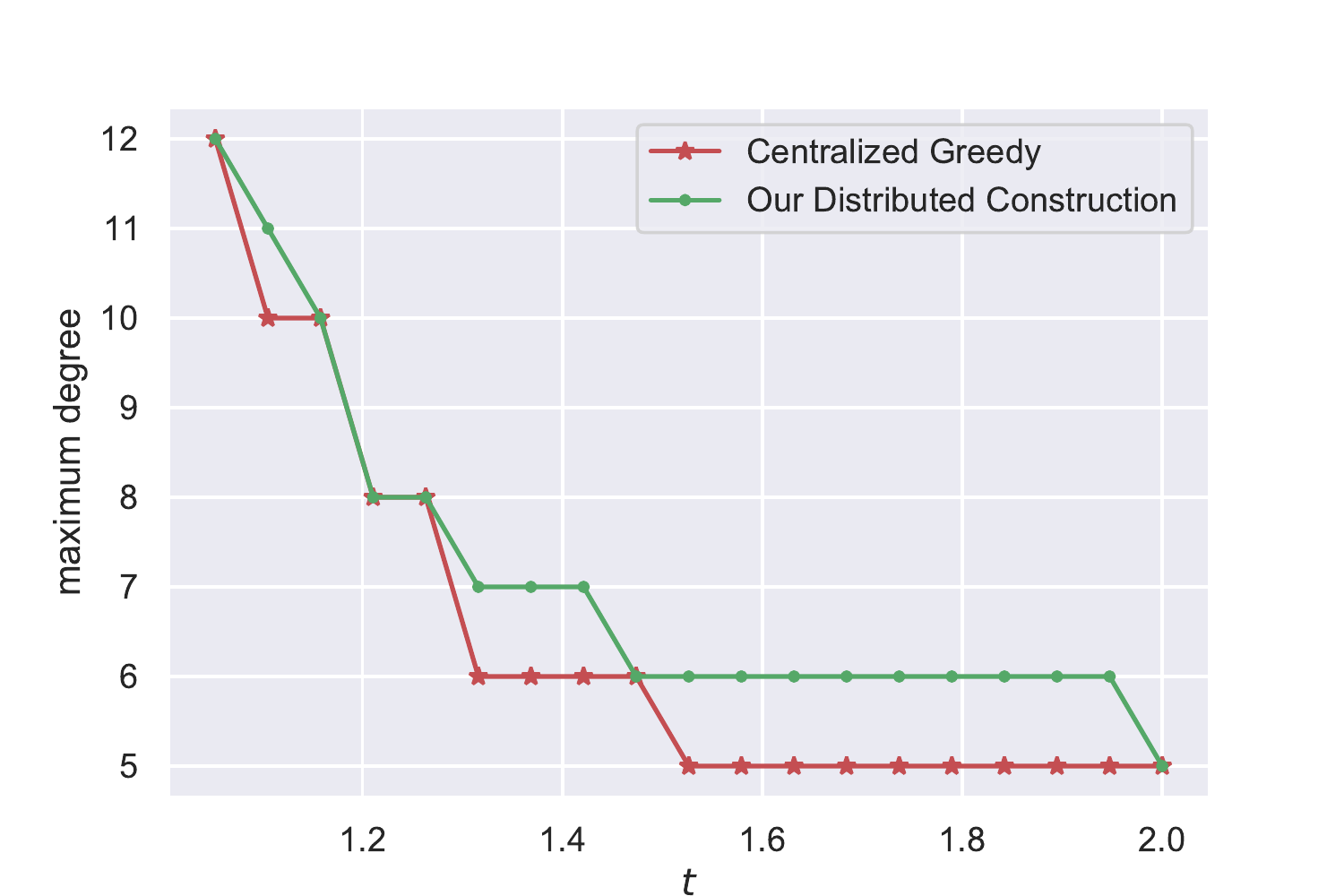}
        \caption{Degree comparison}
    \end{subfigure}
    \begin{subfigure}[b]{0.4\textwidth}
        \centering
        \includegraphics[trim={0.2cm 0cm 1cm 0.7cm},clip,width=\textwidth]{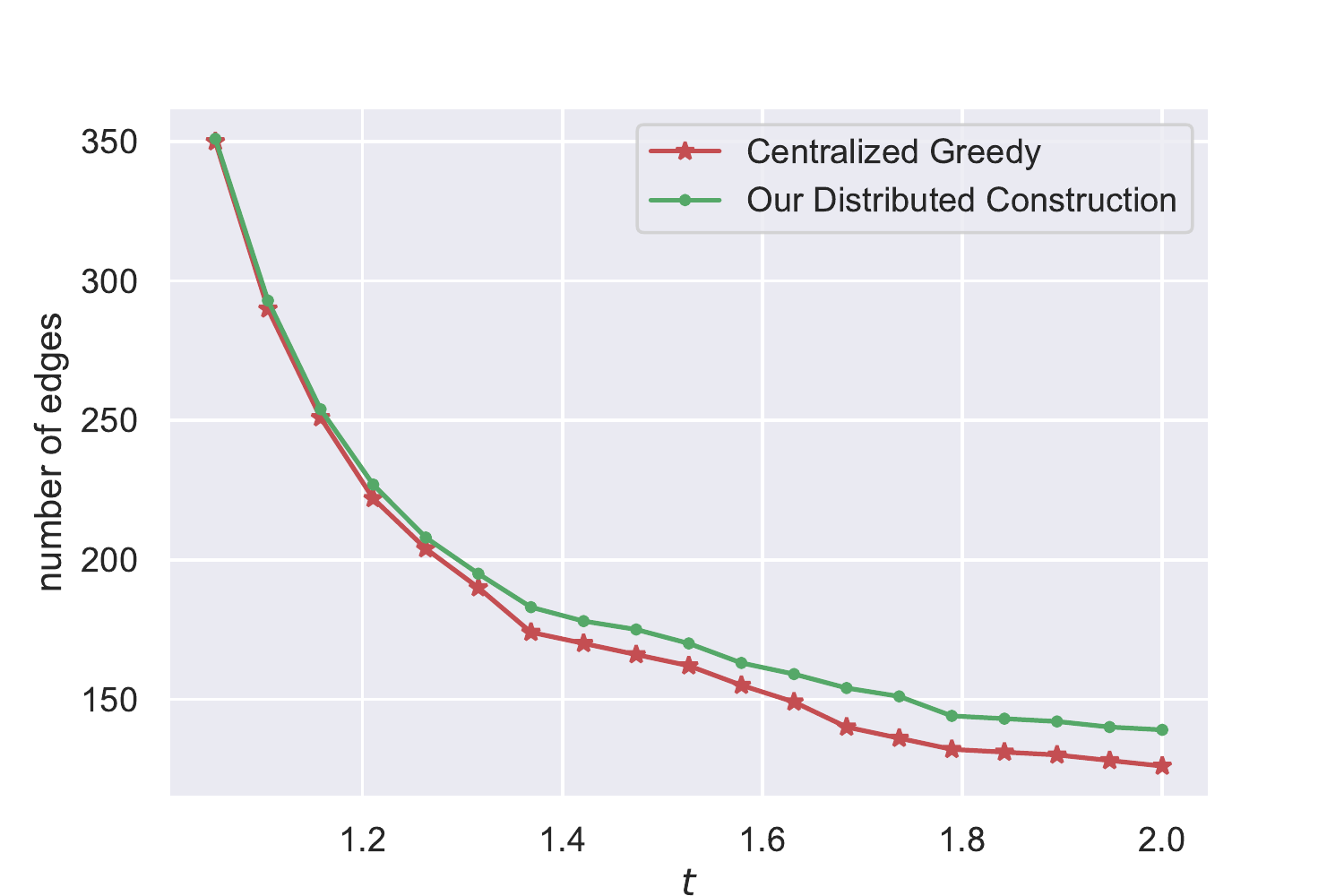}
        \caption{Size comparison}
    \end{subfigure}
    \begin{subfigure}[b]{0.4\textwidth}
        \centering
        \includegraphics[trim={0.2cm 0cm 1cm 0.7cm},clip,width=\textwidth]{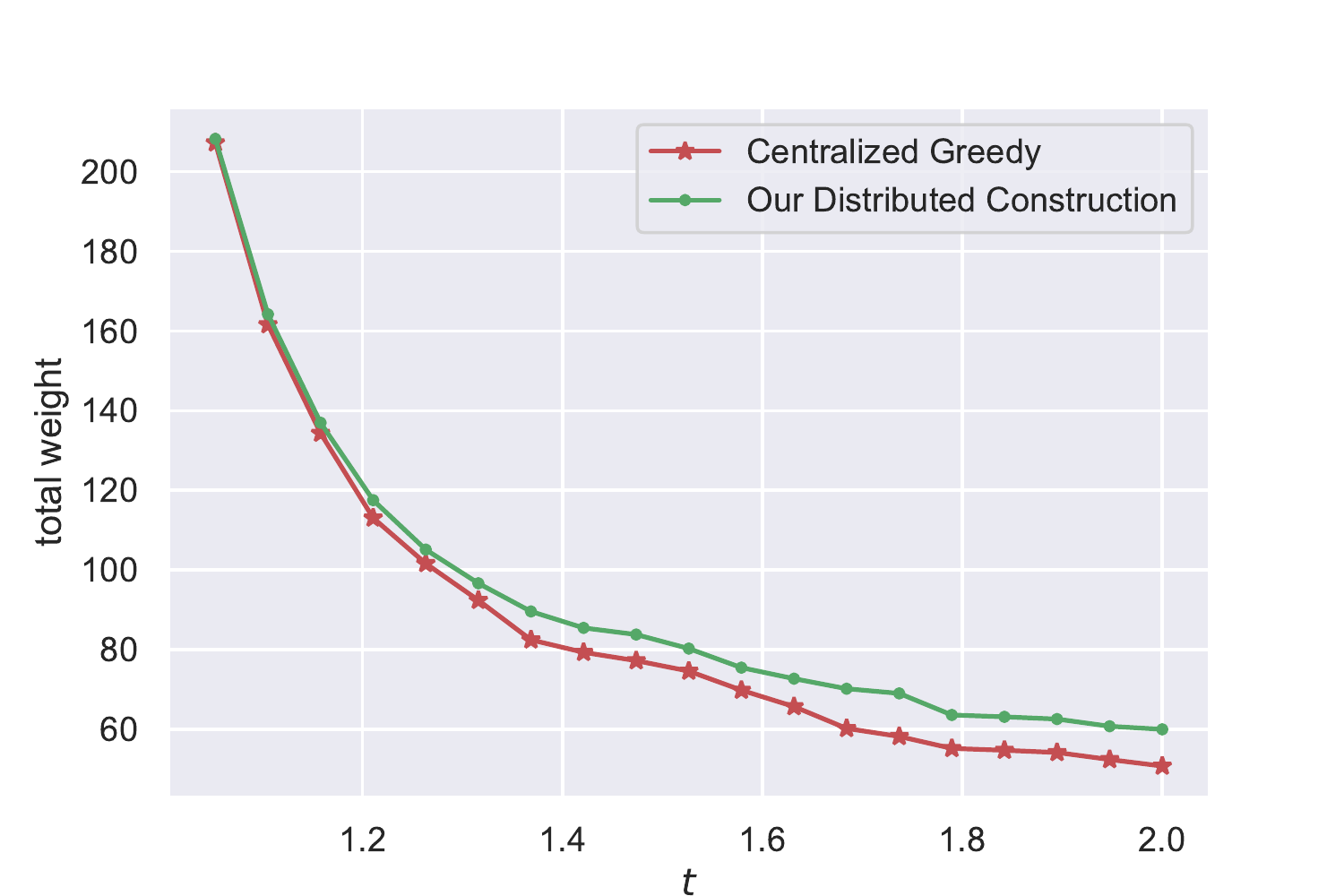}
        \caption{Weight comparison}
    \end{subfigure}
    \hfill
    \caption{Comparisons for a random instance of $100$ points uniformly taken from a $5\times5$ square, for different values of the stretch parameter, $t$.}
    \label{fig:exp-2}
\end{figure}

Next, we define the efficiency of a distributed algorithm with respect to a measure. Let $\mathcal{M}$ be a measure that we would like to minimize, e.g. maximum degree, size, or total weight. Then we define the \emph{efficiency} of a distributed construction with respect to $\mathcal{M}$ to be the ratio
$$\frac{\mathcal{M}(Greedy)}{\mathcal{M}(ALG)}$$
where $ALG$ is the output of the distributed algorithm and $Greedy$ is the output of the centralized greedy on the same point set. We choose the centralized greedy as our base of comparison because it is known to be near-optimal. In the second part of our experiments, we compare the efficiency of our distributed algorithm against the centralized greedy algorithm with respect to the maximum degree, size, and total weight, for 10 random point sets chosen the same way as in the first part. The average of these efficiencies for each measure is reported in Figure \ref{fig:exp-3}.

\begin{figure}[ht]
    \centering
    \includegraphics[trim={0.2cm 0cm 1cm 0.7cm},clip,width=0.4\textwidth]{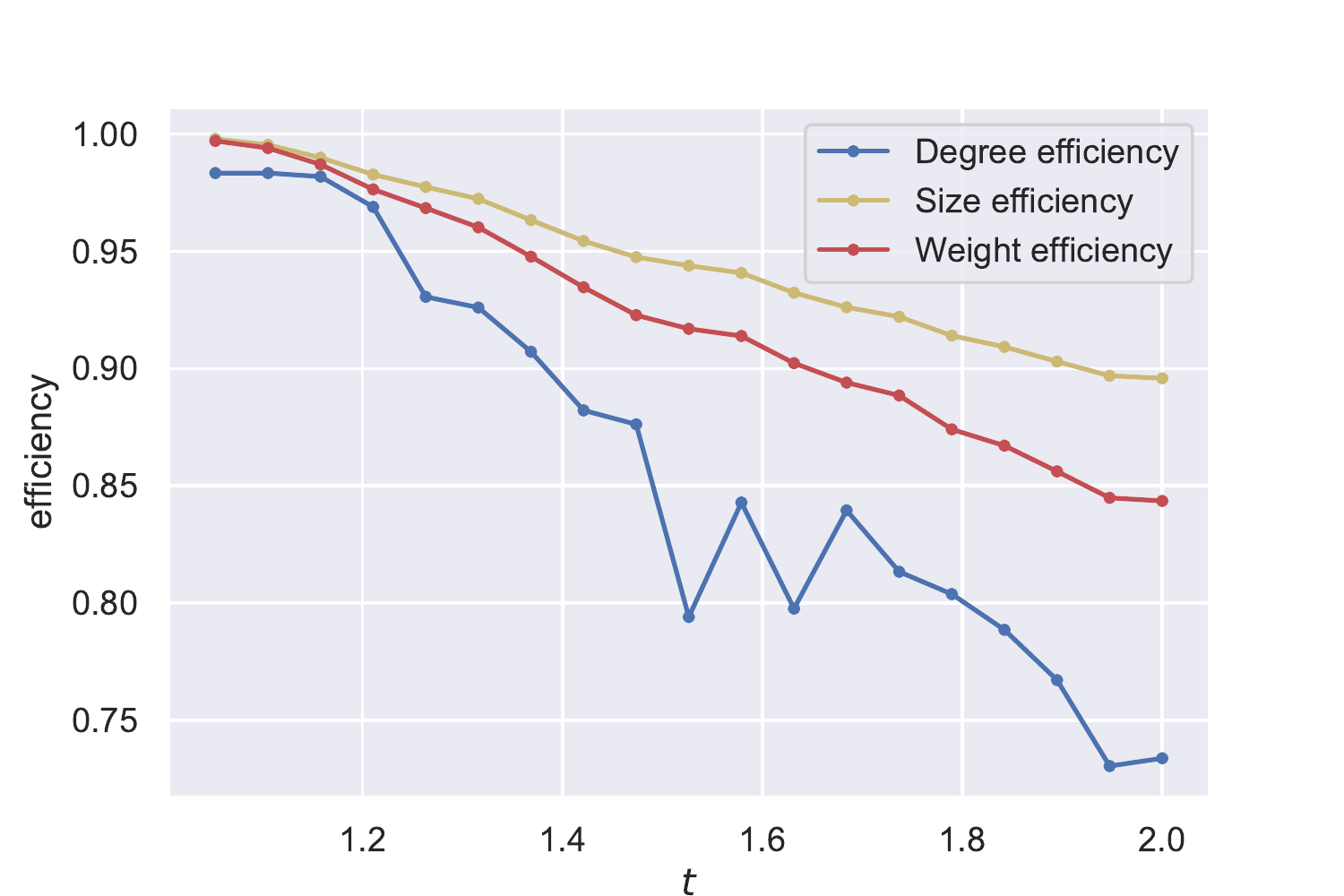}
    \caption{The average efficiencies with respect to maximum degree, size, and total weight, for 10 random instances of $100$ points uniformly taken from a $5\times5$ square, plotted against different stretch parameters, $t$.}
    \label{fig:exp-3}
\end{figure}

We again observe that our distributed algorithm is performing efficiently with respect to the size and total weight. We denote that when $t=2$, the maximum degree of the greedy spanner is 5, so even a single extra edge around any vertex would cause an efficiency below $83\%$. Therefore, even for high values of $t$ our algorithm is performing decently.

\section{Conclusions}
In this paper we resolved an open question from 2006 and we proved the existence of light-weight bounded-degree $(1+\epsilon)$-spanners for unit ball graphs in the spaces of bounded doubling dimension. Moreover, we provided a centralized construction and a distributed construction in the LOCAL model that finds a spanner with these properties. Our distributed construction runs in $\mathcal{O}(\log^* n)$ rounds, where $n$ is the number of vertices in the graph. If a maximal independent set of the unit ball graph is known beforehand, our algorithm runs in constant number of rounds. Next, we showed how to adjust our distributed construction to work in the CONGEST model, without touching its asymptotic round complexity. In this way, we provided the first CONGEST algorithm for finding a light spanner of unit ball graphs.

In addition, we further adjusted these algorithms (in section \ref{sec:euc}) for the case of unit disk graphs in the two dimensional Euclidean plane, and we presented the first centralized and distributed constructions for a light-weight bounded-degree $(1+\epsilon)$-spanner that also has a linear number of edge intersections in total. This can be useful for practical purposes if minimizing the number of edge intersections is a priority. We proved, based on this low-intersection property, that our spanner has sub-linear separators, and a separator hierarchy, and we were able to generalize this result to higher dimensions of Euclidean spaces.

Finally, we ran experiments (in section \ref{sec:exp}) on random point sets in the two dimensional Euclidean plane, to ensure that our theoretical bounds are also supported by enough empirical evidence. Our results show that our construction performs efficiently with respect to the maximum degree, size, and total weight.

\bibliographystyle{plainurl}
\bibliography{reference}

\appendix

\section{Proof of Lemma \ref{lem:int-ES}}%
\label{sec:geometric}

\begin{proof}
We divide the proof into two cases:
\begin{enumerate}
    \item Intersections of $AB$ with significantly larger edges, and
    \item Intersections of $AB$ with edges of within a constant factor length difference.
\end{enumerate}

For each case we prove a constant bound on the number of intersections of $AB$ with those edges. First, we consider the intersections with significantly larger edges. It would be sufficient if we prove a constant bound on the number of $\theta$-parallel large edges, where $\theta$ is a small constant. By $\theta$-parallel (almost-parallel) we mean a set of edges that their pair-wise angle is at most $\theta$. Eppstein and Khodabandeh \cite{eppstein2020edge} defined an end-point ordering of the spanner edges over a baseline, which is selected arbitrarily from the set of almost parallel segments, and they proved the following lemma.

\begin{lemma}
Let $MN$ and $PQ$ be two intersecting segments from a set of $\theta$-parallel spanner segments. Also assume that $\theta < \frac{t-1}{2t}$ where $t$ is the stretch factor of the spanner. Then $MN$ and $PQ$ are endpoint-ordered, i.e. the projection of one of the segments on the baseline of the set cannot be included in the projection of the other one.
\label{lem:interorder}
\end{lemma}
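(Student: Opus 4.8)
The plan is to argue by contradiction, exploiting the defining feature of the greedy construction: at the instant a spanner edge is inserted there is no path of length at most $t$ times the edge length between its endpoints, while conversely, once all pairs shorter than a given length have been processed, every pair at distance strictly below that length already has a $t$-path. First I would place coordinates with the baseline along the $x$-axis; since $MN$ and $PQ$ each make angle at most $\theta$ with it, after a translation and a reflection I may carry out the computation in the transparent case where $MN$ is horizontal, say $M=(0,0)$ and $N=(|MN|,0)$, the $\le\theta$ tilt of $MN$ contributing only lower-order corrections. Then I would assume, toward a contradiction, that the segments are \emph{not} endpoint-ordered, i.e.\ one projection contains the other, say the projection of $PQ$ lies inside that of $MN$, which is $[0,|MN|]$. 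A one-line estimate from the angle bound gives $|PQ|\le|MN|\sec\theta$, so the edge with the larger projection, $MN$, is the longer of the two up to the factor $\sec\theta$; I would treat the generic case in which $MN$ is processed no earlier than $PQ$ and defer the narrow window $|MN|<|PQ|\le|MN|\sec\theta$ to the mirror-image argument at the end.

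Next I would build a detour witnessing a short $M$--$N$ path. Writing $P=(p_1,p_2)$, $Q=(q_1,q_2)$: because $PQ$ crosses $MN$, the endpoints $P$ and $Q$ lie on opposite sides of the line $MN$, and because the projections are nested one may arrange $0\le p_1<q_1\le|MN|$ (the case $q_1<p_1$ is a reflection, and $p_1=q_1$ cannot occur, as it would make $PQ$ perpendicular to the baseline). This forces $|p_2|,|q_2|\le(q_1-p_1)\tan\theta$, and then a short calculation gives $|MP|<|MN|$ and $|NQ|<|MN|$. Hence by the time $MN$ is considered, the current graph already contains the edge $PQ$ together with a $t$-path from $M$ to $P$ and a $t$-path from $Q$ to $N$; concatenating them yields an $M$--$N$ walk of length at most $t|MP|+|PQ|+t|NQ|$.

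The final step is to bound that length. From $|MP|\le p_1+|p_2|$, $|NQ|\le(|MN|-q_1)+|q_2|$, the bounds $|p_2|,|q_2|\le(q_1-p_1)\tan\theta$, and $|PQ|\le(q_1-p_1)\sec\theta$, the detour has length at most
\[
tp_1+t\bigl(|MN|-q_1\bigr)+(q_1-p_1)\bigl(\sec\theta+2t\tan\theta\bigr)
 \;=\; t|MN|-(q_1-p_1)\bigl(t-\sec\theta-2t\tan\theta\bigr).
\]
Since $q_1-p_1>0$, this is strictly below $t|MN|$ exactly when $t(\cos\theta-2\sin\theta)>1$, and the hypothesis $\theta<\frac{t-1}{2t}$ yields that inequality. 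So when $MN$ was considered there already existed an $M$--$N$ path shorter than $t|MN|$, contradicting the rule that caused $MN$ to be inserted; therefore the two projections cannot be nested.

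The part I expect to be fiddly is the bookkeeping around this estimate rather than the estimate itself. I would need to pair the detour correctly — route $M$ to whichever endpoint of $PQ$ is on $M$'s side of the crossing and $N$ to the other — and confirm that both matchings give the same bound; check that the degenerate configurations (a shared endpoint, or $PQ$ nearly perpendicular to the baseline) either cannot occur or reduce to the same inequality; and run the symmetric argument in the window $|MN|<|PQ|\le|MN|\sec\theta$, where one instead detours through the edge $MN$ while processing $PQ$ and obtains the identical threshold. If one insists on the precise constant $\tfrac{t-1}{2t}$ rather than a slightly smaller one, the crude bound $\sqrt{a^2+b^2}\le a+b$ must be sharpened, and one must pin down whether ``$\theta$-parallel'' is measured pairwise or against the baseline, since the latter convention costs a factor of two. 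Finally I would state explicitly that the lemma is really about the \emph{greedy} spanner: the proof uses the greedy insertion rule, not merely the $t$-spanner property.
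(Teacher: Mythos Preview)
The paper does not actually prove this lemma; it quotes it verbatim from Eppstein and Khodabandeh~\cite{eppstein2020edge} and uses it as a black box. So there is no ``paper's own proof'' to compare against here. That said, the way the paper \emph{invokes} the lemma inside the proof of Lemma~\ref{lem:order} reveals that its operative content is precisely the shortcut inequality
\[
t'\,|MP|+|PQ|+t'\,|QN|\le t'\,|MN|
\]
whenever the projection of $PQ$ sits inside that of $MN$; the contradiction with the greedy insertion rule is then immediate. Your contrapositive argument derives exactly this inequality, so your plan matches the intended mechanism.

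On correctness: the skeleton is right, and you have already flagged the genuine soft spots yourself. Two of them deserve emphasis. First, your displayed claim that $\theta<\tfrac{t-1}{2t}$ implies $t(\cos\theta-2\sin\theta)>1$ is not literally true: from $\sin\theta<\theta$ one only gets $t-2t\theta>1$, and the extra loss $t(1-\cos\theta)$ is not absorbed, so with the crude $\sqrt{a^2+b^2}\le a+b$ bound you recover the lemma only for a slightly smaller threshold in~$\theta$. Getting the stated constant requires the sharper distance estimates (note the appearance of $\sin(\theta/2)$ in the companion Lemma~\ref{lem:lowerbd}, which hints at the level of care used in the source). Second, your reduction ``take $MN$ horizontal'' silently changes the meaning of the projections: the lemma speaks of projections onto the \emph{baseline of the set}, not onto $MN$, and the two can differ by up to~$\theta$; you note this, but it has to be tracked through the inequality rather than dismissed as lower order if you want the exact constant. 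Your closing remark that the argument needs the greedy rule, not just the $t$-spanner property, is correct and worth keeping.
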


This lemma assumes that $MN$ and $PQ$ intersect each other on an interior point. We can generalize this lemma to a case that they both intersect an arbitrary segment $AB$ on the plane.

\begin{lemma}
Let $MN$ and $PQ$ be two segments chosen from a set of $\theta$-parallel spanner segments that cross an arbitrary segment $AB$. Also assume that $\theta<\frac{t-1}{2(t+1)}$, and $\min(|MN|, |PQ|)\geq\frac{3t(t+1)}{t-1}|AB|$, where $t$ is the spanner parameter. Then $MN$ and $PQ$ are endpoint-ordered.
\label{lem:order}
\end{lemma}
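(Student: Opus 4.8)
The plan is to mimic the proof of \autoref{lem:interorder}, letting the short segment $AB$ play the role that the common crossing point plays there. Write $X = MN\cap AB$ and $Y = PQ\cap AB$; since both points lie on the segment $AB$ we have $\lVert XY\rVert\le |AB|$, so $MN$ and $PQ$ pass within distance $|AB|$ of one another even though they need not actually meet. I would then revisit every step of the argument for \autoref{lem:interorder} in which the coincidence of the two crossing points is used, and replace the identity ``the crossing point lies on both segments'' by the weaker pair of facts ``$X$ lies on $MN$'' and ``$\lVert XY\rVert\le |AB|$ with $Y$ on $PQ$''. Concretely, wherever the original proof bounds the distance from an endpoint of one of the two segments to the supporting line (or to the segment itself) of the other purely in terms of $\theta$ and the segment lengths, the corresponding bound here picks up an additive term of at most a small constant multiple of $|AB|$. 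Assuming for contradiction that $MN$ and $PQ$ are \emph{not} endpoint-ordered, these are exactly the quantities that must be compared against the stretch threshold supplied by the $t$-spanner (greedy) property, and the contradiction in \autoref{lem:interorder} must be recovered from the perturbed inequalities.

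The two numerical hypotheses are calibrated precisely so that this substitution loses nothing. The angle bound $\theta<\frac{t-1}{2(t+1)}$ gives $2\theta<\frac{t-1}{t+1}$, and the length bound $\min(|MN|,|PQ|)\ge \frac{3t(t+1)}{t-1}|AB|$ rearranges to $3|AB|\le \frac{t-1}{t(t+1)}\min(|MN|,|PQ|)$; adding these,
\[
 2\theta\cdot\min(|MN|,|PQ|) + 3|AB| \;\le\; \Big(\tfrac{t-1}{t+1}+\tfrac{t-1}{t(t+1)}\Big)\min(|MN|,|PQ|) \;=\; \tfrac{t-1}{t}\,\min(|MN|,|PQ|),
\]
which is exactly the bound that the angle term alone satisfies in the proof of \autoref{lem:interorder} under its weaker hypothesis $\theta<\frac{t-1}{2t}$ (i.e.\ $2\theta<\frac{t-1}{t}$). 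In other words, the error introduced by letting the two segments miss each other by as much as $|AB|$ is absorbed into the same slack the original proof already had to spare, so the chain of inequalities terminating in a contradiction carries over unchanged. The constant $3$ reflects that the $|AB|$ slack is incurred only a small bounded number of times along that chain: once in passing from $X$ on $MN$ to $Y$ on $PQ$, and again when converting a bound relative to the supporting line of $MN$ into a bound relative to the segment $MN$ near its endpoints.

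The main obstacle I anticipate is bookkeeping rather than arithmetic: pinning down exactly which inequalities in the proof of \autoref{lem:interorder} the crossing assumption feeds into, and verifying that weakening ``$MN$ and $PQ$ intersect'' to ``$MN$ and $PQ$ come within $|AB|$'' affects them only through the additive $\mathcal{O}(|AB|)$ terms described above — in particular, that no step relies on a genuinely topological consequence of the crossing (such as the two segments separating the plane in a specific way). A secondary technical point is the behaviour near the endpoints of $MN$: the estimate ``distance from an endpoint of $PQ$ to the supporting line of $MN$'' must be upgraded to ``distance to the segment $MN$,'' and these differ near $M$ or $N$; here one uses the strict nesting of the projections together with $|AB|\ll\min(|MN|,|PQ|)$ to keep the relevant feet of perpendiculars in the interior of $MN$, at the cost of at most one further additive $|AB|$ term, which is already accounted for in the constant $3$.
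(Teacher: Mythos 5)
Your calibration is exactly the one the paper uses: the hypothesis $\theta<\frac{t-1}{2(t+1)}$ is precisely the condition $\theta<\frac{t'-1}{2t'}$ for $t'=(t+1)/2$, and the length hypothesis is rearranged to $3|AB|\le\frac{t-1}{t(t+1)}\min(|MN|,|PQ|)$ so that an additive $3|AB|$ error is absorbed by the slack between $t'$ and $t$. So the arithmetic heart of your plan matches the paper's. Where you differ is in how the perturbation is executed: you propose to reopen the proof of Lemma \ref{lem:interorder} and thread an $\mathcal{O}(|AB|)$ error through each of its internal inequalities, whereas the paper keeps Lemma \ref{lem:interorder} as a black box. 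It translates $PQ$ by the vector $\overrightarrow{TS}$ joining the two crossing points on $AB$ (length at most $|AB|$) so that the translated segment $P'Q'$ genuinely intersects $MN$, extends $MN$ on one side by $|\overrightarrow{TS}|$ to restore the projection-nesting hypothesis, applies Lemma \ref{lem:interorder} verbatim with parameter $t'$, and then converts back using $|M'P'|\ge|MP|-|AB|$, $|N'Q'|\ge|NQ|-|AB|$, $|M'N'|\le|MN|+|AB|$ --- which is where the constant $3$ actually comes from (one $|AB|$ per length, not the sources you guessed).

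The gap in your proposal is that the step you yourself flag as the ``main obstacle'' --- verifying that no step of Lemma \ref{lem:interorder}'s proof uses the crossing in a way that degrades by more than an additive $\mathcal{O}(|AB|)$, and that the endpoint-versus-supporting-line issue costs only one further $|AB|$ --- is exactly the content of the lemma in this setting, and you leave it unexamined. As written, the argument is a plausible plan plus a correct budget, not a proof: you have shown that \emph{if} the internal inequalities perturb additively by at most $3|AB|$ in total, the hypotheses suffice, but you have not shown that they do. The paper's translation device is worth internalizing because it discharges precisely this obligation without ever opening the cited proof: once $P'Q'$ actually crosses $M'N'$, every topological and metric consequence of an intersection is available for free, and the only quantities that change are the three endpoint distances, each by at most $|AB|$. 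If you want to complete your version instead, you must actually exhibit the perturbed chain of inequalities; the risk is not that it fails but that the total additive loss exceeds $3|AB|$, in which case your length hypothesis (with its constant $3$) would no longer close the argument.
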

\begin{proof}
The proof goes by contradiction. Let $l$ be an arbitrarily chosen baseline from our set of $\theta$-parallel segments. Without loss of generality suppose that the projections of $P$ and $Q$ on the baseline are both between the projections of $M$ and $N$. We use Lemma \ref{lem:interorder} to show that $MN$ can be shortcut by $PQ$ by a factor of $t$, i.e.
$$t\cdot |MP| + |PQ| + t\cdot |QN| \leq t\cdot |MN|$$
We move $PQ$ by slightly with respect to its length, so that the new segment and $MN$ intersect each other, and then we use Lemma \ref{lem:interorder} for these segments. Let $MN$ and $PQ$ intersect $AB$ at $S$ and $T$, respectively. We move $PQ$ by $\overrightarrow{TS}$ so that it intersects $MN$. Let $P'Q'$ be the result of the movement. The projections of $P'$ and $Q'$ on the baseline may not be between $M$ and $N$. We can extend $MN$ on one side by $|\overrightarrow{TS}|$ to get a new segment $M'N'$, and the property would hold afterwards. Before the movement the projections of $P$ and $Q$ are both between the projections of $M$ and $N$, so after movement at most one of the projections of $P'$ or $Q'$ can be outside of the projections of $M$ and $N$. So extending on one side will be sufficient.

\begin{figure}[t]
\centering
\includegraphics[width=0.4\textwidth,trim=1.5cm 4.5cm 3cm 2.5cm,clip]{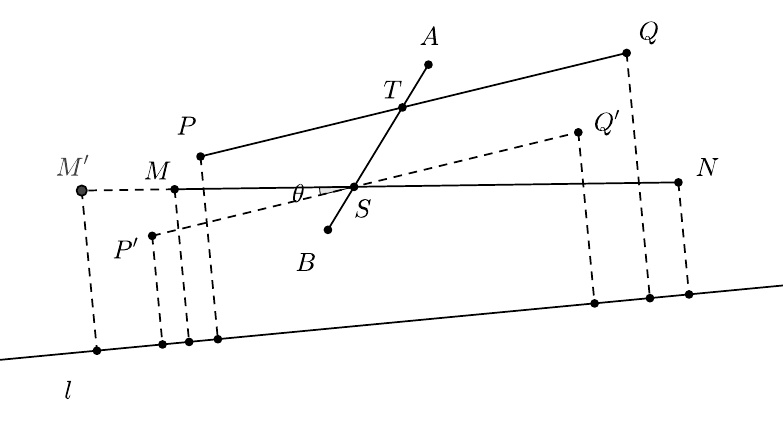}
\caption{Proof of Lemma \ref{lem:order}.}
\label{fig:lemorder}
\end{figure}

Now e can use Lemma \ref{lem:interorder} for $P'Q'$ and $M'N'$. By the assumption $\theta=\frac{t'-1}{2t'}$ where $t'=(t+1)/2$, so Lemma \ref{lem:interorder} implies that,
\begin{equation}
t'\cdot |M'P'| + |P'Q'| + t'\cdot |Q'N'| \leq t'\cdot |M'N'|
\label{sni-e1}
\end{equation}
By the triangle inequality after this movement $MP$ and $NQ$ each decrease by at most $|\overrightarrow{TS}|\leq |AB|$. So,
\begin{equation}
|M'P'| \geq |MP| - |AB|,\; |N'Q'| \geq |NQ| - |AB|
\label{sni-e2}
\end{equation}
Also length of $MN$ will increase by at most $|\overrightarrow{TS}| \leq |AB|$, so
\begin{equation}
|M'N'| \leq |MN| + |AB|
\label{sni-e3}
\end{equation}
The length of $PQ$ does not change. From equations \ref{sni-e1}, \ref{sni-e2}, and \ref{sni-e3},
\begin{align*}
|PQ| = |P'Q'| &\leq t'\cdot(|M'N'| - |M'P'| - |N'Q'|) \\
&\leq \frac{t+1}{2}\cdot(|MN| - |MP| - |NQ| + 3|AB|) \\
&\leq \frac{t+1}{2}\cdot(|MN| - |MP| - |NQ|) + \frac{t+1}{2}\cdot\frac{t-1}{t(t+1)}|PQ|
\end{align*}
So
$$|PQ| \leq t\cdot(|MN|-|MP|-|NQ|)$$
\end{proof}

This implies a total ordering of the set of almost-parallel edges on the given baseline.
\begin{corollary}
Given an arbitrary segment $AB$ one the plane, for a set of sufficiently large almost-parallel (with respect to $AB$) spanner edges that intersect $AB$, the endpoint-ordering is a total ordering.
\end{corollary}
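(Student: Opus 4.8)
The plan is to convert the pairwise ``endpoint-ordered'' guarantee of \autoref{lem:order} into a genuine linear order by identifying each edge with the interval it projects to on the baseline. First I would fix the baseline $l$ used for the $\theta$-parallel family, and for each edge $e$ in the set let $I_e=[\ell_e,r_e]$ be the orthogonal projection of $e$ onto $l$; this is a nondegenerate closed interval because $e$ is within angle $\theta$ of $l$ and hence far from perpendicular to it. Note that the size threshold $\frac{3t(t+1)}{t-1}|AB|$ and the angle bound $\theta<\frac{t-1}{2(t+1)}$ in \autoref{lem:order} are uniform over the family, so that lemma applies simultaneously to \emph{every} pair $e,f$ of edges in the set: neither $I_e$ nor $I_f$ contains the other.

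Next I would invoke the elementary fact that a family of intervals in which no member contains another is sorted the same way from both ends: if $\ell_e<\ell_f$ then necessarily $r_e\le r_f$, since otherwise $I_f\subseteq I_e$; symmetrically $r_e<r_f$ forces $\ell_e\le\ell_f$. Consequently the relation $e\prec f \iff \ell_e<\ell_f$ coincides with $e\prec f \iff r_e<r_f$ on the sub-family of edges with pairwise distinct projection intervals, and since $<$ on $\mathbb{R}$ is a strict total order, $\prec$ is transitive and trichotomous there. This relation $\prec$ is exactly the endpoint-ordering in the sense of \cite{eppstein2020edge}, so on that sub-family it is already a total order.

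The one loose end is ties, i.e. two distinct edges $e\ne f$ with $I_e=I_f$ (possible for two sufficiently long almost-parallel segments that happen to project onto the same interval, or for edges sharing a projection endpoint). For those I would fix any auxiliary tie-break that is consistent across the whole family — for instance, first by the signed distance of the edge's midpoint from $l$, then by a global index on the edges — which leaves every argument that uses only the ordering of non-nested intervals untouched. With this convention $\prec$ becomes a genuine strict total order on the entire set of sufficiently large almost-parallel edges crossing $AB$, which is the claim of the corollary.

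I do not expect a real obstacle here: the corollary is essentially a restatement of \autoref{lem:order} through the ``non-nesting interval families are linearly orderable'' observation, and all the geometric work has already been done in \autoref{lem:interorder} and \autoref{lem:order}. The only things to be careful about are bookkeeping points — checking that the size/angle thresholds are a single choice valid for all pairs at once, and that the tie-breaking rule cannot secretly conflict with the geometric order on the non-degenerate part of the family.
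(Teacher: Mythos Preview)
Your proposal is correct and is exactly the standard way to unpack the corollary; the paper itself gives no proof, treating the statement as immediate from \autoref{lem:order}. Your projection-to-intervals argument and the ``non-nesting intervals are linearly ordered by either endpoint'' observation are precisely what the authors are implicitly invoking, and your care with the uniform size/angle thresholds and the tie-breaking is more detail than the paper supplies but introduces no new idea.
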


Also, by the endpoint-gap property between the edges of the greedy spanner, they proved the following lower bound on the endpoint distance of two greedy spanner edges,
\begin{lemma}
Let $MN$ and $PQ$ be two $\theta$-parallel spanner segments. The matching endpoints of these two segments cannot be closer than a constant fraction of the length of the smaller segment. More specifically,
$$\min(|MP|, |NQ|) \geq \frac{t-1-2\sin(\theta/2)}{2t}\min(|MN|,|PQ|)$$
\label{lem:lowerbd}
\end{lemma}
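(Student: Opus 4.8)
The plan is to read the bound straight off the defining property of the greedy spanner (\autoref{alg:greedy}): an edge is inserted only when its endpoints are not yet joined by a path of length at most $t$ times their distance. By the symmetry of the statement under swapping $M\leftrightarrow N$ together with $P\leftrightarrow Q$, and under interchanging the pairs $MN$ and $PQ$, I may assume $\lVert MN\rVert\le\lVert PQ\rVert$ and $\lVert MP\rVert\le\lVert NQ\rVert$; set $L=\lVert MN\rVert$, $a=\lVert MP\rVert$, $b=\lVert NQ\rVert$, so the goal is $a\ge\frac{t-1-2\sin(\theta/2)}{2t}\,L$. If $a\ge L$ this is immediate, since the coefficient is below $1$; so assume $a<L\le\lVert PQ\rVert$, and note that because $\theta$ is small a short trigonometric remark also excludes the degenerate case $b\ge\lVert PQ\rVert$, so that the pairs $MP$ and $NQ$ are both processed before $PQ$.

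At the moment $PQ$ is inserted, $MN$ is already an edge of the spanner and the pairs $MP$, $NQ$ are already $t$-spanned; concatenating a $t$-approximate $P$-to-$M$ path, the edge $MN$, and a $t$-approximate $N$-to-$Q$ path gives an available $P$-to-$Q$ path, so the greedy rule $\dist_S(P,Q)>t\lVert PQ\rVert$ forces
\[
  t\lVert PQ\rVert < t\,a + L + t\,b .
\]
Combined with $\lVert PQ\rVert\ge L$ this already yields the endpoint-gap inequality $a+b>\frac{t-1}{t}\,L$; the remaining work is to turn this bound on $a+b$ into a bound on $\min(a,b)=a$, and this is exactly where $\theta$-parallelism enters. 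From $\overrightarrow{NQ}=\overrightarrow{MP}+(\overrightarrow{PQ}-\overrightarrow{MN})$ we get $\lvert b-a\rvert\le\lVert\overrightarrow{PQ}-\overrightarrow{MN}\rVert$, and since the angle between $\overrightarrow{MN}$ and $\overrightarrow{PQ}$ is at most $\theta$, the law of cosines bounds the right-hand side by $\bigl(\lVert PQ\rVert-L\bigr)+2\sqrt{L\,\lVert PQ\rVert}\,\sin(\theta/2)$; feeding in the displayed greedy inequality (which, once $a$ and hence $b$ are small, keeps $\lVert PQ\rVert$ from exceeding $L$ by much, so the two segment lengths are comparable) lets one replace $\sqrt{L\,\lVert PQ\rVert}$ by $L$ up to the needed precision. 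Plugging the resulting estimate for $\lvert b-a\rvert$ into $a=\tfrac12\bigl((a+b)-(b-a)\bigr)$ and using $a+b>\frac{t-1}{t}L$ then gives the claimed inequality.

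The main obstacle I anticipate is this last coupling. Bounding $\lVert\overrightarrow{PQ}-\overrightarrow{MN}\rVert$ in one shot is too lossy — it produces a factor $t$ in front of $\sin(\theta/2)$ and misses the stated constant — so one has to carry the greedy length inequality and the angle bound in tandem, certifying $\lVert PQ\rVert=\Theta(L)$ and resolving $\overrightarrow{PQ}-\overrightarrow{MN}$ into its components along and perpendicular to $\overrightarrow{MN}$, which is what yields exactly $\frac{t-1-2\sin(\theta/2)}{2t}$. The degenerate cases ($a\ge L$ and $b\ge\lVert PQ\rVert$) are routine and are dispatched as indicated above.
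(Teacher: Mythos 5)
First, a point of comparison: the paper does not actually prove Lemma~\ref{lem:lowerbd} at all --- it imports it from \cite{eppstein2020edge} (``they proved the following lower bound\dots''), so your write-up is a reconstruction of an external proof rather than an alternative to one given here. Your high-level strategy --- shortcut the pair $(P,Q)$ by a $t$-path to $M$, the edge $MN$, and a $t$-path from $N$ to $Q$, then use the identity $\overrightarrow{NQ}-\overrightarrow{MP}=\overrightarrow{PQ}-\overrightarrow{MN}$ together with near-parallelism to convert the resulting bound on $a+b$ into a bound on $\min(a,b)$ --- is the standard route for such endpoint-gap lemmas and matches the cited argument in spirit.

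The genuine gap is exactly where you flagged it, and your proposed repair does not close it. Carrying your two estimates in tandem without any loss gives $a+b>\lVert PQ\rVert-L/t$ and $b-a\le(\lVert PQ\rVert-L)+2\sqrt{L\,\lVert PQ\rVert}\sin(\theta/2)$, hence $2a>\frac{t-1}{t}L-2\sqrt{L\,\lVert PQ\rVert}\sin(\theta/2)$. The stated bound needs the subtracted term to be $\frac{2\sin(\theta/2)}{t}L$, but even in the most favorable case $\lVert PQ\rVert=L$ --- where your ``$\lVert PQ\rVert=\Theta(L)$'' refinement is vacuous because $\sqrt{L\,\lVert PQ\rVert}=L$ exactly --- you only obtain $a>\frac{t-1-2t\sin(\theta/2)}{2t}L$. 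The stray factor $t$ does not come from the ratio $\lVert PQ\rVert/L$ at all; it comes from the coefficient $t$ multiplying $b$ in the greedy inequality $t\lVert PQ\rVert<ta+L+tb$, which scales the detour $b-a$ by $t$ before it can be traded against $(t-1)L$. Tightening the comparability of the two segment lengths therefore cannot recover the missing factor, and the argument as designed proves the lemma only with $2t\sin(\theta/2)$ in place of $2\sin(\theta/2)$. (That weaker constant is still a positive fraction whenever $\sin(\theta/2)<\frac{t-1}{2t}$, which is all that any downstream use in this paper requires, so the damage is confined to the precise constant.) A secondary, fixable issue: to invoke the greedy rule you need the pairs $(M,P)$ and $(N,Q)$ to have been processed before $(P,Q)$, i.e.\ $b<\lVert PQ\rVert$, and your exclusion of $b\ge\lVert PQ\rVert$ from $a<L$ alone is false (take $a$ close to $L$ and $\lVert PQ\rVert=L$); you must instead assume $a$ is below the target constant for contradiction, after which the trigonometric estimate does force $b<\lVert PQ\rVert$.
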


Lemma \ref{lem:order} together with lemma \ref{lem:lowerbd} can be used to prove a constant bound on the number of such edges.
\begin{lemma}
\label{thm:longer}
For small $\theta$,
the number of sufficiently large $\theta$-parallel edges of a greedy $t$-spanner that intersect a arbitrary segment $AB$ on the plane is bounded by a constant. More specifically, the length of the segments should be larger than $\frac{3t(t+1)}{t-1}|AB|$.
\label{lem:larger}
\end{lemma}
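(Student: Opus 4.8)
The plan is to fix once and for all a small admissible angle $\theta$ satisfying the hypotheses of Lemma \ref{lem:order} (e.g.\ $\theta=\frac{t-1}{4(t+1)}$), fix the baseline $l$ used to define the endpoint-ordering, and bound by a constant depending only on $t$ the number of greedy-spanner edges that are $\theta$-parallel to $l$, have length more than $\frac{3t(t+1)}{t-1}|AB|$, and cross $AB$. Since $O(1/\theta)$ direction classes cover all directions, this immediately yields the stated bound for all sufficiently large edges crossing $AB$.

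First I would set up the combinatorial skeleton. Let $e_1,\dots,e_m$ be the edges in question. By Lemma \ref{lem:order} and the corollary immediately following it, these edges are totally endpoint-ordered; orient each $e_i$ so that its endpoint $M_i$ has the smaller projection $a_i$ on $l$, and call the other endpoint $N_i$, with projection $b_i$. List the edges so that $a_1\le\cdots\le a_m$; non-nesting of the projected intervals $I_i=[a_i,b_i]$ then forces $b_1\le\cdots\le b_m$ as well (we may assume the $a_i$ pairwise distinct, the degenerate case being handled the same way). Let $J=\pi_l(AB)$, an interval with $|J|\le|AB|$, and let $p_i\in J\cap I_i$ be the projection of the crossing point $S_i=e_i\cap AB$. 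Since $|I_i|\ge|e_i|\cos\theta>\frac{3t(t+1)}{t-1}(\cos\theta)\,|AB|>2|J|$, each $I_i$ protrudes from $J$ by more than $|J|/2$ on one side; colour $e_i$ \emph{left} or \emph{right} accordingly, keep the majority colour (say \emph{left}), and re-index the $k\ge m/2$ survivors as $e_1,\dots,e_k$. Every surviving interval then contains the common window $K=[\min J-|J|/2,\ \min J]$, so $I_1,\dots,I_k$ is a strictly increasing chain of intervals each stabbed by $K$, and the left endpoints $M_1,\dots,M_k$ have strictly increasing $l$-coordinates, all to the left of $K$.

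The remaining task is to show $k=O(1)$, and this is the heart of the matter. For consecutive $e_i,e_{i+1}$, Lemma \ref{lem:lowerbd} gives $|M_iM_{i+1}|\ge\frac{t-1-2\sin(\theta/2)}{2t}\min(|e_i|,|e_{i+1}|)$ and the same lower bound on $|N_iN_{i+1}|$. On the other hand, since $e_i$ and $e_{i+1}$ are $\theta$-parallel and both meet the short segment $AB$, one can bound $|M_iM_{i+1}|$ from above in terms of the drop in the left extents $\ell^{\mathrm{left}}_i=|M_iS_i|$, plus error terms of sizes $O(|AB|)$ and $O(\theta)$; comparing the two bounds shows that $\ell^{\mathrm{left}}_i$ is decreasing and $\ell^{\mathrm{right}}_i=|S_iN_i|$ increasing along the chain, each step consuming a definite fraction of $\min(|e_i|,|e_{i+1}|)$. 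Since $\ell^{\mathrm{left}}_i,\ell^{\mathrm{right}}_i\in[0,|e_i|]\subseteq[0,1]$, only a bounded number of steps can occur among edges whose lengths stay within a fixed ratio; to rule out a long chain running through many different length scales, I would re-use the shortcut construction from the proof of Lemma \ref{lem:order} — if the chain were too long, one of its edges would be $t$-approximated by a detour through two of its neighbours in the chain, contradicting the fact that the greedy algorithm never inserts a redundant edge. Adding the left- and right-coloured counts and summing over the $O(1/\theta)$ direction classes proves the lemma.

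I expect the final step — excluding a long endpoint-ordered chain of crossing edges whose lengths grow geometrically — to be the main obstacle, precisely because the $\theta$-tilt error terms are of size $\Theta(\theta)$, while the edges at the short end of such a chain may have length only $\Theta(|AB|)\ll\theta$; obtaining a clean shortcut there (or choosing the baseline so that the tilt error stays dominated by the $\min$-length term) is the delicate point. Everything else — the reduction to the stabbed chain, the counting within a single length scale by a packing argument in a ball whose radius is comparable to that length scale, and the direction-class bookkeeping — is routine.
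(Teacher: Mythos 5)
Your setup is the same as the paper's: reduce to a single $\theta$-parallel direction class, use Lemma \ref{lem:order} to get a totally endpoint-ordered chain of crossing edges, and bring in the endpoint-gap bound of Lemma \ref{lem:lowerbd}. (The paper's own proof of Lemma \ref{lem:larger} is essentially a one-line deferral to Theorem~14 of \cite{eppstein2020edge}, with Lemma \ref{lem:order} substituted for the proposition used there.) Your reduction to a chain of projected intervals all containing the common window $K$, and the factor-of-two loss from the left/right colouring, are fine.

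The genuine gap is in the one step that makes this lemma non-trivial, and you flag it yourself. Your proposed mechanism --- the left extents $\ell^{\mathrm{left}}_i$ decrease and the right extents increase, each consecutive step consuming a definite fraction of $\min(|e_i|,|e_{i+1}|)$ --- cannot by itself give a constant. The total available decrease of $\ell^{\mathrm{left}}$ is about $\max_i|e_i|$, while each step is only guaranteed to consume a constant fraction of the \emph{local minimum} length; if the lengths along the chain grow geometrically, the step sizes telescope to $O(\max_i|e_i|)$ no matter how many steps there are, so the budget argument tolerates $\Theta(\log(\max_i|e_i|/\min_i|e_i|))$ edges, and here $\min_i|e_i|\geq\frac{3t(t+1)}{t-1}|AB|$ with $|AB|$ arbitrarily small, so the number of scales is unbounded. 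The bounded-ratio regime that your potential argument does handle is exactly what Lemma \ref{lem:same-bound} already covers by packing; the entire content of Lemma \ref{lem:larger} is the cross-scale count. For that you offer only the unexecuted idea of shortcutting an edge ``through two of its neighbours in the chain,'' together with an accurate diagnosis of why it is delicate (the $\theta$-tilt and $|AB|$ error terms can dominate $c\min(|e_i|,|e_{i+1}|)$ once consecutive lengths differ by more than a $1/\theta$ factor). Since the charging scheme of Theorem~14 of \cite{eppstein2020edge} is precisely what resolves this multi-scale counting and your proposal stops short of reconstructing it, the proof is not complete as written.
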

This proof is similar to theorem 14 of \cite{eppstein2020edge}, except we use lemma \ref{lem:larger} instead of proposition 13.

For edges whose lengths are between $|AB|$ and $\frac{3t(t+1)}{t-1}|AB|$ on the other hand, we only need the endpoint-gap property of the greedy spanner. Similar to \cite{eppstein2020edge} we can show that,
\begin{lemma}
The number of spanner segments $PQ$ that cross an arbitrary segment $AB$ of the plane and that have length between $\alpha\cdot |AB|$ and $\beta\cdot |AB|$ is bounded by a constant.
\label{lem:same-bound}
\end{lemma}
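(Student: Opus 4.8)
The plan is to handle the edges direction-class by direction-class, the same way Eppstein and Khodabandeh~\cite{eppstein2020edge} handle crossings of bounded relative length. Fix a constant $\theta$ small enough that $2\sin(\theta/2) < t-1$, so that the constant $c := \frac{t-1-2\sin(\theta/2)}{2t}$ appearing in Lemma~\ref{lem:lowerbd} is strictly positive, and partition the set of all planar directions into $O(1/\theta)$ classes, each a set of pairwise $\theta$-parallel directions. Every spanner segment $PQ$ crossing $AB$ with $\alpha|AB|\le |PQ|\le \beta|AB|$ belongs to one of these classes, so it suffices to show that a single class contributes only $O(1)$ such segments; summing over the $O(1/\theta)=O(1)$ classes then gives the bound.

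So fix one class $\mathcal{C}$ of $\theta$-parallel spanner segments, each crossing $AB$ and of length in $[\alpha|AB|,\beta|AB|]$. First I would confine all their endpoints to a bounded region: if $PQ\in\mathcal{C}$ then $PQ$ meets $AB$, so every point of $PQ$ — in particular $P$ and $Q$ — is within distance $|PQ|\le\beta|AB|$ of some point of $AB$, hence within distance $(\beta+\tfrac12)|AB|$ of the midpoint of $AB$. Next, choose a baseline line $\ell$ in the common direction of $\mathcal{C}$, and for each segment $e\in\mathcal{C}$ let $p_e$ be its endpoint with the smaller projection on $\ell$ and $q_e$ the other. For two segments $e,f\in\mathcal{C}$ the pairs $(p_e,p_f)$ and $(q_e,q_f)$ are precisely the matching endpoint pairs used in~\cite{eppstein2020edge}, so Lemma~\ref{lem:lowerbd} gives
$$\min\bigl(|p_ep_f|,\,|q_eq_f|\bigr)\ \ge\ c\cdot\min(|e|,|f|)\ \ge\ c\,\alpha\,|AB|.$$
In particular $|p_ep_f|\ge c\alpha|AB|$ for all distinct $e,f\in\mathcal{C}$, so $\{\,p_e : e\in\mathcal{C}\,\}$ is a set of points that is pairwise $(c\alpha|AB|)$-separated and lies in a ball of radius $(\beta+\tfrac12)|AB|$; by the packing property its size, and therefore $|\mathcal{C}|$, is at most $\bigl(4(\beta+\tfrac12)/(c\alpha)\bigr)^2=O(1)$.

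Combining the two parts, the number of qualifying segments is $O(1/\theta)\cdot O(1)=O(1)$. The one point that needs care — and the only place where this argument could go wrong — is the claim that splitting each segment into a "lower" and an "upper" endpoint by projection onto $\ell$ reproduces exactly the endpoint matching for which Lemma~\ref{lem:lowerbd} is proved; this is the matching induced by the baseline endpoint order of~\cite{eppstein2020edge}, and it is well defined once $\theta$ is chosen small enough that none of the $\theta$-parallel segments is near-orthogonal to $\ell$. Note that, unlike the large-edge case, no lower bound on $|e|$ relative to $|AB|$ is needed here, so Lemma~\ref{lem:order} plays no role: the endpoint-gap property of the greedy spanner alone suffices.
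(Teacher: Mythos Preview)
Your argument is correct, but it takes a different route from the paper's own proof. The paper does \emph{not} split into direction classes here: instead it overlays a square grid of side $\frac{t-1}{2\sqrt{2}t}\,\alpha|AB|$ on the bounded rectangle that must contain every endpoint of a crossing segment, and argues by pigeonhole that two crossing segments cannot have their endpoint pairs land in the same pair of cells --- otherwise both matched endpoint distances would fall below $\frac{t-1}{2t}\min(|MN|,|PQ|)$, contradicting the basic endpoint-gap property of greedy spanners (lemma~15 of~\cite{eppstein2020edge}, which does not require the segments to be $\theta$-parallel). This yields a bound of (number of cells)$^2$ directly, with no angular bucketing.

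Your approach instead buckets by direction and, within each $\theta$-class, uses the $\theta$-parallel version of the gap (Lemma~\ref{lem:lowerbd}) to conclude that the ``lower'' endpoints alone are $c\alpha|AB|$-separated, then finishes with a single packing bound. This is a perfectly valid alternative; it trades the paper's appeal to an external lemma for an extra $O(1/\theta)$ factor coming from the direction classes. The one step you flag --- that ordering endpoints by projection onto the baseline reproduces the matching of Lemma~\ref{lem:lowerbd} --- is indeed the natural matching of~\cite{eppstein2020edge} once $\theta$ is small, so there is no gap there. Either argument proves the lemma; the paper's is a bit more direct, while yours stays entirely within the lemmas restated in this appendix.
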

\begin{proof}
We partition the area around $AB$ with squares of edge length $\frac{t-1}{2\sqrt{2}t}\cdot\alpha |AB|$ with edges parallel or perpendicular to $AB$. The area that an endpoint of a crossing segment can lie in is a rectangle of size $(2\beta +1)|AB|$ by $2\beta |AB|$. The total number of cells in this area would be
$$\frac{2\beta(2\beta+1)}{\alpha^2}\cdot\frac{8t^2}{(t-1)^2}$$
And for each crossing segment the pair of cells that contain the two endpoints of the segment is unique. Otherwise two segments, e.g. $MN$ and $PQ$, will have both endpoints at the same pair, which means
\begin{align*}
    \max(|MP|, |NQ|) &< (\sqrt{2})(\frac{t-1}{2\sqrt{2}t}\cdot\alpha |AB|) = \frac{t-1}{2t}\cdot\alpha |AB| \\
    &\leq \frac{t-1}{2t}\min(|MN|, |PQ|)
\end{align*}
which is impossible due to lemma 15 of \cite{eppstein2020edge}. So the total number of such edges would be at most,
$$\left[\frac{2\beta(2\beta+1)}{\alpha^2}\cdot\frac{8t^2}{(t-1)^2}\right]^2$$
\end{proof}

Putting together lemma \ref{lem:larger} and lemma \ref{lem:same-bound} we can deduce lemma \ref{lem:int-ES}.
\end{proof}

\end{document}